\numberwithin{equation}{section}
\declaretheoremstyle[bodyfont=\it,qed=\qedsymbol]{noproofstyle}
\declaretheorem[name=Observation,numbered=no]{observation*}
\declaretheorem[numberlike=equation]{theorem}
\declaretheorem[name=Theorem,numbered=no]{theorem*}
\declaretheorem[numberlike=equation]{lemma}
\declaretheorem[name=Lemma,numbered=no]{lemma*}
\declaretheorem[name=Corollary,numbered=no]{corollary*}
\declaretheorem[numberlike=equation]{proposition}
\declaretheorem[name=Proposition,numbered=no]{proposition*}
\declaretheorem[numberlike=equation,name=Claim]{claim}
\declaretheorem[name=Claim,numbered=no]{claim*}
\declaretheorem[name=Conjecture,numbered=no]{conjecture*}
\declaretheorem[name=Question,numbered=no]{question*}
\declaretheoremstyle[bodyfont=\it,qed=$\lozenge$]{defstyle} 
\declaretheorem[numberlike=equation,style=defstyle]{definition}
\declaretheorem[unnumbered,name=Definition,style=defstyle]{definition*}
\declaretheorem[unnumbered,name=Example,style=defstyle]{example*}
\declaretheorem[unnumbered,name=Notation=defstyle]{notation*}
\declaretheorem[unnumbered,name=Construction,style=defstyle]{construction*}
\declaretheorem[numberlike=equation,style=defstyle]{remark}
\declaretheorem[unnumbered,name=Remark,style=defstyle]{remark*}
\newcommand{\LDT}{\operatorname{LDT}}
\newcommand{\Res}{\operatorname{Res}} 
\newcommand{\shortECCC}[2]{\texttt{\href{http://eccc.hpi-web.de/report/\ifnumcomp{#1}{>}{93}{19}{20}#1/#2/}{eccc:TR#1-#2}}}
\newcommand{\parseECCC}[1]{
\StrSubstitute{#1}{TR}{}[\tmpstring]%
\IfSubStr{\tmpstring}{/}{ 
\StrBefore{\tmpstring}{/}[\ecccyear]%
\StrBehind{\tmpstring}{/}[\ecccreport]%
}{
\StrBefore{\tmpstring}{-}[\ecccyear]%
\StrBehind{\tmpstring}{-}[\ecccreport]%
}%
\shortECCC{\ecccyear}{\ecccreport}}
\newcommand*\samethanks[1][\value{footnote}]{\footnotemark[#1]}
\newif\ifdraft
\newcommand{\phnote}[1]{\todo[color=red!100!green!33,size=\footnotesize]{ph: #1}}
\newcommand{\sriprahladhuvacha}[1]{\todo[color=red!100!green!33,inline,size=\small]{ph: #1}}
\newcommand{\RPnote}[1]{\textcolor{BrickRed}{\guillemotleft RP: #1 \guillemotright}}
\newcommand{\MKnote}[1]{\textcolor{Orange}{\guillemotleft MK: #1 \guillemotright}}
\newcommand{\MSnote}[1]{\textcolor{Blue}{\guillemotleft MS: #1 \guillemotright}}
\newcommand{\gitinfonotecolour}{Gray}
\newcommand{\easteregg}{}
\newcommand{\phnote}[1]{}
\newcommand{\sriprahladhuvacha}[1]{}
\newcommand{\RPnote}[1]{}
\newcommand{\MKnote}[1]{}
\newcommand{\MSnote}[1]{}
\newcommand{\gitinfonotecolour}{white}
\newcommand{\easteregg}{(What's the point of this line?)}
\newcommand{\vzero}{\mathbf{0}}
\newcommand{\ignore}[1]{}
\newcommand{\gitinfonote}{git info:~\gitAbbrevHash\;,\;(\gitAuthorIsoDate)\; \;\gitVtag}
\newcommand{\calL}{\mathscr{L}}
\newcommand{\calE}{\mathscr{E}}
\newcommand{\calP}{\mathscr{P}}
\newcommand{\calF}{\mathscr{F}}
\newcommand{\cF}{\calF}
\let\agree\agr
\DeclareMathOperator{\nonunique}{nonunique}
\renewcommand{\epsilon}{\varepsilon}
\newcommand{\eps}{\epsilon}
\newcommand{\corr}{\mathrm{corr}}
\newcommand{\BAD}{\text{BAD}}
\DeclareMathOperator{\Var}{Var}
\newcommand{\ehref}[1]{\href{mailto:#1}{#1}}
\newcommand{\hpartial}{{\mathchar'26\mkern-10mu \partial}}
\newcommand{\disc}{\mathsf{Disc}}
\crefname{claim}{Claim}{Claims}
\title{An Improved Line-Point Low-Degree Test}
\author{{Prahladh Harsha\thanks{Tata Institute of Fundamental Research, Mumbai, India. \ehref{prahladh@tifr.res.in, mrinal@tifr.res.in, ramprasad@tifr.res.in}.}}
\and
 {Mrinal Kumar\samethanks}
 \and
 {Ramprasad Saptharishi\samethanks}
 \and
 {Madhu Sudan\thanks{School of Engineering and Applied Sciences,
    Harvard University, Cambridge, MA, USA. Supported in part by a Simons Investigator Award and NSF Award CCF 2152413. \ehref{madhu@cs.harvard.edu}.}}}
\date{}
\begin{document}

\maketitle

\begin{abstract}
    We prove that the most natural low-degree test for polynomials over finite fields is ``robust'' in the high-error regime for linear-sized fields. Specifically we consider the ``local'' agreement of a function $f\colon \F_q^m \to \F_q$ from the space of degree-$d$ polynomials, i.e., the expected agreement of the function from univariate degree-$d$ polynomials over a randomly chosen line in $\F_q^m$, and prove that if this local agreement is $\epsilon \geq \Omega((\nicefrac{d}{q})^\tau))$ for some fixed $\tau > 0$, then there is a global degree-$d$ polynomial $Q\colon\F_q^m \to \F_q$ with agreement nearly $\epsilon$ with $f$. This settles a long-standing open question in the area of low-degree testing, yielding an $O(d)$-query robust test in the ``high-error'' regime (i.e., when $\epsilon < \nicefrac{1}{2}$). The previous results in this space either required $\epsilon > \nicefrac{1}{2}$ (Polishchuk \& Spielman, STOC 1994), or $q = \Omega(d^4)$ (Arora \& Sudan, Combinatorica 2003), or needed to measure local distance on $2$-dimensional ``planes'' rather than one-dimensional lines leading to $\Omega(d^2)$-query complexity (Raz \& Safra, STOC 1997). 

    Our analysis follows the spirit of most previous analyses in first analyzing the low-variable case ($m = O(1)$) and then ``bootstrapping'' to general multivariate settings. Our main technical novelty is a new analysis in the bivariate setting that exploits a  previously known connection between multivariate factorization and finding (or testing) low-degree polynomials, in a non ``black-box'' manner. This connection was used roughly in a black-box manner in the work of Arora \& Sudan --- and we show that opening up this black box and making some delicate choices in the analysis leads to our essentially optimal analysis. A second contribution is a bootstrapping analysis which manages to lift analyses for $m=2$ directly to analyses for general $m$, where previous works needed to work with $m = 3$ or $m = 4$ --- arguably this bootstrapping is significantly simpler than those in prior works. 
\end{abstract}

\newpage 

\tableofcontents

\newpage

\section{Introduction}

In this paper we consider the classical ``line-point'' ``low-degree test'' for multivariate polynomials over finite fields and give a ``near-optimal'' analysis in the ``high-error'' regime. We expand on these terms below.

The basic question in ``low-degree testing'' is to estimate the distance of a function $f\colon \F_q^m \to \F_q$ given as an oracle from the space $m$-variate polynomials over $\F_q$ of degree at most $d$. Broadly, for a family of function $\cF \subseteq \{f\colon D \to R\}$, a $t$-query $\alpha(\cdot)$-robust test for $\cF$ is a distribution supported on sets $S \subseteq D$ with $|S| \leq t$ such that for every function $f\colon D\to R$ we have $\delta(f,\cF) \leq \alpha(\E_S[\delta(f|_S,\cF_S)])$. (Here we use $\delta(f,g)$ to denote the normalized Hamming distance between functions, $\delta(f,\cF) := \min_{P \in \cF}\{\delta(f,P)\}$, $f|_S$ to denote the restriction of $f$ to the domain $S$ and $\cF|_S$ to denote $\{P|_S | P \in \cF\}$.) The quantity $\E_S[\delta(f|_S,\cF_S)]$ is thus a local measure of the distance of $f$ from $\cF$ and $\alpha(\cdot)$-robustness relates this local distance to the global distance of $f$ from $\cF$. Low-degree testing is the specialization of the study of robust testing in the case where $\cF = \F_q^d[X_1,\ldots,X_m]$, namely the space of $m$-variate polynomials of degree at most $d$ over $\F_q$, viewed as functions from $\F_q^m$ to $\F_q$. 
The {\em error (tolerance)} of a test is roughly this highest value of $\delta$ such that $\alpha(\delta)$ is (noticeably) bounded away from $1$. (Specifically in the case of low-degree testing we would want $\alpha(\delta) < 1 - \sqrt{\nicefrac{d}{q}}$.) For most natural tests, it is straightforward to actually show $\alpha(\delta) \approx \delta$ as long as it is noticeably bounded away from $1$. So the critical parameter desribing $\alpha$ is just the error-tolerance and this is what we will use to describe the history (and importance) of low-degree testing.

The low-degree testing problem is a classical problem with enormous impact in the early as well as state-of-the-art constructions of probabilistically checkable proofs (PCPs). It was introduced by Rubinfeld and Sudan~\cite{RubinfeldS1996} who, in our language, gave a $q$-query test that had an error-tolerance of $O(\nicefrac{1}{d})$ provided $q = \Omega(d^2)$. The test they introduced is now called the ``lines-point'' test and is given by the uniform distribution on lines in $\F_q^m$. Arora, Lund, Motwani, Sudan and Szegedy~\cite{AroraLMSS1998}, building on the work of Arora and Safra~\cite{AroraS1998}, improved the error-tolerance of the lines-point test to $\Omega(1)$ (provided $q = \Omega(d^3)$) and this was a crucial ingredient in getting PCPs with $O(1)$-query complexity. 
Subsequent improvements to PCP parameters were also closely related to improvements to (analyses of) low-degree tests. For instance the first constructions of nearly-linear sized PCPs by Polishchuk and Spielman~\cite{PolishchukS1994} rely on getting  error $= \Omega(1)$ for $q = O(d)$ and $m=O(1)$ in the lines-point test. This was extended to general $m$ by Friedl and Sudan~\cite{FriedlS1995} which till this work remains the optimal analysis for the lines-point test when $q = O(d)$. 

Higher error low-degree tests and analyses, with error tending to $1$, were obtained by Raz and Safra~\cite{RazS1997} and Arora and Sudan~\cite{AroraS2003}. The former introduced the ``planes-point test'', where the underlying distribution is uniform on ``planes'' (i.e., 2-dimensional affine subspaces of $\F_q^m$) and showed that it had error $1 - (\nicefrac{d}{q})^{\tau}$ for constant $\tau$. Thus, this obtains essentially optimal error, but at the cost of $q^2$-queries. The latter (\cite{AroraS2003}) improved the analysis of the line-point test, but only in the regime where $q$ was super-quartic in $d$ and showed roughly that it had error $1 - (\nicefrac{d^4}{q})^{\Omega(1)}$. So this reduced the query complexity to $O(q)$, but only when $q = \Omega(d^4)$. Thus, the three results above are essentially incomparable and represent the the three state-of-the-art low-degree tests today. (We remark there is also a vast body of related questions starting with the work of Alon, Kaufman, Krivelevich, Litsyn and Ron~\cite{AlonKKLR2005} that might be termed ``moderate degree testing'' where one considers the setting $q < d \ll m$. These results and their motivations are quite distinct from those in this work and we do not cover those results here. A direction of study more related to the setting of this paper (i.e., when $d \ll q$) involves derandomizing the low-degree test~\cite{BenSassonSVW2003,MoshkovitzR2008}. This direction turns out to be crucial in getting PCPs of small (near-linear) size \cite{BenSassonGHSV2006,BenSassonGHSV2005,MoshkovitzR2010-linearpcp,MoshkovitzR2010-2qpcp}. We do not pursue this direction in this paper, though it can be a subject of further study.)

The main result of this paper is a single analysis of low-degree testing that qualitatively subsumes all previous low-degree tests. We analyze the lines-point low-degree test and give an analysis showing it has error tolerance $1 - (\nicefrac{d}{q})^{\Omega(1)}$ when $q = O(d)$ (See \cref{thm:multivariate-low-degree-test-intro}). 

\subsection{Technical Contributions} 

Most analyses of low-degree tests follow the following paradigm: One first analyzes the low-degree test in the setting of $m = O(1)$, i.e., with a constant number of variables. And then a second step of analysis ``bootstraps'' the result from $O(1)$-variables to general $m$ variables. Our improvement follows the same paradigm and contributes to both steps. We explain our contribution to the two steps below by contrasting with the previous works. 

Previous techniques in the $O(1)$-variable setting come in two distinct flavors: The Raz-Safra analysis \cite{RazS1997} is very coding theoretic. For instance when testing 3-variate functions by picking planes uniformly, the analysis relies on the fact that two typical planes intersect in a line, and on this line the nearest polynomial is a codeword of a code with $1 - o(1)$ distance (if $d = o(q)$). This very high distance of the underlying code is critical to their analysis. The Raz-Safra analysis is thus very clean, but it simply can not work with the lines-point test where typical lines intersect in at most one point.

The $O(1)$-variable analyses for the line-point tests in \cite{PolishchukS1994} and in \cite{AroraS2003} are both very algebraic. Both rely on ideas used to decode Reed-Solomon codes, but need some heavy machinery to work with these. \cite{PolishchukS1994} in particular use properties of the derivatives of resultants to effect their analysis; and \cite{AroraS2003} uses effective Hilbert irreducibility to effect their analysis. While such use of heavy machinery is inevitable given the approach, the weakness in previous results comes from the black box use of the tools that they use. For instance \cite{PolishchukS1994} effectively only uses the fact that when $m=2$, the low-degree test effectively gives two directions in which the function looks like a low-degree polynomial in that direction, but is unable to use the fact that $\Omega(1)$-fraction of directions actually have this property. \cite{AroraS2003} also suffers from a black box use of Hilbert irreducibility. Our key contribution (and we elaborate more on this in \cref{sec:proof-overview}) is to open up this black box and adapt it to our setting. (Our proofs as a result are self-contained, and arguably simpler than those in \cite{AroraS2003}.)

Turning to the ``bootstrapping'' --- here there are roughly three previous works to compare against. The Raz-Safra bootstrapping \cite{RazS1997} turns out to be the weakest and shows that an error tolerance of $1 - \epsilon$ in the $O(1)$-variable setting can be converted to an error tolerance of $1 - O(m\eps)$ in the  $m$-variate setting. While this may be adequate in some settings, this is certainly not the right answer. In the low-error setting, Friedl and Sudan~\cite{FriedlS1995}, essentially building on Rubinfeld and Sudan~\cite{RubinfeldS1996}, show that an error upper bound of $\epsilon$ in the $2$-variable setting can be converted to an error bound of $\epsilon/C$ (for some universal constant $C$) in the $m$-variate setting; but their proof is very highly tailored to the unique-decoding setting, i.e., the proofs use  the fact that in such settings there is at most one polynomial at $\epsilon$ distance from any given function. Arora and Sudan~\cite{AroraS2003} extend this analysis to the list-decoding setting but their proof is complex and finally is only able to reduce to the $3$-variable case. Our analysis shows that an error-tolerance of $1 - \epsilon$ in the $2$-variable setting implies a $1 - \epsilon^{\Omega(1)}$ error tolerance in the general $m$. 
While our analysis is in the same spirit as the previous works, is substantially cleaner and manages to cleanly reduce to the 2-variable case. We remark that our analysis is inspired by the clean local-list-decoder for multivariate polynomial codes due to Sudan, Trevisan and Vadhan~\cite{SudanTV2001} who in turn clean up a similar result from \cite{AroraS2003}, an idea that does not seem to have been explored much in the interim period.

We remark that both the \cite{FriedlS1995} analysis and the \cite{AroraS2003} analysis can be interpreted as exploiting some expansion properties of underlying high-dimensional expanders (HDXs). The \cite{FriedlS1995} HDX consists of three layers, the points in $\F_q^m$, the lines in $\F_q^m$ and some complicated 2-dimensional surfaces in $\F_q^m$. The \cite{AroraS2003} HDX is simply the Grasmannian (points, lines, planes and cubes), but now it has four layers. The cleanliness of our analysis is highlighted by the fact that we also work with the (affine) Grassmannian in $\F_q^m$, but now again with three layers (points, lines and planes). 

Finally we remark that we do not optimize the dependence between $q$ and the $\epsilon$ in the error-tolerance we obtain here. Recent works due to Bhangale, Dinur and Navon \cite{BhangaleDN2017} and Minzer and Zheng \cite{MinzerZ2023} explore this connection and obtain near-optimal dependence between $\eps$ and $q$, albeit at the cost of an even larger query test -- a ``cubes-point'' test. However, these results do not optimize the dependence on $d$. These results are proved by using deeper expansion properties of the 4-layered Grassmannian (points, lines, planes and cubes) in $\F_q^m$. 
If one obtains an optimal $\eps$-vs-$q$ tradeoff for the lines-point test, with $q = O(d)$ as in our proof, this might yield hardness of labelcover of the following type:
NP-hard to distinguish between labelcover instances with alphabet size $O(q^2)$ with perfect completeness from those that are at most $O(\nicefrac{(\log q)}{q})$-satisfible. The current best result, in this context, is due Siu on Chan \cite{Chan2016}, albeit with imperfect completeness. We believe these questions merit further study and view our work as a step towards understanding them.

\subsection{Our results}\label{sec:results}

To state our results which work in the high-error regime of the lines-point test, it will be best to first recall similar results in the low-error regime. We begin with the following lemma due to Polishchuk and Spielman \cite{PolishchukS1994}.

\begin{lemma}[Polishchuk-Spielman Lemma {\cite[Lemma~8]{PolishchukS1994}}, see also {\cite[Lemma 4.3]{BenSassonCIKS2020}}]\label{lem:pslemma}
  Let $\F$ be any field and let $A(x,y,z) \in \F[x,y, z]$ be a non-zero trivariate polynomial with $z$-degree at most 1 and $(1,0,d)$-degree and $(0,1,d)$-degree at most $D$. Furthermore, suppose there exist two sets $U, V \subseteq \F$ such that (1) for every $u \in U$, there exists a degree-$d$ univariate polynomial $C_u(y)$ such that $A(u,y,C_u(y))\equiv 0$ and similarly (2) for every $v \in V$, there exists a a degree-$d$ univariate polynomial $R_v(x)$ such that $A(x,v,R_v(x))\equiv 0$. 

  If the sets $U, V$ are of size at least $2D$ each then there exists a polynomial $Q(x,y)$ of individual degree at most $d$ in each variable such that 
  \[ 
    A(x,y,Q(x,y)) \equiv 0.
  \]  
\end{lemma}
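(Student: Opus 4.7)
Write $A(x,y,z) = A_0(x,y) + z A_1(x,y)$; the hypotheses give $\deg_x A_0, \deg_y A_0 \leq D$ and $\deg_x A_1, \deg_y A_1 \leq D - d$. The task reduces to showing $A_1 \mid A_0$ in $\F[x,y]$ with quotient $Q := -A_0/A_1$ of individual degree $\leq d$, since then $A(x, y, Q) = A_0 + Q A_1 \equiv 0$. First note that $A_1 \not\equiv 0$: otherwise $A(u, y, C_u(y)) = A_0(u, y) \equiv 0$ would hold for $|U| \geq 2D > \deg_x A_0$ values of $u$, forcing $A_0 \equiv 0$ and contradicting $A \not\equiv 0$.

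The plan is to construct the candidate $\hat Q$ by Lagrange interpolation from the $V$-data. Since at most $D - d$ values $v \in V$ satisfy $A_1(x, v) \equiv 0$ (these $v$ are the common roots of all $x$-coefficients of $A_1$, hence of a polynomial of $y$-degree $\leq D-d$) and $|V| \geq 2D$, we may pick $d + 1$ values $v_1, \ldots, v_{d+1} \in V$ with $A_1(x, v_j) \not\equiv 0$ and set
\[
  \hat Q(x, y) \;:=\; \sum_{j=1}^{d+1} R_{v_j}(x) \cdot \prod_{k \neq j} \frac{y - v_k}{v_j - v_k},
\]
a polynomial of individual degree at most $d$ in each variable. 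The auxiliary polynomial $B(x, y) := A(x, y, \hat Q(x, y)) = A_0 + \hat Q A_1$ satisfies $\deg_x B, \deg_y B \leq D$, and by construction $B(x, v_j) = A(x, v_j, R_{v_j}(x)) = 0$, so $\prod_j (y - v_j) \mid B$ and we may write $B = \prod_j (y - v_j) \cdot E$ with $\deg_x E \leq D$ and $\deg_y E \leq D - d - 1$.

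It remains to show $E \equiv 0$. For each $u \in U$ with $A_1(u, \cdot) \not\equiv 0$, the $z$-linearity of $A$ gives $A(u, y, z) = A_1(u, y)(z - C_u(y))$, hence $B(u, y) = A_1(u, y)(\hat Q(u, y) - C_u(y))$. Whenever $A_1(u, v_j) \neq 0$, uniqueness of the $z$-root of $A(u, v_j, \cdot)$ forces $R_{v_j}(u) = C_u(v_j)$, so $\hat Q(u, v_j) - C_u(v_j) = 0$; the remaining $v_j$'s contribute factors $(y - v_j)$ that are absorbed into $A_1(u, y)$. Letting $S_u := \{j : A_1(u, v_j) = 0\}$ with $|S_u| \leq D - d$, this yields
\[
  E(u, y) \;=\; \bigl(A_1(u, y) / \textstyle\prod_{j \in S_u}(y - v_j)\bigr) \cdot \bigl((\hat Q(u, y) - C_u(y)) / \textstyle\prod_{j \notin S_u}(y - v_j)\bigr),
\]
where the two factors have $y$-degrees at most $(D - d) - |S_u|$ and $|S_u| - 1$ respectively.

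The main obstacle is to leverage this structural factorization together with the degree bound $\deg_x E \leq D$ and the size hypothesis $|U| \geq 2D$ into the conclusion $E \equiv 0$. The naive route---restricting to $u$ for which $|S_u| = 0$, at which $E(u, \cdot)$ vanishes outright---only yields on the order of $2D - (d+1)(D - d)$ such $u$, which fails to exceed $\deg_x E \leq D$ in the regime $D \gg d$. The resolution is a more careful counting argument: one tracks the bad pairs $(u, v_j)$ with $A_1(u, v_j) = 0$ against the individual-degree bounds on $A_1$, and exploits the product structure above (possibly together with a dual symmetric argument from the $U$-data, playing the role that $V$ plays above) to assemble enough vanishing constraints on $E$ to force $E \equiv 0$. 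Once $E \equiv 0$, so $B \equiv 0$, and $\hat Q$ is the desired $Q$.
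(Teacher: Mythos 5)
The paper does not prove this lemma; it is imported from Polishchuk--Spielman (Lemma~8) and Ben-Sasson et al.\ (Lemma~4.3), so there is no internal proof to compare against. Judged on its own merits, your attempt sets up the right objects but does not complete the argument.

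Your preliminary steps are correct. The decomposition $A = A_0 + zA_1$, the deduction that $A_1 \not\equiv 0$, the choice of $d+1$ columns $v_j$ avoiding the $\leq D - d$ values where $A_1(x, v_j) \equiv 0$, the Lagrange construction of $\hat Q$, the divisibility $\prod_j (y - v_j) \mid B$, the factorization $A(u,y,z) = A_1(u,y)(z - C_u(y))$ for $u$ with $A_1(u,\cdot) \not\equiv 0$, and the consequent factorization of $E(u,\cdot)$ through the sets $S_u$ --- all of this is sound, and the degree bookkeeping ($\deg_x E \leq D$, $\deg_y E \leq D - d - 1$) checks out.

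The problem is that the crucial final step --- forcing $E \equiv 0$ --- is left undone, and you yourself flag this. Your diagnosis that the naive route fails is correct: the set of $u$ with $|S_u| = 0$ has size at least $|U| - (d+1)(D - d)$, and for $|U| = 2D$ this drops below the threshold $D + 1$ as soon as $D$ is even modestly larger than $d$; indeed one can check the bound only suffices when $D \leq d$, which is exactly the degenerate case in which $A_1$ is already constant and the lemma is trivial. But having correctly identified that the easy argument fails, you resolve the gap only with ``a more careful counting argument \dots\ and exploits the product structure \dots\ (possibly together with a dual symmetric argument)'' without specifying what the counting is, what constraint it yields, or why it suffices. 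That sentence is a wish, not a proof. The actual Polishchuk--Spielman argument gets past this point by a genuinely additional idea (a resultant/GCD argument on $A_0, A_1$ that uses the row and column hypotheses \emph{jointly}, rather than consuming $V$ to build $\hat Q$ and then trying to extract the whole vanishing conclusion from $U$ alone); it does not fall out of tightening the per-row accounting you set up. As written, the proposal establishes everything except the heart of the lemma.
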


\noindent
Friedl and Sudan \cite{FriedlS1995} used the above lemma to prove the following low-error lines-point theorem. 

\begin{restatable}[Multivariate low-error LDT {\cite{FriedlS1995}}]{theorem}{multivariatelowerrorLDT}
\label{thm:main-lowerror-ldt} 
  There is a constant $C$ large enough such that for every finite field $\F_q$, $m > 0$, degree $d$ satisfying $q > C \cdot d$, the following holds:

  \begin{quote}
    Suppose the points table $f \colon \F_q^m \to \F_q$ and the degree-$d$ lines oracle $P_\ell$ satisfy
    \[ \Pr_{\va,\ell}[f(\va)\neq P_\ell(\va)] \leq \delta,\]
    for some $0 < \delta < 0.01$, then there exists a $m$-variate degree-$d$ polynomial $Q$ such that $\delta(f,Q)\leq 4\delta$.
  \end{quote}
\end{restatable}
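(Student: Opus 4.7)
The plan is to prove the theorem in two stages: first establish the bivariate ($m=2$) case via the Polishchuk--Spielman Lemma (\cref{lem:pslemma}), then bootstrap from $m=2$ to general $m$.

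For the bivariate case, I would exploit the line oracle on axis-parallel lines directly. Set $C_u(y) := P_{\ell_u}(y)$ for the vertical line $\ell_u$ at $x=u$, and analogously $R_v(x) := P_{\ell_v}(x)$ for horizontal lines. Call $u$ \emph{good} if $C_u$ agrees with $f(u,\cdot)$ on a $(1-O(\sqrt{\delta}))$-fraction of $\F_q$, and let $U, V$ denote the sets of good columns and rows; by Markov, $|U|, |V| \geq (1-O(\sqrt{\delta}))q$. Next, I would build a nonzero trivariate polynomial $A(x,y,z)$ of $z$-degree at most $1$ and $(1,0,d)$- and $(0,1,d)$-weighted degree at most a carefully chosen $D$ (with $2D \leq \min(|U|, |V|)$, so \cref{lem:pslemma} applies), such that $A(u,y,C_u(y)) \equiv 0$ for all $u \in U$ and $A(x,v,R_v(x)) \equiv 0$ for all $v \in V$. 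The existence of such an $A$ is ensured by a linear-algebraic dimension count comparing the free monomial coefficients of $A$ against the imposed vanishing constraints. \cref{lem:pslemma} then yields a polynomial $Q(x,y)$ of individual degree $\leq d$ with $A(x,y,Q(x,y)) \equiv 0$. Since $A$ has $z$-degree $1$, we can factor $A(x,y,z) = W(x,y) \cdot (z - Q(x,y))$; at every ``consistent'' point $(x,y)$ (where $f$ equals both $C_x(y)$ and $R_y(x)$) we have $W(x,y)(f(x,y) - Q(x,y)) = 0$. Combining a Schwartz--Zippel bound on $\{W=0\}$ with the $O(\delta) \cdot q^2$ count of inconsistent points then produces $\delta(f,Q) \leq 4\delta$.

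For the bootstrapping to general $m$, the plan is to apply the bivariate case on random $2$-dimensional affine planes in $\F_q^m$. Since the uniform distribution over line-point pairs in $\F_q^m$ decomposes as a mixture over planes (each plane carrying its own line-point distribution), Fubini shows that for most planes $\pi$ the induced bivariate disagreement rate is $O(\delta)$, so the bivariate case yields a polynomial $Q_\pi$ with $\delta(f|_\pi, Q_\pi)$ small. I would then define $Q(\va)$ as the plurality of $Q_\pi(\va)$ over random planes $\pi$ through $\va$. Any two such planes $\pi, \pi'$ meet in a line $\ell$, and their restrictions $Q_\pi|_\ell, Q_{\pi'}|_\ell$ are both degree-$d$ univariate polynomials that each agree with the line oracle $P_\ell$ on most of $\ell$; in the unique-decoding regime ($\delta < 1/2$) this forces $Q_\pi|_\ell \equiv Q_{\pi'}|_\ell$, so the plurality at $\va$ is unambiguous. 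A standard argument (checking that $Q$ restricted to every line matches the line oracle at enough points and hence is a univariate degree-$d$ polynomial) shows $Q$ itself is a degree-$d$ polynomial, and a double count over plane-point incidences yields $\delta(f,Q) \leq 4\delta$.

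The main obstacle I anticipate is the bivariate construction---specifically, simultaneously ensuring (i) a nonzero $A$ exists, (ii) $\min(|U|,|V|) \geq 2D$ so \cref{lem:pslemma} applies, and (iii) the Schwartz--Zippel bound on $\{W=0\}$ contributes only an $O(\delta) q^2$ error. Fitting all three requirements into the narrow window afforded by $q \geq Cd$ (with $C$ a large absolute constant) and $\delta < 0.01$ is the real technical heart of the argument. The bootstrapping step, by contrast, is comparatively soft once the bivariate case is in hand, since the unique-decoding regime prevents ambiguity at the gluing step.
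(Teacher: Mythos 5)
Your proposal takes a genuinely different route from the paper at both stages. The paper does \emph{not} prove the bivariate base case through the Polishchuk--Spielman \cref{lem:pslemma}; instead it derives the low-error bivariate result (\cref{thm:lowerror-bi-ldt}) as a corollary of the new high-error machinery (interpolating a $z$-degree-$d_z$ polynomial $A$ and invoking the pencil lemma, \cref{lem: decoding on good pencils}), which has the side benefit of producing a polynomial of \emph{total} degree $d$ directly. And for the bootstrap it does \emph{not} define a single corrected function as a plurality over planes and then try to certify that it is a polynomial in one shot; it defines $f_\corr$ by plurality over lines, proves the two soft claims $\delta(f,f_\corr)\le 2\delta_f$ and $\delta_{f_\corr}\le\delta_f/2$, and iterates until the disagreement hits zero (at which point $f^{(i_*)}$ is a polynomial for free). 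Your direction is essentially the original Friedl--Sudan route, which is a legitimate alternative, but as sketched it has two genuine gaps.

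The more serious gap is in the bootstrap. You claim that for any two planes $\pi,\pi'$ through $\va$ with intersection line $\ell$, the restrictions $Q_\pi|_\ell$ and $Q_{\pi'}|_\ell$ ``each agree with the line oracle $P_\ell$ on most of $\ell$'' and hence coincide by unique decoding. This is false for a general pair of planes: the bivariate theorem only tells you $\delta(f|_\pi,Q_\pi)$ is small \emph{on average over lines in $\pi$}, so $Q_\pi|_\ell$ is close to $f|_\ell$ (and hence equals $P_\ell$) only for \emph{most} lines $\ell\subset\pi$, not for the specific line $\ell=\pi\cap\pi'$. Likewise $\delta_f(\ell)$ is small only for most lines, not all. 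Consequently the plurality value at $\va$ is not unambiguous, and the downstream ``standard argument'' that $Q$ restricted to every line is a degree-$d$ polynomial does not go through: the plurality function has no a priori reason to be a polynomial, and showing this requires exactly the kind of expander-mixing or iterative argument that the paper's proof of \cref{clm:deltafhalves} supplies and your sketch omits. The second gap is in the bivariate interpolation. Asking $A(u,y,C_u(y))\equiv 0$ for \emph{all} $u\in U$ and $A(x,v,R_v(x))\equiv 0$ for \emph{all} $v\in V$ with $|U|,|V|\approx q$ imposes roughly $2qD$ homogeneous constraints on $O(D^2)$ unknowns (with $D=\Theta(d)$ and $q\ge Cd$), which is over-determined by a factor $\Theta(C)$; the dimension count only lets you impose the column constraints on a subset of $\approx D$ values of $u$, after which you must \emph{derive} vanishing along rows via a Schwartz--Zippel argument over the consistent grid points, and symmetrically extend to the rest of $U$. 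Also note \cref{lem:pslemma} yields $Q$ of individual degree $\le d$ in each variable, i.e.\ total degree up to $2d$, so an additional direction-counting step is needed to get total degree $d$ as the theorem demands.
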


We are interested in proving high-error versions of the above results. A natural approach will be to extend the Polishchuk-Spielman \cref{lem:pslemma} to higher degrees in $z$. However, the natural generalization happens to be false. Consider the trivariate polynomial \[A(x,y,z):=(xz-C(y))\cdot (yz-R(x))\]where $C \in \F[y], R \in \F[x]$ are two polynomials of degree at most $d$. Clearly for each $u\neq 0$, $A(u,y,\nicefrac{C(y)}{u})\equiv 0$ and for each $v\neq 0$, $A(x,v,\nicefrac{R(x)}{v})\equiv 0$. Yet, there exists no $Q(x,y)$ such that $A(x,y,Q(x,y)) \equiv 0$. This counterexample exists as $z$-degree of $A$ is at least the number of parallel directions\footnote{There exist similar counterexamples for every $r \geq 2$ with $z$-degree being $r$ and a set of $r$-parallel directions. The above example with $r=2$ has 2 sets of parallel directions (lines parallel to the $x$-axis and those parallel to the $y$-axis).}. Our main lemma shows that if the number of parallel directions is considerably more than the $z$-degree, then such counterexamples do not exist. We find it more convenient to state our lemma for sets of lines passing through a point rather than sets of parallel lines. \footnote{A set of parallel lines in any direction can be viewed as a set of lines in different directions through a single point at infinity. In this sense, working with lines through a fixed point (a configuration that we refer to as a pencil), and sets of parallel lines are essentially equivalent. For our arguments, the former happens to be a bit more natural. } 

We mention a slightly informal statement of our main lemma. 
\begin{lemma}[Main technical lemma (Informal)]\label{lem:pencil-lemma-intro}
 
Let $\F$ be any finite field and $A(x,y,z) \in \F[x,y,z]$ be a non-zero trivariate polynomial with $z$-degree at most $d_z$ and $(1,1,d)$-weighted degree at most $D$ such that the characteristic of $\F$ is greater than $d_z$. 

If there is a set $B \subseteq \F^2$ that satisfies  (1) $|B| > 2d_zD|\F|$, and (2) for every $(\alpha, \beta) \in B$, there is a set $S_{(\alpha,\beta)} \subseteq \F^2$ of size greater than $(d_zD|\F|)$ such that for every $(u,v) \in S_{(\alpha,\beta)}$, there exists a univariate polynomial $P_{((\alpha,\beta),(u,v))}(t) \in \F[t]$ of degree at most $d$ for which  \[A\left(\alpha+tu,\beta+tv,P_{((\alpha,\beta),(u,v))}(t)\right) \equiv 0 \, ,\] 
then, there exists a polynomial $P(x,y) \in \F[x,y]$ of total degree at most $d$ such that 
\[
A(x,y,P(x,y)) \equiv 0 \, .
\]
\end{lemma}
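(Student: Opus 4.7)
The plan is to pick a single well-behaved pencil center in $B$, extract a formal power series solution $Z(x,y)$ of $A(x,y,Z)=0$ at that center via the implicit function theorem, and then use the abundance of good directions through this center to show that $Z$ must truncate to a genuine polynomial of total degree at most $d$. First I would replace $A$ by its squarefree-in-$z$ radical---any line-root of $A$ is also a line-root of the radical, and the $(1,1,d)$-weighted degree bound only shrinks under this reduction. Using $\mathrm{char}(\F) > d_z$, the formal derivative $\partial_z A$ is nonzero and $A$ is now separable in $z$, so the resultant $\Delta(x,y) := \Res_z(A, \partial_z A)$ is a nonzero polynomial in $\F[x,y]$. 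The standard bound on resultant degree, combined with $\deg_{x,y} a_i \le D - d\cdot i$ for the $z^i$-coefficient of $A$, yields $\deg_{x,y}\Delta \le 2 d_z D$. By Schwartz--Zippel, at most $2 d_z D \cdot |\F| < |B|$ points of $\F^2$ lie on $\{\Delta = 0\}$, so I can pick $(\alpha_0,\beta_0) \in B$ with $\Delta(\alpha_0,\beta_0) \ne 0$ and, after an affine translation, assume $(\alpha_0,\beta_0) = (0,0)$. Then $A(0,0,z)$ has exactly $d_z$ distinct simple roots $\zeta_1,\ldots,\zeta_{d_z}$.

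Pigeonholing the more than $d_z D |\F|$ good directions over these $d_z$ roots yields a subset $S' \subseteq S_{(0,0)}$ with $|S'| > D|\F|$ and a single root $\zeta$ such that $P_{((0,0),(u,v))}(0) = \zeta$ for every $(u,v) \in S'$. Since each projective direction has exactly $|\F|-1$ nonzero scalar multiples, $S'$ contains at least $D+1$ distinct directions through the origin. Because $\zeta$ is a simple root, the formal implicit function theorem produces a unique $Z \in \F[[x,y]]$ with $Z(0,0) = \zeta$ and $A(x,y,Z(x,y)) \equiv 0$. Restricting to a line $t \mapsto (tu,tv)$ with $(u,v) \in S'$, both $Z(tu,tv)$ and the given polynomial $P_{((0,0),(u,v))}(t)$ satisfy $A(tu,tv,w) = 0$ in $\F[[t]]$ with the same initial value $\zeta$, so the univariate analogue of the same uniqueness argument forces $Z(tu,tv) = P_{((0,0),(u,v))}(t)$.

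Decompose $Z = \sum_{k \ge 0} Z_k$ into homogeneous components, so that $Z(tu,tv) = \sum_k Z_k(u,v)\, t^k$. The degree bound $\deg P_{((0,0),(u,v))} \le d$ forces $Z_k(u,v) = 0$ for every $(u,v) \in S'$ and every $k > d$. For each $k$ with $d < k \le D$, $Z_k$ is a homogeneous polynomial of degree $k$ in two variables that vanishes on at least $D+1 > k$ distinct directions through the origin, hence $Z_k \equiv 0$. Setting $P(x,y) := Z_0 + Z_1 + \cdots + Z_d$ and $R := Z - P$, the power series $R$ has order strictly greater than $D$, and a binomial expansion shows that $A(x,y,P+R) - A(x,y,P)$ is itself a power series of order strictly greater than $D$ (every surviving summand carries a factor of $R$). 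Since $A(x,y,Z) = A(x,y,P+R) = 0$, the polynomial $A(x,y,P(x,y))$ has no monomials of total degree at most $D$; but the $(1,1,d)$-weighted degree bound on $A$ forces $A(x,y,P(x,y))$ to have total degree at most $D$, so it is identically zero and $P$ is the desired polynomial.

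The main obstacle I anticipate is the bookkeeping in the first step: carrying the $(1,1,d)$-weighted degree structure through both the squarefree reduction and the resultant degree calculation precisely enough to guarantee that the discriminant locus is dodged by $B$ under the bound $|B| > 2 d_z D |\F|$. Once that pencil center is in hand and the simple branch $\zeta$ has been identified, the rest of the proof is essentially a clean formal-power-series computation in which the pencil of good directions pins down the low-order part of $Z$.
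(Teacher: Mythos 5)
Your argument is correct and in essence reproduces the paper's proof: the paper states the core statement as \cref{lem: decoding on good pencils} with the non-degenerate pencil center (a squarefree fiber $A(\vb,z)$) taken as a hypothesis, and the discriminant-avoidance step you carry out up front---bounding $\deg \disc_z(A) \le 2d_z D$ and using Schwartz--Zippel to find a good center inside $B$---is performed separately by the paper at the point where that lemma is invoked (the ``Guaranteeing (i)'' step in the proof of \cref{thm:LDT-low-agreement-poly-agreement}). Beyond that the only real deviation is the closing count. You show that the homogeneous components $Z_k$ of the power-series root, for $d < k \le D$, vanish on more than $D$ projective directions and hence vanish identically, and then run an order-versus-degree argument on $A(x,y,P+R)$ to conclude $A(x,y,P)=0$. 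The paper instead sets $B(x,y):=A\bigl(x,y,\Phi_d(x,y)\bigr)$ directly (where $\Phi_d$ is exactly your truncation $P = Z_0 + \cdots + Z_d$), observes $B(u,v)=0$ for every good direction vector $(u,v)$, and concludes $B\equiv 0$ by the polynomial identity lemma since $\deg B \le D$. The two endgames consume the same numerical hypothesis and are both sound; the paper's is marginally more direct, while yours makes explicit that the ``middle'' homogeneous parts of the Newton solution are forced to vanish.
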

\begin{remark*}
Even though \cref{lem:pencil-lemma-intro} is stated for trivariate polynomials here, the statement is true for multivariate polynomials as it is. Moreover, even though the underlying field $\F$ is assumed to be finite here and with large enough characteristic, a similar statement that suffices for our applications to low-degree testing is true  over all fields. We refer to \cref{sec: PS on pencil of lines} for details. 
\end{remark*}
Given this lemma, we can now prove the following high-error bivariate lines-point theorem. 

\begin{restatable}[Bivariate low-degree test]{theorem}{bivariateldtintro}
  \label{thm:bivariate-high-agreement-LDT}
  There exists a constant $\tau\in (0,1]$ such that for every finite field $\F_q$, $\epsilon_0 \in (0,1)$ and degree $d$ satisfying $\epsilon_0 > (d/q)^{\tau}$, the following holds. 

  Suppose the points table $f\colon\F_q^2\to \F_q$ and degree-$d$ lines oracle $P_{\ell}$ satisfy
  \[
  \Pr_{\va \in \F_q^2, \ell \ni \va}[f(\va) = P_{\ell}(\va)] \geq \epsilon \, , 
  \]
  for some constant $\epsilon > \epsilon_0$. Then, there is a bivariate degree $d$ polynomial $Q$ such that 
      \[
        \Pr_{\va\in \F_q^2}[f(\va) = Q(\va)] \geq \epsilon - \epsilon_0.
      \]
\end{restatable}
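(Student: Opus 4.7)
The plan is to apply the pencil lemma (\cref{lem:pencil-lemma-intro}) to extract a bivariate degree-$d$ polynomial $Q(x,y)$ witnessing the required global agreement with $f$. First I would apply a reverse Markov argument to the line-agreement function $\text{agr}_\ell := \Pr_{\va \in \ell}[f(\va) = P_\ell(\va)]$: since $\mathbb{E}_\ell[\text{agr}_\ell] = \epsilon$ and $\text{agr}_\ell \leq 1$, an $\Omega(\epsilon)$ fraction of lines are \emph{good} (satisfying $\text{agr}_\ell \geq \epsilon/2$). Double counting then produces a set $B \subseteq \F_q^2$ of \emph{heavy} centers of density $\Omega(\epsilon)$, each lying on an $\Omega(\epsilon)$ fraction of good lines; for each $(\alpha,\beta) \in B$, let $S_{(\alpha,\beta)} \subseteq \F_q^2$ collect the directions that yield good lines through it.

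\paragraph{Auxiliary trivariate polynomial.} Next, via a standard dimension count, I would construct a nonzero $A(x,y,z) \in \F_q[x,y,z]$ with $z$-degree $\leq d_z$ and $(1,1,d)$-weighted degree $\leq D$ vanishing, possibly to multiplicity $m$, on an appropriate subset of the graph $\{(\va,f(\va)) : \va \in \F_q^2\}$. The parameters $d_z, D, m$ must be chosen to simultaneously satisfy: (i) a nonzero $A$ exists by dimension count; (ii) for each good line $\ell$, the univariate polynomial $t \mapsto A(\alpha+tu, \beta+tv, P_\ell(t))$, of degree $\leq D$, has more zeros (counted with multiplicity) than $D$ and so vanishes identically; and (iii) the cardinality hypotheses $|B| > 2 d_z D q$ and $|S_{(\alpha,\beta)}| > d_z D q$ of \cref{lem:pencil-lemma-intro} hold. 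Applying the pencil lemma then yields a bivariate $Q(x,y)$ of total degree $\leq d$ with $A(x,y,Q(x,y)) \equiv 0$.

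\paragraph{Global agreement and main obstacle.} On each good line $\ell$, $A(\alpha+tu,\beta+tv,z) \in \F_q[t][z]$ has $z$-degree $\leq d_z$ and contains both $z - P_\ell(t)$ and $z - Q|_\ell(t)$ as linear factors. On every line with $Q|_\ell = P_\ell$, each point where $f = P_\ell$ is a point where $f = Q$, so
\[
\Pr_{\va}[f(\va) = Q(\va)] \;\geq\; \Pr_{\va,\ell\ni\va}[f(\va) = P_\ell(\va)\text{ and } Q|_\ell = P_\ell] \;=\; \epsilon - \Pr_{\va,\ell\ni\va}[f(\va) = P_\ell(\va)\text{ and } Q|_\ell \neq P_\ell],
\]
and it remains to bound the residual mass by $\epsilon_0$. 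The main obstacle is precisely this bound: because $A$ may have up to $d_z$ branches $Q_i(x,y)$, the local agreement mass could a priori split among them, yielding only $\approx \epsilon/d_z$ per branch instead of the required $\epsilon-\epsilon_0$. Resolving this will likely require either iterating the pencil construction to peel off branches one by one and accumulate the near-total mass, or leveraging the high-error-regime constraint $\epsilon > \epsilon_0 = (d/q)^\tau$ to rule out nearly-balanced configurations among branches on the lines. A secondary technical challenge is the tight parameter balance: the pencil-lemma condition $d_z D = O(q)$ is in tension with the natural interpolation requirement $d_z D^2 = \Omega(q^2)$, which likely forces us to interpolate on a strict subset of $\F_q^2$ (e.g., a superset of $B$) and to use vanishing to sufficiently high multiplicity $m$ in Step 2.
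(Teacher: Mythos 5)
Your sketch correctly captures the core of the paper's proof of the \emph{weak} form of the theorem (corresponding to \cref{thm:LDT-low-agreement-poly-agreement}): interpolate a trivariate explainer $A(x,y,z)$ of controlled $(1,1,d)$-weighted degree and $z$-degree, find a non-degenerate center point, and invoke the pencil lemma (\cref{lem: decoding on good pencils}) to extract a single branch $Q$. A few details differ in mechanics -- the paper does not use higher-multiplicity vanishing but rather interpolates on a structured subgrid $S_1\times S_2$ (\cref{lem:structured-grid}) so the dimension count stays small while still forcing $A$ to vanish on $\Omega(\epsilon^2 q^2)$ points; it also separately ensures $\disc_z(A)\not\equiv 0$ to guarantee a usable non-degenerate center -- but your structural outline is sound up through obtaining one $Q$ with $\Omega(\epsilon^4)$ agreement.

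The genuine gap is in your ``main obstacle'' paragraph, and your two proposed fixes are not what rescues the argument. You should \emph{not} expect the pencil lemma to deliver $\epsilon-\epsilon_0$ agreement in one shot: a single branch really does capture only $\poly(\epsilon)$ agreement, and neither peeling off $A$'s branches iteratively nor appealing to $\epsilon > \epsilon_0$ closes the gap, because the mass genuinely can split among $\Theta(1/\epsilon)$ nearby codewords. The paper's resolution is a separate, generic \emph{weak-to-strong reduction} (\cref{lem:weak-LDT-implies-list-decoding-LDT} and \cref{lem:weak-LDT-implies-high-agreement-LDT}): first upgrade the weak LDT to a list-decoding statement by randomizing $f$ on the union of agreement sets, producing a list $Q_1,\dots,Q_C$ (with $C = O(1/\poly(\epsilon))$ by the Johnson bound) that accounts for almost all $\epsilon$-good points; then show that the LDT acceptance probability on $(f,P)$ is at most $\eta_1 + O(\epsilon_0)$, where $\eta_1$ is the agreement of the best list element, by classifying accepting pairs $(x,\ell)$ as standard/abnormal/coincidental/unexplained and bounding each case. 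This yields $\eta_1 \geq \epsilon - O(\epsilon_0)$, which is the statement you want. In short: the missing idea is not a sharper use of the pencil lemma but rather a combinatorial list-decoding boost applied after the algebraic core.
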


An added advantage of the above theorem is that it also works in the low-error regime. This gives a single proof (in the bivariate setting) that works both for both the low-error and high-error regimes. We can then bootstrap the bivariate low-error theorem using an argument similar to (but simpler than) \cite{FriedlS1995} to give an alternate proof of multivariate low-error Friedl-Sudan \cref{thm:main-lowerror-ldt}. Finally, we bootstrap the high-error version of the above theorem to yield the following multivariate high-error result. 

\begin{theorem}[Multivariate low-degree test]\label{thm:multivariate-low-degree-test-intro}
There exists a constant $\tau\in (0,1]$ such that for every finite field $\F_q$, $\epsilon_0 \in (0,1)$, $m$, and degree $d$ satisfying $\epsilon_0 > (d/q)^{\tau}$, the following holds. 

Suppose the points table $f\colon\F_q^m\to \F_q$ and degree-$d$ lines oracle $P_{\ell}$ satisfy
\[
\Pr_{\va \in \F_q^m, \ell \ni \va}[f(\va) = P_{\ell}(\va)] \geq \epsilon \, , 
\]
for some constant $\epsilon > \epsilon_0$. Then, there is a bivariate degree $d$ polynomial $Q$ such that 
    \[
      \Pr_{\va\in \F_q^m}[f(\va) = Q(\va)] \geq \epsilon - \epsilon_0.
    \]

\end{theorem}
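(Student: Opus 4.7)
The plan is to bootstrap \cref{thm:bivariate-high-agreement-LDT} directly to the $m$-variable setting via a plurality-vote construction on planes, in the spirit of the local list-decoder of Sudan--Trevisan--Vadhan.  The three-layered affine Grassmannian on points, lines, and planes in $\F_q^m$ will supply exactly the expansion needed, and its regularity will let us bypass the three- or four-variable intermediate steps used in earlier bootstrappings.

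First, I would re-sample the pair $(\va,\ell)$ with $\va\in\ell$ in the form $(\pi,\ell,\va)$: draw a plane $\pi\subseteq\F_q^m$ uniformly, then $\ell\subseteq\pi$ uniformly, then $\va\in\ell$ uniformly.  A standard double counting on the affine Grassmannian shows this matches the hypothesis distribution, so
\[
\E_\pi\bigl[\Pr_{\va\in\pi,\ \ell\subseteq\pi,\ \ell\ni\va}[f(\va)=P_\ell(\va)]\bigr]\;\geq\; \epsilon.
\]
By Markov, for at least an $\Omega(\epsilon_0)$-fraction of planes the internal plane-line agreement is at least $\epsilon-\epsilon_0/10$; call these planes \emph{good}.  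For each good $\pi$, apply \cref{thm:bivariate-high-agreement-LDT} (with its $\epsilon_0$-parameter reset to $\epsilon_0/10$) to obtain a degree-$d$ bivariate polynomial $Q_\pi$ on $\pi$ with $\Pr_{\va\in\pi}[f(\va)=Q_\pi(\va)]\geq \epsilon-\epsilon_0/5$; choose $Q_\pi$ arbitrarily on the remaining planes.

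Next, define the global function $Q\colon\F_q^m\to\F_q$ by $Q(\va):=\mathrm{plurality}\{Q_\pi(\va):\pi\ni\va\}$, with a fixed tie-break.  The heart of the argument is to pin down, for each line $\ell$, a canonical degree-$d$ univariate $P^\ast_\ell$ such that $Q_\pi|_\ell=P^\ast_\ell$ for a $1-O(\epsilon_0)$ fraction of planes $\pi\supseteq\ell$.  This I would do by drawing two independent planes $\pi_1,\pi_2\supseteq\ell$ and observing that both $Q_{\pi_i}|_\ell$ are degree-$d$ univariates which, on most $\ell$, agree with $f|_\ell$ on an $(\epsilon-O(\epsilon_0))$-fraction of points; the Johnson bound on the Reed--Solomon code along $\ell$ limits such univariates to a list of size $O(\epsilon^{-2})$, and because the two planes are sampled symmetrically the distributions of $Q_{\pi_1}|_\ell$ and $Q_{\pi_2}|_\ell$ coincide.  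A second-moment argument then forces $Q_{\pi_1}|_\ell=Q_{\pi_2}|_\ell$ with probability $1-O(\epsilon_0)$, giving the required $P^\ast_\ell$.  Once this line-level canonical object is in hand, the plurality $Q|_\ell$ must coincide with $P^\ast_\ell$ almost everywhere, so $Q|_\ell$ is a degree-$d$ polynomial on every line, and hence (since $q>d$) $Q$ is a global degree-$d$ polynomial on $\F_q^m$.  Summing the losses from Markov, the bivariate theorem and the voting step then yields $\Pr_\va[Q(\va)=f(\va)]\geq \epsilon-\epsilon_0$.

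The main obstacle I expect is precisely the line-level consistency argument just sketched: because $\epsilon$ may be well below $\nicefrac{1}{2}$, the local $Q_\pi$ are not unique, so distinct planes through a line could in principle pick different winners.  The Johnson-bound-limited list per line, combined with the symmetric sampling of planes through a fixed line, is what lets us pin down a single element.  A secondary technical issue is propagating all failure probabilities so that the final exponent $\tau$ in the hypothesis $\epsilon_0>(\nicefrac{d}{q})^\tau$ comes out as an absolute constant (smaller than, but polynomially related to, the one supplied by \cref{thm:bivariate-high-agreement-LDT}).
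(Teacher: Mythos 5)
Your approach diverges substantially from the paper's, and the divergence is exactly where the gap lies.

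The paper does not assemble the global polynomial by taking a plurality over all planes. Instead it draws a random \emph{advice point} $\vx$ and uses the pair $(\vx, f(\vx))$ to disambiguate among the (potentially many) degree-$d$ polynomials that might explain a given line: the corrected function $f_\corr^{(\vx)}(\vy)$ is defined by looking at the line $\ell_{\vx,\vy}$, forming the Johnson-bound list of degree-$d$ univariates with $\poly(\epsilon)$-agreement with $f|_\ell$, and selecting the \emph{unique} list element (if any) that takes value $f(\vx)$ at $\vx$. It then shows this advice-corrected function passes the LDT with probability $1 - O(\gamma)$, applies the already-established low-error multivariate LDT (\cref{thm:main-lowerror-ldt}) to pull out a global degree-$d$ polynomial $Q^{\vx}$ close to $f_\corr^{(\vx)}$, and finally argues that $f_\corr^{(\vx)}$ has $\poly(\epsilon)$-agreement with $f$ when $\vx$ is drawn from the set of $2\epsilon$-good points. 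The weak/high-agreement reduction in \cref{lem:weak-LDT-implies-high-agreement-LDT} then yields the stated form. The paper even remarks that the plurality-over-planes variant (which it attributes to Arora--Sudan) still uses the advice $(\vx, f(\vx))$ to pick one $Q^{(\vx)}_\pi$ per plane, and that analysing \emph{that} construction forces one back to an $m=3$ base case, which is precisely what the line-based correction avoids.

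The concrete gap in your proposal is the line-level consistency step, and it is not repairable by the Johnson bound plus symmetric sampling alone. First, in the high-error regime ($\epsilon < \nicefrac{1}{2}$) your Markov step only yields an $\Omega(\epsilon_0)$-fraction of good planes, not a $1-O(\epsilon_0)$-fraction, so for a typical line $\ell$ most planes $\pi \supseteq \ell$ carry an \emph{arbitrary} $Q_\pi$ and $Q_\pi|_\ell$ has no structure. Second, even among good planes, the bivariate theorem produces \emph{some} polynomial with $\poly(\epsilon)$-agreement, not a canonical one; different good planes through $\ell$ can legitimately pick different elements of the Johnson list on $\ell$. The statement that ``the distributions of $Q_{\pi_1}|_\ell$ and $Q_{\pi_2}|_\ell$ coincide'' is true but toothless: equality of marginals does not concentrate the distribution on one list element, and the collision probability of two independent samples from a distribution on $k$ items is only $\geq 1/k$, nowhere near $1-O(\epsilon_0)$. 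A concrete obstruction: let $f$ equal a degree-$d$ polynomial $R_1$ on half of $\F_q^m$ and a different degree-$d$ polynomial $R_2$ on the other half; every plane supports both $R_1|_\pi$ and $R_2|_\pi$ as valid explainers, and a plurality over planes through a point $\va$ gives no coherent way to choose a winner that is consistent across lines. The advice $(\vx, f(\vx))$ is exactly what resolves this: by insisting that the chosen polynomial on each line through $\vx$ pass through $(\vx, f(\vx))$, the paper coherently selects one list element per line and obtains a correction that is (nearly everywhere) a genuine low-degree function, allowing the low-error multivariate theorem to finish the job.
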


We thus not only prove the multivariate low-error LDT \cref{thm:multivariate-low-degree-test-intro}, but also given an alternate (and arguably simpler) proof of the the multivariate low-error LDT \cref{thm:main-lowerror-ldt}. This yields a completely self-contained treatment of the lines-point test in both error regimes. 

\subsection{Proof overview} \label{sec:proof-overview}
We give a high-level overview of the main ideas in the proofs of our results. The overall structure of our argument proceeds as follows --- (1) Analyse the bivariate low-degree test, and (2) \emph{bootstrap} the argument to $m$-variate low-degree tests. 

All of our key algebraic ideas already make an appearance in the proof of the bivariate low-degree test, which is then \emph{lifted} to the multivariate setting using combinatorial techniques, in particular the expansion properties of the points-lines-planes affine Grassmannian.

\subsubsection{Bivariate low-degree tests} \label{sec:overview-bivariate-LDT}

Suppose are given a `points table' $f\colon \F_q^2 \to \F_q$ that passes the low-degree test with probability $\epsilon$. The argument for \cref{thm:bivariate-high-agreement-LDT} proceeds as follows:

\begin{enumerate}
  \item  \label{overview:step-explainer}
  \textbf{Find an `explainer':}
  We will show that there is a trivariate polynomial $A(x,y,z)$ of $(1,1,d)$-weighted degree at most $D$ that satisfies $A(a,b,f(a,b)) = 0$ for a structured set $S$ of  $\poly(\epsilon)$ fraction of the points $(a,b) \in \F_q^2$. 
  \item \label{overview:step-local-roots}
  \textbf{Show that $A$ has `low-degree roots' on many lines:}
  We then find a `good' point $(b_1,b_2) \in \F_q^2$ such that on \emph{many} lines $\ell_{\vb,\vu} = \setdef{(b_1 + tu_1, b_2 + tu_2)}{t\in \F_q}$ through this point, we have a degree $d$ polynomial $P_{\vb,\vu}(t)$ such that $A(b_1 + tu_1, b_2 + tu_2, P_{\vb,\vu}(t)) \equiv 0$. 
  \item \label{overview:step-global-roots}
  \textbf{Show that $A$ must have a global `low-degree root':}
  With the hypothesis from the previous step, we use \cref{lem:pencil-lemma-intro} to show that there must be a degree $d$ polynomial $Q(x,y)$ such that $A(x,y,Q(x,y)) \equiv 0$. 
  \item \label{overview:step-poly-agreement}
  \textbf{Show that $f$ must have non-trivial agreement with $Q(x,y)$}
  From the polynomial obtained in the previous step, and the structure of $S$, we show that $f$ and $Q$ must agree on $\Omega(\epsilon^4 \cdot q^2)$ points. 
\end{enumerate}

Once we have non-trivial agreement as in \cref{overview:step-poly-agreement}, standard reductions in the low-degree testing literature yields the stronger form as stated in \cref{thm:bivariate-high-agreement-LDT}.

\cref{overview:step-explainer,overview:step-local-roots} proceed along similar lines as in the results of Arora and Sudan~\cite{AroraS2003}, with some additional care to ensure that the $z$-degree of $A$ is function of just $\epsilon$ (and not $d$). With this additional care, the argument is able to interpolate such an `explainer' when the density of $S$ is $O(D/|\F|)$ as opposed to $\poly(D)/|\F|$ in \cite{AroraS2003}. The key technical step is \cref{overview:step-global-roots} which uses \cref{lem:pencil-lemma-intro}. This lemma, in spirit, can be thought of as an analog of the celebrated lemma of Polischuk \& Spielman \cite{PolishchukS1994} in the high error setting. \cref{overview:step-poly-agreement} is almost an immediate consequence of the previous steps.

\paragraph{Sketch of the proof of \cref{lem:pencil-lemma-intro}: } Since we have a `local root' for many lines, we first observe that many of the polynomials $P_{\vb,\vu}(t)$ must share the same constant term $\alpha$. Using the fact that $(b_1,b_2)$ is a `good' point, this ensures some non-degeneracy properties for the point $(b_1,b_2,\alpha)$ satisfying $A(b_1,b_2,\alpha) = 0$. We then use Newton Iteration to construct an ``approximate'' root $\Phi_k(x,y)$ such that $A(b_1 + x, b_2 + y, \Phi_k(x,y)) = 0 \bmod{\inangle{x,y}^{k+1}}$, for any choice of $k$. (That is, $\Phi_k(x,y)$ is a root if we are willing to ignore some high-degree terms.) Finally, by using some uniqueness properties guaranteed by Newton Iteration, we show that many of the $P_{\vb,\vu}$'s must infact be restrictions of $\Phi_d(x,y)$ to the respective lines. This allows us to eventually argue that $\Phi_d(x,y)$ must infact be an global root of $A(x,y,z)$. 

In a broader sense, the above proof opens up the use of Hilbert's Irreducibility Theorem in \cite{AroraS2003} and makes appropriate changes to ensure the better dependence in parameters. A more detailed discussion on the differences is provided below. 

\paragraph{Dealing with fields of small positive characteristic: }
In general, proofs obtained via derivative based techniques like Taylor expansion or Newton iteration suffer from technical issues when applied over fields of small characteristic. Among examples of such results are the results on polynomial factorization \cite{Kaltofen1989,KoppartySS2015}, results on list decoding of multiplicity codes \cite{GuruswamiW2013,Kopparty2014, Kopparty2015,BhandariHKS2023-mgrid} and hardness-randomness tradeoffs in algebraic complexity \cite{GuoKSS2022,  Andrews2020}. Intuitively, the issue stems from the fact over a field of small characteristic, a polynomial can depend on a variable, but its partial derivative with respect to this variable can still be identically zero, e.g. if the polynomial is a function of $x^p$ over a finite field of characteristic $p$. Our outline also suffers from these issues. However, we observe that these issues can be resolved and that the results extend to all finite fields  via one simple additional idea --- while interpolating an explainer, ensure that $A(x,y,z)$ has a non-zero partial derivative with respect to $z$ by just ignoring all monomials whose exponent in $z$ is divisible by $p$. As it turns out, this simple modification still allows us to proceed with the rest of the argument without hardly any change to the parameters involved.

\paragraph{Technical differences with the analysis of Arora-Sudan: }
While our proof in the bivariate case is conceptually similar to that of Arora-Sudan \cite{AroraS2003}, there are some differences that lead to better quantitative bounds. The first technical difference is that in the interpolation step of the analysis, we interpolate a trivariate polynomial $A(x, y, z)$ of $(1, 1, d)$-degree $D$, whereas Arora-Sudan work with polynomials of total degree $D$. As a consequence, the $z$-degree of $A$ can be bounded by at most $D/d$ and not just $D$. In particular, when $D = O(d)$, the $z$-degree of $A$ happens to be a constant for us, whereas this is not the case in \cite{AroraS2003}.

The second technical difference is in the statement of a lemma analogous to \cref{lem:pencil-lemma-intro} that Arora \& Sudan prove. In their lemma (Lemma 39 in \cite{AroraS2003}), they construct a polynomial of degree $O(D^3)$ that essentially detects whether for $u, v \in \F$, the restriction $A(ut, vt, z)$ of $A(x, y, z)$ has a factor that is linear in $z$. More precisely, if $A(x,y,z)$ does not have a factor that is linear in $z$, then they construct a non-zero polynomial $Q(x,y)$ of degree at most $O(D^3)$ such that if $Q(u,v)$ is non-zero, then  $A(ut, vt, z)$ \emph{does not} have a factor that is linear in $z$. Two points to note here are that the degree of $Q$ is polynomially larger than $D$, and that this statement talks about factors that are linear in $z$, but is not sensitive to the degree of the such a factor in $x,y$ (which could be as large as $D$). As it turns out, the analysis of the bivariate low-degree test only ever cares about factors that are linear in $z$ and have degree at most $d$ in $x,y$. Being aware of $(x,y)$-degree of these factors of interest, and additionally considering polynomials of $(1,1,d)$-degree at most $D$ lets us construct an analog of the polynomial $Q$ above whose degree is just $D$. This reduction in degree of $Q$ then naturally translates to a reduction in the field size requirement in the overall analysis.

\subsubsection{Bootstrapping to higher dimensions}\label{sec:proof-overview-higher-dim}

Having proved the low-degree test theorem for dimension $m=2$, we now need to bootstrap it to higher dimensions. There are several possible ways to do this. One potential route is the following. Given the lines-point bivariate LDT (i.e., $m=2$), we obtain a degree-$d$ planes oracle that passes the planes-point test with non-trivial probability and we can then  use the Raz-Safra bootstrapping analysis \cite{RazS1997} to bootstrap to arbitrary dimensions. This however causes $\eps_0$ (in \cref{thm:multivariate-low-degree-test-intro}) to be at least $\Omega(m \cdot \left(\nicefrac{d}{q}\right)^\tau)$. An alternate route to get around this linear dependence in $m$ is the following route. Bootstrap using the Raz-Safra analysis to dimension $m=3$ and then use the Bhangale-Dinur-Navon bootstrapping argument \cite{BhangaleDN2017} to bootstrap to arbitrary dimensions. The \cite{BhangaleDN2017} analysis (as written in their paper) requires $\eps_0$ to be at least $\Omega(\left(\nicefrac{d^8}{q}\right)^\tau)$. While this avoids the $m$-dependence, it requires the field size $q$ to be at least $d^8$. The \cite{BhangaleDN2017} analysis can be tighted to yield a $q = O(d)$ dependence by using the Friedl-Sudan analysis \cite{FriedlS1995} instead of the Rubinfeld-Sudan analysis \cite{RubinfeldS1996} which they use. To avoid these issues and give a self-contained proof, we give a direct bootstrapping argument from $m=2$ to arbitrary dimensions (inspired of course by bootstrapping analyses of \cite{FriedlS1995,AroraS2003,RazS1997,BhangaleDN2017}). However, since we intend to lift from the bivariate LDT (that is, the line-point test in a plane), some effort is required to make the broad ideas of bootstrapping work in our regime. As mentioned earlier in the introduction, this direct bootstrapping is inspired by the clean local-list-decoder for multivariate polynomial codes due to Sudan, Trevisan \& Vadhan \cite{SudanTV2001}.

\medskip

As a warmup, let us first explain the bootstrapping for the low-error regime, a la Friedl-Sudan \cite{FriedlS1995}. In this setting, we are given `points table' $f\colon \F_q^m \rightarrow \F_q$ and the best degree-$d$ lines oracle $P^{(f,d)}$ that fails the LDT with probability at most $\delta$, we wish to show that there is a degree-$d$ polynomial $Q(x_1,\ldots, x_m)$ that is $O(\delta)$-close to $f$. Following Friedl-Sudan, we define a corrected function $f_\corr$ as follows: for any point $\vy \in \F_q^m$, $f_\corr(\vy)$ is the most popular value of $P^{(f,d)}_\ell(\vy)$ among all lines $\ell$ passing through $\vy$ where $P^{(f,d)}_\ell$ is the best-fit degree-$d$ polynomial $P^{(f,d)}_\ell$ agreeing with $f$ on the line $\ell$ (breaking ties arbitrarily). Let $\delta_f$ be the rejection probability of low-degree test when run on the points table $f$ and the best-fit lines oracle $P^{(f,d)}_\ell$ for $f$. We will show that as long as $\delta_f \leq \delta_0$ for some fixed constant (dependent on $\nicefrac{d}{q}$), the rejection probability of the corrected function $f_\corr$, names $\delta_{f_\corr}$ is significantly smaller than that of $f$. More precisely, $\delta_{f_\corr} \leq \delta_f/2$. Once we have shown this, we can repeat the self-correction procedure several times to eventually arrive at a corrected function $f_*$ such that $\delta_{f_*}=0$, in which case it is a low-degree polynomial. So, it suffices to show that the corrected function $f_\corr$ passes the low-degree test with significantly better probabibility than the original function $f$. To prove this, it suffices for us to show that for a random point $\vx$, the probability that two random lines $\ell, \ell'$ passing through $\vx$ satisfy $P^{(f,d)}_\ell(\vx)= P^{(f,d)}_{\ell'}(\vx)$. For any such triple $(\vx,\ell, \ell')$, consider the plane $\pi$ containing $\ell$ and $\ell'$. If the function $f$ restricted to this plane $\pi$ passes the low-degree with high probability, then the bivariate low-error LDT theorem states that restricted to this plane, the function $f$, mostly behaves like a degree-$d$ polynomial $Q_\pi$ and both $P^{(f,d)}_\ell$ and $P^{(f,d)}_{\ell'}$ are in fact restrictions of $Q_\pi$ and hence equal to each other on the point $\vx$. The the above arguments work only on average and not for every triple $(\vx,\ell,\ell')$. Nevertheless, we show that this suffices to bootstrap to higher dimensions $m$ and here we use the expansion properties of the points-lines-planes affine Grassmannian. This proof is adapted from (and arguably simpler than) the corresponding bootstrapping proof due to Friedl and Sudan \cite{FriedlS1995}, who use a different high-dimensional expander consisting of points-lines-surfaces. 

We now turn to bootstrapping in the high-error regime. Given a `points table' $f:\F_q^m \rightarrow \F_q$ that passes the LDT with probability $\epsilon$, we wish to show that there is a degree $d$ polynomial $Q(x_1,\ldots, x_m)$ that agrees with $f$ on $\poly(\epsilon)$ places (proceeding from a `weak-agreement' statement to the statement in \cref{thm:multivariate-low-degree-test-intro} again follows from standard reductions). As in the low-error case, we would like to define a corrected function such that the corrected function passes the low-degree test with significantly better probability than the original function. The primary issue here is that there could be \emph{many} candidate $Q$'s that have $\poly(\epsilon)$ agreement and hence several different choices for the corrected function. We use an additional advice (a random point $\vx$ and the value of the function $f$ at the point $\vx$, namely $f(\vx)$) to disambiguate among the several different choices. One such corrected function $g^{(\vx)}_\corr\colon \F^m \to \F$ is as follows.  For any $\vy \in \F_q^m$, $g^{(\vx)}_\corr(\vy)$ is the most popular value of $Q^{(\vx)}_\pi(\vy)$ among all planes $\pi$ containing both $\vx$ and $\vy$ such that $Q_\pi^{(\vx)}$ is the unique degree-$d$ polynomial that ``explains'' the plane $\pi$ and furthermore $Q^{(\vx)}_\pi(\vx)=f(\vx)$. Why does such a ``explaining'' polynomial $Q_\pi$ exist? If the restriction of the function $f$ to the plane $\pi$ passes the low-degree test with probability $\eps$ (which happens if the plane $\pi$ is random), then the bivariate LDT theorem states that there exists such a polynomial $Q_\pi$. This is precisely the bootstrapping argument of Arora and Sudan \cite{AroraS2003}. However, then to show that this corrected function $g^{(\vx)}_\corr$ passes the low-degree test with significantly better probability than $f$, one needs to consider a cube and this is why the \cite{AroraS2003} bootstrapping argument required as base case both the $m=2$ and $m=3$ cases. To get around this dependence on the $m=3$ case, we define an alternate correction function $f^{(\vx)}_\corr$ (also disambiguated using the advice $(\vx,f(\vx))$), that we describe informally here:
\begin{quote}
  Pick a random $\vx \in \F_q^m$ to use for constructing the correction. Define $f_\corr^{(\vx)}(\vx) = f(\vx)$. 

  For every other point $\vy \in \F_q^m$, consider the line $\ell_{\vx,\vy}$ passing through $\vx$ and $\vy$. Find the list of all polynomials $P(t)$ that agree with $f$ on this line on at least $\poly(\epsilon)$-fraction of places. Among those, if there is a \emph{unique} $P$ that takes value $f(\vx)$ on $\vx$, set $f_\corr^{(\vx)}(\vy) = P(\vy)$ and set it to $\bot$ (or a random value in $\F_q$) otherwise. 
\end{quote}
In other words, the line joining $\vx,\vy$ is being used to `correct' the value at $y$ but we are disambiguating between the potential possibilities on this line by the value those polynomials take at the point $\vx$. 

Having defined this corrected function $f^{(\vx)}_\corr$, it is not hard to show for with non-trivial probability over the choice of the advice point $\vx$, we have that the corrected function $f^{(\vx)}_\corr$ has non-trivial $\poly(\eps)$ agreement with the function $f$. 

We now try to understand the behaviour of the corrected function on a random line $\ell$. Consider the plane $\pi$ containing the advice point $\vx$ and the line $\ell$. Using the bivariate LDT theorem, we can show that for a random plane $\pi$ a non-trivial fraction of the points $\vx$ in $\pi$, there exists a degree-$d$ polynomial $Q_\pi^{(\vx)}$ that has $\poly(\eps)$-agreement with $f$ on $\pi$ and $Q^{(\vx)}_\pi(\vx)=f(\vx)$. We use this coupled with the expansion properties of the lines-points-planes affine Grassmannian to show that for a random $\vx$ and a random line $\ell$, the corrected function $f^{(\vx)}_\corr$ (using advice $\vx$) agrees with the polynomial $Q^{(\vx)}_\pi$ overwhelmingly on the line where $\pi$ is the plane containing $\vx$ and $\ell$. This step requires a careful analysis by using spectral properties of various natural subgraphs of the Affine Grassmanian. We thus, have, $\Pr_{\vx,\ell}[ \delta(f^{(\vx)}_\corr|_\ell, Q^{(\vx)}_\pi(x,\ell)|_\ell) \leq \gamma]\geq 1-2\gamma$. This implies that for a random $\vx$, the corrected function $f^{(\vx)}_\corr$ passes the low-degree test overwhelmingly (in fact with probability $1-3\gamma$). Hence, by \cref{thm:main-lowerror-ldt}, there exists a degree-$d$ $m$-variate polynomial $Q^{(x)}$ such that $\delta(f^{(\vx)}_\corr,Q^{(\vx)}) \leq O(\gamma)$. Combining this with the fact that $f^{(\vx)}_\corr$ has $\poly(\eps)$ agreement with the function $f$, we obtain that $f$ has $\poly(\eps)-O(\gamma)=\poly(\eps)$ agreement with the polynomial $Q^{(\vx)}$ provided $\gamma \ll \poly(\eps)$. This completes the overview of the bootstrapping argument. 

\subsection{Organisation of the paper}

We begin with notation and preliminaries in \cref{sec:prelims} and then proof the main technical lemma (\cref{lem:pencil-lemma-intro}) in \cref{sec: PS on pencil of lines}. We then proceed to the analysis of the bivariate low-degree tests in \cref{sec:ldt-high-error}. Finally, we bootstrap the bivariate low-degree tests to $m$-variate low-degree tests in \cref{sec:bootstrapping}.

\section{Preliminaries}
\label{sec:prelims}

\paragraph{Notation:}

\begin{enumerate}\itemsep0pt
\item For a polynomial $f(x_1,\ldots, x_m)$ and a vector $\va = (a_1,\ldots, a_m) \in \N^m$, we use the term $\ve$-weighted degree to denote the largest value of $a_1 e_1 + \cdots + a_m e_m$ among monomials $x_1^{e_1}\cdots x_m^{e_m}$ in the support of $f$. 
\item For two functions $f,g:\F^m \to \F$, we define $\agree(f,g) = \Pr_{\va\in\F^m}[f(\va) = g(\va)]$, the fraction of points that the tables $f$ and $g$ agree on. Similarly, we will use $\delta(f,g)$ to denote $\Pr_{\va\in\F^m}[f(\va)\neq g(\va)]$, the fraction of points that the tables $f$ and $g$ disagree on.
\end{enumerate}

\begin{lemma}[Averaging argument] 
  \label{lem:averaging}
  Suppose $x_1,\ldots, x_n \in [0,1]$ such that $\sum x_i \geq \mu n$. Then,
  \begin{enumerate}
  \item \label{lem:averaging-item1} if $S = \setdef{i\in [n] }{x_i \geq \sfrac{\mu}{2}}$, then $|S| \geq \sfrac{\mu n}{2}$, and
  \item \label{lem:averaging-item2} $\sum_{i\in S} x_i \geq \sfrac{\mu n}{2}$. 
  \end{enumerate}
\end{lemma}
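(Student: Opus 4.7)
The plan is to prove both items by a direct complement-counting argument, using only the constraint $x_i \in [0,1]$ and the hypothesis $\sum_i x_i \geq \mu n$. The proof is essentially two lines, but I will lay out the logic cleanly.

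For \cref{lem:averaging-item2}, I would start by splitting the sum according to $S$ and its complement. Since every $i \notin S$ satisfies $x_i < \mu/2$, I get
\[
\sum_{i \notin S} x_i \;<\; (n - |S|) \cdot \frac{\mu}{2} \;\leq\; \frac{\mu n}{2}.
\]
Combining this with the hypothesis $\sum_{i} x_i \geq \mu n$ and the identity $\sum_{i \in S} x_i = \sum_i x_i - \sum_{i \notin S} x_i$ immediately yields $\sum_{i \in S} x_i > \mu n - \mu n/2 = \mu n/2$, which is the desired bound in item 2.

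For \cref{lem:averaging-item1}, I would use the bound from item 2 together with the trivial upper bound $x_i \leq 1$ for each $i \in S$. Specifically,
\[
\frac{\mu n}{2} \;\leq\; \sum_{i \in S} x_i \;\leq\; |S|,
\]
so $|S| \geq \mu n/2$, as claimed.

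There is really no obstacle here — this is a completely standard averaging argument, and the only thing to be careful about is the order of presentation: item 2 is the cleaner statement and implies item 1 via the pointwise bound $x_i \leq 1$, whereas arguing the other direction would require re-bounding the contribution from $S$. I would therefore prove item 2 first and derive item 1 as a one-line consequence.
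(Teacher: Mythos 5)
Your proof is correct. The only real difference from the paper is the order of the argument: you prove item 2 first, using the crude bound $(n-|S|)\cdot\sfrac{\mu}{2} \leq \sfrac{\mu n}{2}$, and then derive item 1 as an immediate consequence via $x_i \leq 1$. The paper instead proves item 1 directly by writing $\mu n \leq \sum_i x_i \leq \sfrac{\mu}{2}(n-|S|) + |S|$, and then proves item 2 by a chain that implicitly reuses item 1 (the step $(n-|S|)\cdot\sfrac{\mu}{2} \leq (1-\sfrac{\mu}{2})\cdot\sfrac{\mu}{2}\cdot n$ requires $|S|\geq\sfrac{\mu n}{2}$). Your ordering is a shade cleaner since item 2 is self-contained and item 1 falls out in one line; the paper's route is equally valid but carries a hidden dependency between the two items. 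One cosmetic point: your strict inequality $\sum_{i\notin S} x_i < (n-|S|)\cdot\sfrac{\mu}{2}$ should be $\leq$ to cover the boundary case $S=[n]$, but this has no effect on the conclusion.
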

\begin{proof}
  For \cref{lem:averaging-item1}, note that
  \begin{align*}
    \mu n \leq \sum_i x_i & \leq \sfrac{\mu}{2} \cdot (n - |S|) + |S|\cdot 1 \leq \sfrac{\mu n}{2} + |S|\\
    \implies \sfrac{\mu n}{2} & \leq |S|.
  \end{align*}

  For \cref{lem:averaging-item2},
  \begin{align*}
    \sum_{i\in S} x_i & \geq \mu n - \sum_{i\notin S} x_i \geq \mu n - (n - |S|) \cdot \sfrac{\mu}{2}\\
    & \geq \mu n - (1 - \sfrac{\mu}{2}) \cdot \sfrac{\mu}{2} \cdot n \geq \sfrac{\mu n}{2}.\qedhere
  \end{align*}
\end{proof}

\subsection*{Polynomial identity lemma}

\begin{lemma}[\cite{Ore1922,DemilloL1978,Schwartz1980,Zippel1979}]\label{lem: SZ}
Let $P(\vx)$ be a non-zero polynomial of total degree at most $D$ with coefficients over a field $\F$ and let $S$ be an arbitrary subset of $\F$. Then, the number of zeroes of $P$ on the product set $S \times S \times \cdots \times S$ is at most $D|S|^{m-1}$. 
\end{lemma}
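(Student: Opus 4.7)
The plan is to prove this classical statement by induction on the number of variables $m$, bounding the set of zeroes via a one-step peeling argument on the leading variable. The base case $m=1$ is simply the fact that a non-zero univariate polynomial of degree at most $D$ has at most $D$ roots in any field, so at most $D = D\cdot|S|^0$ roots in $S$.

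For the inductive step, assuming the statement for $(m-1)$-variate polynomials, I would write
\[
P(x_1,\ldots,x_m) \;=\; \sum_{i=0}^{k} x_1^{i}\,P_i(x_2,\ldots,x_m),
\]
where $P_k \not\equiv 0$ and $k \leq D$; note that $P_k$ has total degree at most $D-k$. For each tuple $\va = (a_2,\ldots,a_m) \in S^{m-1}$, I would split into two cases. In the \emph{generic} case $P_k(\va) \neq 0$, the univariate specialization $P(x_1,\va)$ has degree exactly $k$, hence at most $k$ zeroes in $S$; summed over all at most $|S|^{m-1}$ such tuples, this contributes at most $k\cdot|S|^{m-1}$ zeroes. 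In the \emph{degenerate} case $P_k(\va) = 0$, I trivially bound the number of $a_1 \in S$ with $P(a_1,\va) = 0$ by $|S|$; by the inductive hypothesis applied to $P_k$, the number of degenerate tuples is at most $(D-k)|S|^{m-2}$, contributing at most $(D-k)|S|^{m-1}$ zeroes.

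Adding the two contributions gives $k\cdot|S|^{m-1} + (D-k)\cdot|S|^{m-1} = D\cdot|S|^{m-1}$, completing the induction. No step here poses any real obstacle: the argument is essentially forced once one chooses to peel off the variable $x_1$ and isolate the leading coefficient $P_k$. The only minor point worth noting is the mild care needed to verify that $P_k$ is indeed a non-zero polynomial of total degree at most $D-k$ so that the inductive hypothesis applies.
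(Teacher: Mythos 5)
Your proof is correct and is the standard inductive argument for this classical lemma; note that the paper itself does not prove \cref{lem: SZ}, but cites it to Ore, DeMillo--Lipton, Schwartz, and Zippel, so there is no in-paper proof to compare against. Your peeling-off-the-leading-variable decomposition, with the split into the generic case ($P_k(\va)\neq 0$, at most $k$ roots per fiber) and the degenerate case ($P_k(\va)=0$, bounded by the inductive hypothesis applied to $P_k$), is exactly the textbook proof and closes cleanly to the bound $D|S|^{m-1}$.
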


\subsection*{Hasse Derivatives and Properties}
Throughout the paper, we use the notion of Hasse derivatives of polynomials and some of their basic properties. We start with the definition. 
\begin{definition}\label{defn:hasse-derivatives}
Let $\F$ be any field and $A(\vx) \in \F[\vx]$ be an $m$ variate polynomial with coefficients in $\F$. Then, the Hasse derivative of $A(\vx)$ with respect to a monomial $\vx^{\ve}$, denoted by $\hpartial_{\vx^{\ve}}(A)$, is defined as the coefficient of the monomial $\vz^{\ve}$ when viewing $A(\vs + \vz)$ as a polynomial in $\vz$ variables with coefficients in the ring $\F[\vx]$.  
\end{definition}
Hasse derivatives are an extremely useful and recurrent theme in computer science with a variety of applications. For our proofs in the paper, we rely on the following properties. 
\begin{proposition}\label{prop:hasse-derivatives-properties}
Let $\F$ be any field. Then, the following are true. 
\begin{itemize}
\item For every $d \in \N$, $\hpartial_x(x^d) = dx^{d-1}$. 
\item For every pair of polynomials $A, B \in \F[x]$, $\hpartial(AB) = \hpartial(A)\cdot B + B \cdot \hpartial(A)$. 
\item For any polynomial $A \in \F[x]$ of degree at least one in $x$, and a finite field $\F$ of characteristic $p$, $\hpartial_x(A)$ is identically zero if and only if there exists a polynomial $\tilde{A} \in \F[x]$ such that ${\tilde A}^{p} = A$.

\end{itemize}
\end{proposition}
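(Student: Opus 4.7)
All three items follow directly from unpacking \cref{defn:hasse-derivatives}, namely that $\hpartial_{\vx^{\ve}}(A)$ is, up to the convention for $\vs$ (which I read as $\vx$), the coefficient of $\vz^{\ve}$ in the formal expansion of $A(\vx + \vz)$.

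For the first bullet, I would simply apply the binomial theorem: $(x+z)^d = \sum_{k=0}^d \binom{d}{k} x^{d-k} z^k$, whose coefficient of $z^1$ is $d\, x^{d-1}$. For the second bullet (the Leibniz rule, which I will write as $\hpartial(AB) = \hpartial(A)\cdot B + A\cdot \hpartial(B)$, absorbing the apparent typo in the statement), I would write
\[
A(x+z) \;=\; A(x) + z\cdot \hpartial_x(A)(x) + z^2 R_A(x,z), \qquad B(x+z) \;=\; B(x) + z\cdot \hpartial_x(B)(x) + z^2 R_B(x,z),
\]
multiply, and read off the coefficient of $z^1$ in $(AB)(x+z) = A(x+z)\,B(x+z)$.

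The third bullet is the only item with genuine content, and it is where I would spend the effort. Writing $A(x) = \sum_{i=0}^{n} a_i x^i$, the first bullet together with linearity gives
\[
\hpartial_x(A)(x) \;=\; \sum_{i=1}^{n} i\, a_i\, x^{i-1}.
\]
Hence $\hpartial_x(A)\equiv 0$ if and only if $i\cdot a_i = 0$ in $\F$ for every $i\ge 1$, which in characteristic $p$ is equivalent to saying that $a_i=0$ whenever $p\nmid i$. Therefore vanishing of the Hasse derivative is equivalent to $A(x) = \sum_{k\ge 0} a_{kp}\, x^{kp}$. To convert this into the stated form $A = \tilde A^{p}$, I would invoke surjectivity of the Frobenius map $y\mapsto y^p$ on the finite field $\F$ (this is where the finiteness hypothesis is essential, since on a non-perfect field the conclusion can fail): for each nonzero $a_{kp}$ choose $b_k\in \F$ with $b_k^p = a_{kp}$ and set $\tilde A(x) := \sum_k b_k x^k$. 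Using the identity $(u+v)^p = u^p + v^p$ in characteristic $p$, one gets $\tilde A(x)^p = \sum_k b_k^p\, x^{kp} = A(x)$.

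For the reverse implication, assuming $A = \tilde A^p$, the cleanest route is to expand $A(x+z) = \tilde A(x+z)^p$ and again use $(u+v)^p = u^p + v^p$ applied monomial-by-monomial inside $\tilde A$, giving $\tilde A(x+z)^p = \sum_k b_k^p (x+z)^{kp} = \sum_k b_k^p (x^p + z^p)^k$, which lies in $\F[x^p, z^p]$. In particular the coefficient of $z^1$ is zero, so $\hpartial_x(A)\equiv 0$. The main delicate point in the whole proof is the Frobenius-surjectivity step in the forward direction of the third bullet; everything else is a direct coefficient extraction from the defining expansion.
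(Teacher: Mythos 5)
The paper states \cref{prop:hasse-derivatives-properties} without proof, treating these facts as standard, so there is no in-paper argument to compare against. Your proof is correct: the first two bullets are indeed immediate coefficient extractions from the defining expansion, and for the third you correctly identify the only substantive point, namely that the coefficient condition ``$a_i = 0$ whenever $p\nmid i$'' is upgraded to the form $A = \tilde A^p$ via surjectivity of the Frobenius map on $\F$ (i.e.\ the fact that finite fields are perfect), with the reverse direction following from the freshman's dream $(x+z)^{kp} = (x^p+z^p)^k$. You also correctly absorb two typos in the paper's text: the definition's ``$A(\vs+\vz)$'' should read ``$A(\vx+\vz)$'', and the Leibniz rule should read $\hpartial(AB) = \hpartial(A)\cdot B + A\cdot\hpartial(B)$.
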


\subsection*{Discriminant}
\begin{definition}[Resultant and Discriminant]\label{defn:resultant}
Let $\F$ be any field and $A, B \in \F[x]$ be non-zero univariate polynomials in $x$ of degree $a, b$ respectively. Let $A_0, A_1, \ldots, A_a, B_0, B_1, \ldots, B_b$ be elements of $\F$ such that $A(x) = \sum_{i = 0}^a A_ix^i$ and $B(x) = \sum_{j = 0}^b B_j x^j$. Then, the \emph{Sylvester} matrix of $A$ and $B$ defined as follows. 
\[\emph{Sylvester}(A, B) = 
\begin{bmatrix}
        A_0  & A_1    & \dots  &        & A_{a} &        &       \\
            & \ddots & \ddots &        & \ddots & \ddots &       \\
            &        & A_0    & A_1    &        &  \dots & A_{a} \\
      B_{0}  & \dots  &        & B_{b}  &        &        &       \\
            &B_{0}   & \dots  &        & B_{b}  &        &       \\
            &        & \ddots & \ddots &        & \ddots &       \\
            &        &        & B_{0}  & \dots &        &B_{b} 
    \end{bmatrix}
\]
And, the resultant of $A$ and $B$ is defined as the determinant of the \emph{Sylvester} matrix of $A$ and $B$. Moreover, if $B = \hpartial_x(A)$, then the resultant of $A$ and $B$ is referred to as the \emph{Discriminant} of $A$ and $B$ and denoted by $\disc_x(A, B)$.  
\end{definition}

\begin{lemma}\label{lem:resultant-vs-gcd}
Let $\F$ be any field and $A, B \in \F[x]$ be non-zero univariate polynomials in $x$. Then, $A, B$ have a non-trivial GCD (the degree of GCD is at least one in $x$) if and only if their resultant is zero.   
\end{lemma}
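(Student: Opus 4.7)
The plan is to prove both directions through the standard translation between the Sylvester matrix and the module of syzygies of $(A,B)$. Specifically, I would first observe that the rows of $\mathrm{Sylvester}(A,B)$ are precisely the coefficient vectors, inside the $(a+b)$-dimensional space of polynomials of degree less than $a+b$, of the shifted polynomials $A, xA, x^2A, \ldots, x^{b-1}A$ (the first $b$ rows) and $B, xB, \ldots, x^{a-1}B$ (the next $a$ rows). Consequently, a non-trivial left null vector of this square matrix is equivalent to the existence of polynomials $U,V \in \F[x]$, not both zero, with $\deg U < b$ and $\deg V < a$, satisfying $AU + BV = 0$. The vanishing of the resultant (the determinant) is in turn equivalent to the existence of such a non-trivial kernel vector.

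For the forward direction, suppose $h := \gcd(A,B)$ has $\deg h \ge 1$. Write $A = hA'$ and $B = hB'$, and set $U := B'$ and $V := -A'$. Then $AU + BV = hA'B' - hB'A' = 0$, and one checks $\deg U = b - \deg h < b$ and $\deg V = a - \deg h < a$; since $A,B \ne 0$, the pair $(U,V)$ is non-zero. The translation above then furnishes a non-trivial left null vector of $\mathrm{Sylvester}(A,B)$, so its determinant, namely the resultant, is zero.

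For the reverse direction, assume the resultant is zero. The translation yields $(U,V)$, not both zero, with $\deg U < b$, $\deg V < a$, and $AU = -BV$. Suppose for contradiction that $\gcd(A,B) = 1$. Then $A \mid BV$ together with $\gcd(A,B)=1$ gives $A \mid V$; but $\deg V < a = \deg A$ forces $V = 0$. Since $A \ne 0$, this in turn forces $U = 0$, contradicting the non-triviality of $(U,V)$. Hence $\gcd(A,B)$ has degree at least one. The only step requiring genuine care is the initial identification of the Sylvester rows with shifted-coefficient vectors of $A$ and $B$; this is a straightforward bookkeeping verification and I do not expect it to be an obstacle. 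Notably, the argument is characteristic-free, so the lemma holds over an arbitrary field as stated.
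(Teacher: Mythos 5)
Your proof is correct. The paper states \cref{lem:resultant-vs-gcd} without proof, treating it as a classical preliminary fact, so there is no in-paper argument to compare against. What you have written is the standard textbook argument: identify the rows of $\mathrm{Sylvester}(A,B)$ with the coefficient vectors of $A, xA,\ldots,x^{b-1}A$ and $B,xB,\ldots,x^{a-1}B$ inside the $(a+b)$-dimensional space of polynomials of degree less than $a+b$, so that singularity of the matrix is equivalent to the existence of a non-trivial relation $AU+BV=0$ with $\deg U<b$, $\deg V<a$. The forward direction via $U=B/h$, $V=-A/h$ is right, as is the degree-forcing argument in the reverse direction using that $\F[x]$ is a UFD. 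One small remark you could add for completeness: when $\deg A=0$ or $\deg B=0$ the statement is vacuous on both sides (the gcd is automatically a unit and the Sylvester matrix is a nonzero scalar multiple of the identity), and your null-vector correspondence degenerates there, so it is worth flagging that case explicitly even though nothing goes wrong.
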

\begin{definition}
Let $\F$ be any field. A polynomial $A(x) \in \F[x]$ is said to be square-free if there does not exist a polynomial $B \in \F$ of degree at least one such that $B^2$ divides $A$. 
 \end{definition}

\begin{lemma}\label{lem:discriminant-vs-gcd}
Let $\F$ be any field and $A \in \F[x]$ be a non-zero univariate polynomial such that $\hpartial_x(A)$ is non-zero. Then, $A$ is square-free if and only if the discriminant of $A$ is non-zero.
\end{lemma}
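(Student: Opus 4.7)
The plan is to combine the definition of the discriminant as a resultant with \cref{lem:resultant-vs-gcd} to reduce the statement to a claim about the GCD of $A$ and $\hpartial_x(A)$. Concretely, by \cref{defn:resultant} the discriminant of $A$ equals $\Res(A, \hpartial_x(A))$, and by \cref{lem:resultant-vs-gcd} this resultant vanishes if and only if $\gcd(A, \hpartial_x(A))$ has degree at least one. Since $\hpartial_x(A) \neq 0$ by hypothesis, the resultant is well-defined, and so it suffices to establish that $A$ is square-free if and only if $\gcd(A, \hpartial_x(A)) = 1$.

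For the direction ``not square-free implies non-trivial GCD'', I would write $A = B^2 C$ with $\deg B \geq 1$ and apply the Hasse derivative product rule from \cref{prop:hasse-derivatives-properties}: expanding $\hpartial_x(A) = \hpartial_x(B^2) \cdot C + B^2 \cdot \hpartial_x(C)$ and using $\hpartial_x(B^2) = 2B \cdot \hpartial_x(B)$ (which itself follows from the product rule), one sees that $B$ divides $\hpartial_x(A)$. Combined with $B \mid A$ and $\deg B \geq 1$, this shows that $\gcd(A, \hpartial_x(A))$ is non-trivial, and hence the discriminant vanishes.

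For the converse, suppose $A$ is square-free and factor $A = \prod_{i=1}^{r} P_i$ into pairwise distinct $\F$-irreducibles. Assume for contradiction that some $P_k$ divides both $A$ and $\hpartial_x(A)$. Expanding $\hpartial_x(A) = \sum_{i} \hpartial_x(P_i) \cdot \prod_{j \neq i} P_j$ via the product rule and reducing modulo $P_k$ kills every term with $i \neq k$ (since $P_k$ appears in the product $\prod_{j \neq i} P_j$), so $P_k$ must divide $\hpartial_x(P_k) \cdot \prod_{j \neq k} P_j$. Because the $P_i$ are distinct irreducibles, $P_k$ is coprime to $\prod_{j \neq k} P_j$, forcing $P_k \mid \hpartial_x(P_k)$.

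The main obstacle is to rule out this last divisibility. If $\hpartial_x(P_k) \neq 0$, then $\deg \hpartial_x(P_k) < \deg P_k$ contradicts $P_k \mid \hpartial_x(P_k)$ on degree grounds. The delicate case is $\hpartial_x(P_k) = 0$; here I would invoke the third item of \cref{prop:hasse-derivatives-properties}, which over a finite field of characteristic $p$ forces $P_k = \tilde{P}^{p}$ for some $\tilde{P} \in \F[x]$, contradicting the irreducibility of $P_k$ (as $\tilde{P}$ would have to be a constant, making $P_k$ a constant). This closes the contradiction and completes the proof.
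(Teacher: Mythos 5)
The paper states \cref{lem:discriminant-vs-gcd} as a bare fact in the preliminaries without supplying a proof, so there is nothing in-text to compare against. Your argument is the standard one and it is correct: reduce via \cref{defn:resultant} and \cref{lem:resultant-vs-gcd} to the claim that $A$ is square-free iff $\gcd(A,\hpartial_x A)$ is trivial, prove the ``repeated factor $\Rightarrow$ nontrivial GCD'' direction by the product rule (noting only that $B\mid \hpartial_x(B^2)$, which holds in every characteristic including $2$), and prove the converse by reducing $\hpartial_x(A)=\sum_i \hpartial_x(P_i)\prod_{j\neq i}P_j$ modulo the offending irreducible $P_k$ and ruling out $P_k\mid\hpartial_x(P_k)$ by degree considerations when $\hpartial_x(P_k)\neq 0$.

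One caveat worth flagging, though it concerns the lemma statement more than your proof. Your handling of the case $\hpartial_x(P_k)=0$ invokes the third item of \cref{prop:hasse-derivatives-properties}, which the paper states only for \emph{finite} fields, whereas \cref{lem:discriminant-vs-gcd} claims ``any field.'' This mismatch is not yours to fix: the lemma as literally stated is false over non-perfect fields of positive characteristic. For example, over $\F_p(t)$ the polynomial $A=(x^p-t)(x+1)$ is square-free, satisfies $\hpartial_x A = x^p - t \neq 0$, yet $\gcd(A,\hpartial_x A)=x^p-t$ is nontrivial, so $\disc_x(A)=0$. Your proof correctly localizes exactly where the general claim breaks, and since the paper only applies the lemma over $\F_q$ (e.g., inside \cref{lem:interpolation-step} via \cref{lem:minimal-vanishing-polynomial-is-squarefree}), the restriction to finite/perfect fields is harmless in context. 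You could strengthen the write-up by stating explicitly that the argument covers characteristic $0$ (where $\hpartial_x(P_k)=0$ is impossible for $\deg P_k\geq 1$) and perfect fields of positive characteristic, which is what the paper actually needs.
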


\subsection{Structure of minimal interpolating polynomials}
In this section, we prove the following simple lemma that will be crucial to our analysis of the bivariate test. The lemma essentially lets us assume some structural properties on the interpolating polynomials without loss of generality, and these properties turn out to be important for our analysis, especially when we are working over fields of small characteristic. 

\begin{lemma}
  \label{lem:minimal-vanishing-polynomial-is-squarefree}
  Let $\F$ be any field and $m \in \N$ be an integer, and let $S \subseteq \F^m$ be a set of points. Consider the space of polynomial $\mathcal{Q}_{S,x_1} = \setdef{Q(x_1,\ldots, x_m)}{Q(\va) = 0\text{ for all $\va \in S$ and $\hpartial_{x_1}(Q)\neq 0$}}$. For any vector $\vw \in \N^{m}$, the polynomial $Q \in \mathcal{Q}_{S,x_1}$ of lowest $\vw$-weighted degree satisfies $\disc_{x_1}(Q) \neq 0$ and is hence square-free.
\end{lemma}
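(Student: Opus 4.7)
The plan is to show that among all polynomials of lowest $\vw$-weighted degree in $\mathcal{Q}_{S,x_1}$, one may select a $Q$ that is $x_1$-square-free; combined with \cref{lem:discriminant-vs-gcd} applied over $\F(x_2,\ldots,x_m)[x_1]$, this immediately yields $\disc_{x_1}(Q)\neq 0$. Concretely, I would start with any minimizer $Q$ and factor it in the UFD $\F[\vx]$ as $Q = C(x_2,\ldots,x_m)\cdot\prod_i P_i^{e_i}$, where $C$ collects all irreducible factors of $x_1$-degree zero and the $P_i$ are the distinct irreducibles of positive $x_1$-degree. The candidate replacement is obtained by collapsing each exponent to $1$: $Q_{\text{red}} := C\cdot\prod_i P_i$, which is $x_1$-square-free by construction.

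Three things then need verification. First, $Q_{\text{red}}$ vanishes on $S$: any $\va\in S$ forces one of the factors $C$ or some $P_k$ to vanish at $\va$, and that same factor persists in $Q_{\text{red}}$. Second, by additivity of $\vw$-weighted degree, $\vw\text{-deg}(Q_{\text{red}}) = \vw\text{-deg}(Q) - \sum_i (e_i-1)\cdot\vw\text{-deg}(P_i) \leq \vw\text{-deg}(Q)$, and by minimality of $Q$ the inequality must be an equality, so $Q_{\text{red}}$ is also a minimizer. Third, $\hpartial_{x_1}(Q_{\text{red}})\neq 0$, which is the technical heart of the proof.

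This third point is the main obstacle, because in positive characteristic the Hasse derivative can vanish for non-trivial reasons. Since $C$ does not involve $x_1$, it suffices to show $\hpartial_{x_1}(\prod_i P_i)\neq 0$. Suppose for contradiction that the Leibniz expansion $\sum_i \hpartial_{x_1}(P_i)\cdot\prod_{j\neq i} P_j$ vanishes. Reducing this identity modulo $P_k$ and using that distinct irreducibles in the UFD $\F[\vx]$ are pairwise coprime, every summand with $i\neq k$ dies modulo $P_k$, leaving $P_k\mid \hpartial_{x_1}(P_k)\cdot\prod_{j\neq k}P_j$, whence $P_k \mid \hpartial_{x_1}(P_k)$. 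Since $\hpartial_{x_1}$ strictly decreases $x_1$-degree, this forces $\hpartial_{x_1}(P_k)=0$ for every $k$. But then the Leibniz expansion $\hpartial_{x_1}(Q) = C\cdot\sum_i e_i\cdot P_i^{e_i-1}\cdot \hpartial_{x_1}(P_i)\cdot\prod_{j\neq i}P_j^{e_j}$ also vanishes, contradicting $\hpartial_{x_1}(Q)\neq 0$. Thus $Q_{\text{red}}$ is a minimizer in $\mathcal{Q}_{S,x_1}$ that is $x_1$-square-free, and we take this as our $Q$.
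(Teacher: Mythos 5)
Your proof is correct and follows the same core strategy as the paper's: contract the repeated factors of the minimizer $Q$ and appeal to minimality. The paper argues by contradiction one repeated factor at a time (assume $Q = P^\ell R$ with $\ell\geq 2$ and $\gcd(P,R)=1$, set $\tilde Q = PR$, and verify $\hpartial_{x_1}(\tilde Q)\neq 0$ via a short case analysis on which of $\hpartial_{x_1}(P)$, $\hpartial_{x_1}(R)$ vanish), while you collapse all exponents at once and show $\hpartial_{x_1}(\prod_i P_i)\neq 0$ by reducing the Leibniz expansion modulo each $P_k$. Your reduction-mod-$P_k$ step is a slightly cleaner, more uniform packaging of the same UFD-plus-degree-drop argument, so the two are essentially the same proof. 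One small structural difference: the paper's argument kills arbitrary repeated factors (including those of $x_1$-degree zero), hence proves full square-freeness, whereas you leave $C$ untouched and obtain only $x_1$-square-freeness; this is immaterial for $\disc_{x_1}$, since the factors inside $C$ are units in $\F(x_2,\ldots,x_m)[x_1]$, but it does mean the literal ``hence square-free'' needs the paper's extra swipe at $C$.

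One subtlety worth your attention, shared equally with the paper's own write-up. To conclude $\disc_{x_1}(Q)\neq 0$ one needs \emph{every} $P_i$ to satisfy $\hpartial_{x_1}(P_i)\neq 0$, not merely some $P_k$. Your mod-$P_k$ argument rules out only the situation where \emph{all} the $P_k$ are $x_1$-inseparable (for then $\hpartial_{x_1}(Q)=0$). Over the imperfect field $K=\F(x_2,\ldots,x_m)$ one can have a single $x_1$-inseparable irreducible factor (such as $x_1^p - x_2$) alongside a separable one, giving $\hpartial_{x_1}(Q)\neq 0$ yet $\gcd_{x_1}(Q,\hpartial_{x_1}Q)$ nontrivial and $\disc_{x_1}(Q)=0$. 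This is precisely the direction of \cref{lem:discriminant-vs-gcd} that fails over non-perfect base fields, so invoking it over $K$ does not by itself close the gap. The paper's proof never addresses this step at all, so the omission is equally theirs; in the intended application it is harmless because $\deg_z A$ is far below the characteristic, which forbids inseparable factors, but the lemma as stated carries no such hypothesis, and it is worth recording that minimality alone does not obviously exclude an inseparable $P_i$.
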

\begin{proof}
  Let $Q \in \mathcal{Q}_{S,x_1}$ be the polynomial of smallest $\vw$-weighted degree. On the contrary, let us assume that $Q$ is \emph{not} square-free and say $Q = P^\ell \cdot R$ where $P$ is irreducible with $\ell \geq 2$ and $\gcd(P,R) = 1$. 

  Consider the polynomial $\tilde{Q} = P \cdot R$. Clearly $\tilde{Q}(\va) = 0$ for all $\va\in S$ as well, and if we can argue that $\hpartial_{x_0}(\tilde{Q}) \neq 0$, we have $\tilde{Q} \in \mathcal{Q}_{S,x_1}$ thus yielding a contradiction to the minimality of $Q$. 
  \begin{align*}
    \hpartial_{x_1}(Q) & = P^\ell \cdot \hpartial_{x_1}(R) + \ell P^{\ell-1} \cdot R \cdot \hpartial_{x_0}(P).\\
    \hpartial_{x_1}(\tilde{Q}) &= P \cdot \hpartial_{x_1}(R) + R \cdot \hpartial_{x_1}(P).
  \end{align*}
  Since we have $\hpartial_{x_1}(Q) \neq 0$, we have that at least one of $\hpartial_{x_1}(P), \hpartial_{x_1}(R)$ is non-zero. 
  If exactly one of $\hpartial_{x_1}(P), \hpartial_{x_1}(R)$ is non-zero, we immediately have that $\hpartial_{x_1}(\tilde{Q}) \neq 0$. Otherwise, since $P$ is irreducible and $\gcd(P,R) = 1$, we have that $P \nmid R \cdot \hpartial_{x_1}(P)$. Therefore, $P\nmid \hpartial_{x_1}(\tilde{Q})$ and this in particular forces $\hpartial_{x_1}(Q) \neq 0$ yielding our required contradiction. Thus $Q$ must have been square-free.
\end{proof}

\subsection{Power series roots}

The following lemma is a standard application of Newton iteration in multivariate polynomial rings and is an important ingredient of many of the multivariate factorization algorithms. We refer to \cite{Burgisser} for a proof.  
\begin{lemma}[Newton iteration for roots]
  \label{lem: taylor}
  Let $\F$ be a field and let $A(\vx, z) \in \F[\vx, z]$ be an $(m+1)$-variate polynomial and let $\alpha \in \F$ be a zero of multiplicity one of $A(\vzero, z)$, i.e., $A(\vzero, \alpha) = 0$ and $\hpartial_z(A(\vx, z))(\vzero, \alpha) = \hpartial_z(A(\vzero, z))(\alpha) \neq 0$. Then, for every $k \geq 1$, there is a polynomial $\Phi_k(\vx)$ such that the following are true.
  \begin{itemize}
  \item $A(\vx, \Phi_{k}(\vx))  \equiv 0 \mod \langle x_1, x_2, \ldots, x_m \rangle^{k+1}$, 
  \item $\alpha = \Phi_{k}(\vzero)  \equiv \Phi_{k}(\vx) \mod \langle x_1, x_2, \ldots, x_m \rangle$, 
  \item $\deg(\Phi_k) \leq k$.
  \end{itemize}
  Moreover, there is a unique polynomial that satisfies all these three properties. 
\end{lemma}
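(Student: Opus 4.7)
The plan is to build $\Phi_k$ inductively in $k$, adding one homogeneous component at a time in the spirit of Newton iteration. I would take $\Phi_0 := \alpha$ as the base case; it trivially has degree $0$ and satisfies $A(\vx, \Phi_0) \equiv A(\vzero, \alpha) = 0 \mod \langle \vx \rangle$. Given $\Phi_k$ with the three claimed properties, I would seek $\Phi_{k+1} = \Phi_k + G_{k+1}$ where $G_{k+1}$ is a homogeneous polynomial of degree exactly $k+1$ to be chosen.

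The key computation is the Hasse--Taylor expansion
\[
A(\vx,\, \Phi_k + G_{k+1}) \;=\; \sum_{j \geq 0} \hpartial_{z^j}(A)(\vx, \Phi_k) \cdot G_{k+1}^j.
\]
Since $G_{k+1} \in \langle \vx \rangle^{k+1}$, terms with $j \geq 2$ lie in $\langle \vx \rangle^{2(k+1)} \subseteq \langle \vx \rangle^{k+2}$ and can be discarded modulo $\langle \vx \rangle^{k+2}$. Because $\Phi_k - \alpha \in \langle \vx \rangle$, we also have $\hpartial_z(A)(\vx, \Phi_k) \equiv \beta \mod \langle \vx \rangle$, where $\beta := \hpartial_z(A)(\vzero, \alpha) \neq 0$ by hypothesis. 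Letting $H_k$ denote the homogeneous degree-$(k+1)$ component of $A(\vx, \Phi_k)$ (the lowest-degree part, since $A(\vx, \Phi_k) \in \langle \vx \rangle^{k+1}$), the congruence collapses to the purely homogeneous equation $H_k + \beta\, G_{k+1} = 0$, which determines $G_{k+1} = -H_k/\beta$ uniquely. A routine check then verifies that $\Phi_{k+1} = \Phi_k + G_{k+1}$ has the three required properties at level $k+1$.

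For uniqueness of $\Phi_k$, I would show by induction on $j$ (with $0 \leq j \leq k$) that any two candidates $\Phi, \Phi'$ must agree modulo $\langle \vx \rangle^{j+1}$. The base $j = 0$ is just $\Phi(\vzero) = \Phi'(\vzero) = \alpha$. For the inductive step, if $\Phi - \Phi' \in \langle \vx \rangle^{j+1}$ and $j+1 \leq k$, then the hypothesis gives $A(\vx, \Phi) \equiv A(\vx, \Phi') \equiv 0 \mod \langle \vx \rangle^{j+2}$; expanding $A(\vx, \Phi) - A(\vx, \Phi')$ via Hasse--Taylor around $\Phi'$, the same reasoning as above yields $\beta$ times the degree-$(j+1)$ part of $\Phi - \Phi'$ equal to zero, forcing that part to vanish. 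Iterating up to $j = k$ and using $\deg \Phi, \deg \Phi' \leq k$ then forces $\Phi = \Phi'$.

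The one technical subtlety --- and the reason for working with Hasse derivatives throughout rather than ordinary derivatives --- is that the classical Taylor expansion carries factorial prefactors that can vanish over fields of small positive characteristic, breaking the invertibility step. With Hasse derivatives, the expansion displayed above holds verbatim over any field, and the hypothesis $\hpartial_z(A)(\vzero, \alpha) \neq 0$ is exactly the invertibility condition needed to solve for $G_{k+1}$ at each Newton step. Beyond this point, the argument is routine bookkeeping of degrees and ideal memberships.
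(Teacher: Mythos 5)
The paper does not actually prove this lemma: it states it as a standard fact and cites \cite{Burgisser} for a proof, so there is no in-paper argument to compare your proposal against. Your write-up is a correct, self-contained proof of the cited statement, and it is the standard Newton-iteration argument one would find in the reference. Both the existence construction (solving for the degree-$(k+1)$ homogeneous piece $G_{k+1}$ at each step, using that $\hpartial_z(A)(\vzero,\alpha)$ is a nonzero unit) and the uniqueness argument (peeling off one homogeneous degree at a time and using the same invertibility to force each discrepancy to vanish, then using the degree bound $\deg\le k$ together with membership in $\langle\vx\rangle^{k+1}$) are correct. Your remark about Hasse derivatives versus ordinary derivatives is the right thing to flag: it is precisely what makes the argument characteristic-independent, which matters for this paper since the authors explicitly set up their interpolant to have $\hpartial_z(A)\ne 0$ even over small-characteristic fields. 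One tiny bookkeeping point worth making explicit in a final write-up: in the induction step you discard terms with $j\ge 2$ because $G_{k+1}^j\in\langle\vx\rangle^{j(k+1)}$ and $j(k+1)\ge 2(k+1)\ge k+2$, which uses $k\ge 0$; this is fine since the induction starts at $k=0$.
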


\section{Low-degree roots on restrictions to global low-degree roots}\label{sec: PS on pencil of lines}

We prove the following technical lemma (a more formal statement of \cref{lem:pencil-lemma-intro}) that is the main technical statement driving the analysis of our bivariate low-degree test. The lemma is true more generally (in higher dimensions) and we state and prove it in that way. 
\begin{lemma}\label{lem: decoding on good pencils}
  Let $\F$ be any field and let $A(\vx, z) \in \F[\vx, z]$ be a non-zero $(m+1)$-variate polynomial with $z$-degree at most $d_z$, $(1, 1, \ldots, 1, d)$ degree at most $D$ and let $\vb \in \F^m$ be a point such that the univariate polynomial $A(\vb, z) \in \F[z]$ has no repeated roots. Let $S \subseteq \F^m$ be a set of directions such that for every $\vu \in S$, there exists a univariate polynomial $P_{\vb, \vu}(t) \in \F[t]$ of degree at most $d$ that satisfies the identity 
  \[
  A(\vb + t\vu, P_{\vb, \vu}(t)) \equiv 0 \, .
  \]
  If $\left(|S| > d_zD|\F|^{m-1}\right)$, then there exists a polynomial $P(\vx) \in \F[\vx]$ of total degree at most $d$ such that 
  \[
  A(\vx, P(\vx)) \equiv 0 \, .
  \]
  
  Moreover, there is a set $S'\subseteq S$ of size at least $|S|/d_z$ such that for all $\vu \in S'$,  $P(\vx)$ when restricted to the line in direction $\vu$ through the point $\vb$ equals $P_{\vb, \vu}(t)$. 
\end{lemma}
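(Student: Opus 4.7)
The plan is to combine a pigeonhole argument on the constant terms of the local roots with two applications of the Newton iteration lemma (\cref{lem: taylor}). First, observe that for every $\vu \in S$, the scalar $P_{\vb,\vu}(0)$ is a root of the univariate polynomial $A(\vb, z) \in \F[z]$. Since $A(\vb, z)$ has $z$-degree at most $d_z$ and is square-free by hypothesis, it has at most $d_z$ distinct roots, so pigeonhole produces a subset $S' \subseteq S$ with $|S'| \geq |S|/d_z > D|\F|^{m-1}$ and a common value $\alpha \in \F$ such that $P_{\vb,\vu}(0) = \alpha$ for every $\vu \in S'$. The square-freeness hypothesis is crucial here: it promotes $\alpha$ to a \emph{simple} root of $A(\vb, z)$, ensuring $\hpartial_z(A)(\vb, \alpha) \neq 0$, which is the non-degeneracy condition required to run Newton iteration.

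Next, I would apply \cref{lem: taylor} to the shifted polynomial $\tilde A(\vy, z) := A(\vb + \vy, z)$ around the simple root $\alpha$ of $\tilde A(\vzero, z) = A(\vb, z)$, with $k = d$. This produces a unique polynomial $\Phi_d(\vy) \in \F[\vy]$ of total degree at most $d$ with $\Phi_d(\vzero) = \alpha$ and $\tilde A(\vy, \Phi_d(\vy)) \equiv 0 \pmod{\langle y_1, \ldots, y_m\rangle^{d+1}}$. The candidate global root of $A$ will be $P(\vx) := \Phi_d(\vx - \vb)$, which has total degree at most $d$. The key step—and the main obstacle of the proof—is to show that for every $\vu \in S'$, the local root $P_{\vb,\vu}(t)$ coincides with the line restriction $\psi_\vu(t) := \Phi_d(t\vu) = P(\vb + t\vu)$. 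For this, I would invoke \cref{lem: taylor} a second time, now in the univariate case on $\tilde A(t\vu, z) \in \F[t, z]$, which again has $\alpha$ as a simple root at $t = 0$. Both $P_{\vb,\vu}(t)$ and $\psi_\vu(t)$ have degree at most $d$ and constant term $\alpha$, and both satisfy $\tilde A(t\vu, \cdot) \equiv 0 \pmod{t^{d+1}}$: for $P_{\vb,\vu}$ because the identity holds exactly in $\F[t]$, and for $\psi_\vu$ by substituting $\vy = t\vu$ into the multivariate congruence. The uniqueness clause of \cref{lem: taylor} then forces $P_{\vb,\vu}(t) = \psi_\vu(t)$ for every $\vu \in S'$.

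Finally, set $B(\vx) := A(\vx, P(\vx))$. The $(1, \ldots, 1, d)$-weighted degree bound on $A$ combined with $\deg(P) \leq d$ gives $\deg(B) \leq D$. By the previous step, $B$ vanishes identically along the entire line $\{\vb + t\vu : t \in \F\}$ for every $\vu \in S'$. Shifting to $\tilde B(\vy) := B(\vb + \vy)$ and writing $\tilde B = \sum_{k=0}^D B_k$ as the sum of its homogeneous components, we get $\tilde B(t\vu) = \sum_{k=0}^D B_k(\vu)\,t^k \equiv 0$ in $t$ for every $\vu \in S'$, so each $B_k$ vanishes on all of $S'$. Since $\deg(B_k) \leq D$ and $|S'| > D|\F|^{m-1}$, \cref{lem: SZ} (applied with the product set $\F^m$) forces $B_k \equiv 0$ for every $k$, hence $B \equiv 0$ and $A(\vx, P(\vx)) \equiv 0$ as claimed. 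The set $S'$ from the pigeonhole step, of size at least $|S|/d_z$, witnesses the ``moreover'' clause, since $P_{\vb,\vu}(t) = P(\vb + t\vu)$ holds for every $\vu \in S'$ by construction.
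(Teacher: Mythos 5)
Your proposal is correct and mirrors the paper's proof almost step-for-step: pigeonhole on the constant terms to find a common $\alpha$ and a large subset $S'$, two applications of Newton iteration (one multivariate to build $\Phi_d$, one univariate for the uniqueness that forces $P_{\vb,\vu}(t) = \Phi_d(t\vu)$), and a final Schwartz--Zippel application to $B(\vx) = A(\vx, P(\vx))$. The only minor difference is at the end, where you decompose $\tilde B$ into homogeneous parts; the paper simply evaluates $B(t\vu) = 0$ at $t = 1$ to get $B(\vu) = 0$ directly, which is a shade more economical but equivalent.
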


\begin{proof}
  In its essence, the lemma above states that if for many different lines $L_{\vb, \vu}$ through a point $\vb$, the restriction $A(L_{\vb, \vu}, z) \in \F[t][z]$ of the polynomial $A(\vx, z)$ has a low-degree polynomial root $P_{\vb, \vu}(t)$, then the original unrestricted polynomial $A(\vx, z) \in \F[\vx][z]$ has a low-degree root $P(\vx)$. Moreover, we are given that the point $\vb$  is non-degenerate in the sense that the univariate polynomial $A(\vb, z)$ is square free. Qualitatively, the lemma is of flavor similar to the Hilbert's irreducibility theorem and the proof will be along similar lines but technically simpler and as it eventually turns out, quantitatively better for our eventually applications. \\

  For ease of notation, we will assume that $\vb = \vzero$ (by translating $A$ if necessary) and use $P_{\vu}$ to denote $P_{\vb, \vu}$. Let $\inangle{\vx}$ denote the ideal $\inangle{x_1,\ldots, x_m}$. 

  From the hypothesis of the lemma, we know that for every $\vu \in S$, there is a polynomial $P_{\vzero, \vu}(t)$ of degree at most $d$ such that $A(t\vu, P_{\vu}(t)) \equiv 0$, which implies that $A(\vzero,P_{\vu}(0)) = 0$. Therefore, if $\set{\alpha_1,\ldots, \alpha_\ell}$ is the multiset of roots over $\F$ for the polynomial $A(0,z)$, we must have that $P_\vu(0) = \alpha_i$ for some $i \in [\ell]$. As the $\deg(A(0,z))\leq d_z$, we have that $\ell \leq d_z$. Hence, there must exist some $i$ such that $\alpha_i = P_\vu(0)$ for at least $D \abs{\F}^{m-1}$ many $\vu \in S$. Let us refer to this $\alpha_i$ as just $\alpha$, and let $R = \setdef{\vu \in S}{P_\vu(0) = \alpha}$. Therefore, we have that $P_\vu(0) = \alpha$ for all $\vu \in R$, and also 
  \[
  A(t\vu, P_\vu(t)) = 0 \implies A(t\vu, P_\vu(t)) = 0 \bmod{\inangle{\vx}^{k+1}} \text{ for all $k \geq 0$}.
  \]

  On the other hand, since we are given that $A(\vzero,z)$ is square-free, the roots of this polynomial are distinct and hence $\hpartial_z(A(\vzero,z)) \neq 0$. Thus, by \cref{lem: taylor}, there exist polynomials $\Phi_k(\vx)$ for every $k \geq 0$ such that 
  \begin{align*}
    \Phi_k(\vzero) & = \alpha,\\
    A(\vx, \Phi_k(\vx)) & = 0 \bmod{\inangle{\vx}^{k+1}}.
  \end{align*}
  By making the substitution $x_i \mapsto t x_i$, the above equation transforms to $A(t\vx, \Phi_k(t\vx)) = 0 \bmod{t^k}$. In particular, we have
  \[
  A(t\vu, \Phi_k(t\vu)) = 0\bmod{t^k} \quad\text{for all $\vu\in R$}.
  \]
  Fix an arbitrary $\vu\in R$ and consider the polynomial $A'_\vu(t,z) := A(t\vu,z) \in \F[t,z]$. We know that $A'_\vu(0,\alpha) = 0$ is square-free and $\hpartial_z(A'_\vu(t,z))(0,\alpha) \neq 0$. Thus, by applying \cref{lem: taylor} on $A'_\vu(t,z)$ to lift from $\alpha$, there ought to be a unique polynomial $\Psi_\vu(t)$ of degree at most $d$ such that $\Psi_\vu(0) = \alpha$ and 
  \[
    A'_\vu(t,\Psi_{\vu}(t)) = A(t\vu, \Psi_\vu(t))  = 0 \bmod{t^{d+1}}.
  \]
  We already have two such candidates for $\Psi_\vu(t)$, namely the polynomials $P_\vu(t)$ and $\Phi_d(t\vu)$. Thus, by the uniqueness asserted by \cref{lem: taylor}, we have that 
  \[
  P_\vu(t)  = \Phi_d(t\vu) \text{ for all $\vu \in R$}. 
  \]

  To finish the proof, we wish to argue that $\Phi_d(\vx)$ is a true root of $A(\vx,z)$, i.e. it satisfies $A(\vx, \Phi_d(\vx)) = 0$ (even without $\bmod{\inangle{\vx}^{d+1}}$). To this end, consider the polynomial $B(\vx) = A(\vx, \Phi_d(\vx))$. Since $\deg(\Phi_d(\vx)) \leq d$ and $A$ has $(1,\ldots,1,d)$-weighted degree at most $D$, we have $\deg(B(\vx)) \leq D$. 
  
  For every $\vu \in R$, we have $\Phi_d(t\vu) = P_\vu(t)$ and hence $B(t\vu) = A(\vu, \Phi_d(t\vu)) = A(t\vu, P_\vu(t)) = 0$ as a polynomial in $t$, and thus $B(\vu) = 0$. This implies that the $m$-variate polynomial $B$ of degree at most $D$ has more than $D \cdot \abs{\F}^{m-1}$ zeros in $\F^m$. By the Polynomial Identity Lemma (\cref{lem: SZ}), conclude that $B(\vx) = A(\vx, \Phi_d(\vx))$ is the zero polynomial. 
\end{proof}

\section{The bivariate low-degree test}\label{sec:ldt-high-error}

In this section, prove \cref{thm:bivariate-high-agreement-LDT} which we recall for convenience.

\bivariateldtintro*

\begin{remark}[Version with a points table and a lines table] \label{rem:lines-table-version}
  Many results in the low-degree testing literature often consider a variant where we are given a `points table' $f:\F^m \rightarrow \F$, and a `lines table' $P:\mathcal{L}^{(m)} \rightarrow \F[x]^{\leq d}$ (where $\mathcal{L}^{(m)}$ denotes the set of lines in $\F^m$) that assigns a polynomial of degree at most $d$ for every line. The version above is a special case where $P$ is the \emph{canonical lines table} that assigns the best-fit degree $d$ polynomial on each line. The discussion in the section carries over in a straightforward manner to the more general setting of points and lines table, and we deal with the special case purely to avoid some additional notational clutter.
\end{remark}

Before we proceed with the proof of the above theorem, we will fix some notation. 
Let $f: \F^2 \rightarrow \F$ be provided as an oracle. Throughout this section, we will be using $\LDT_d$ to refer to the line-point test: 
\begin{itemize}\itemsep0pt
  \item Pick $x \in_R \F^2$ and a random line $\ell \ni x$. 
  \item Query the oracle on all points on the line $\ell$ and let $P_\ell$ be the best-fit degree $d$ polynomial.
  \item Accept if $P_\ell(x) = f(x)$. 
\end{itemize}
We will use $\epsilon_x$ to refer to $\Pr_{\ell \ni x}[P_\ell(x) = f(x)]$. \\

\noindent
To begin with, we will prove the following ``weaker'' statement for the low-degree test. 

\begin{theorem}
  \label{thm:LDT-low-agreement-poly-agreement}
  Suppose $\epsilon > 0$, and $d$ is a positive integer. Let $\F$ is a finite field with $q$ elements $q > C \cdot \sfrac{d}{\epsilon^7}$ for a large enough constant $C$. Suppose $f: \F^2 \rightarrow \F$ that passes $\LDT_d$ with probability at least $\epsilon$, that is
  \[
  \Pr_{a, \ell \ni a }[f(a) = P_\ell(a)] \geq \epsilon.
  \]
  Then, there is a polynomial $Q(x,y)$ of degree at most $d$ such that 
  \[
    \Pr_{a\in \F^2}[f(a) = Q(a)] \geq \Omega(\epsilon^4).
  \]
\end{theorem}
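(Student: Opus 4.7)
The plan is to implement the four-step outline from \cref{sec:overview-bivariate-LDT}, combining an algebraic interpolation, a pencil-through-a-point argument, and the novel main technical lemma \cref{lem: decoding on good pencils}.

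\textbf{Step 1 (interpolate an explainer).} Apply \cref{lem:averaging} to the hypothesis to obtain a set $T \subseteq \F_q^2$ of density at least $\epsilon/2$ on which $\epsilon_\va := \Pr_{\ell \ni \va}[P_\ell(\va) = f(\va)]$ is itself at least $\epsilon/2$. A linear-algebra dimension count then produces a non-zero trivariate polynomial $A(x,y,z)$ of $z$-degree at most $d_z$ and $(1,1,d)$-weighted degree at most $D$ (for parameters $d_z, D$ chosen as polynomials in $1/\epsilon$ and $d$) that vanishes on every triple $(a, b, f(a,b))$ for $(a,b)$ in a suitable structured subset $S \subseteq T$. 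Applying \cref{lem:minimal-vanishing-polynomial-is-squarefree} with weights $\vw = (1,1,d)$ and distinguished variable $z$ lets us additionally assume that $A$ is square-free in $z$, so $\disc_z(A) \in \F_q[x,y]$ is a non-zero polynomial of total degree $O(D d_z)$.

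\textbf{Step 2 (local roots on a pencil through a good point).} Call $\vb \in T$ \emph{good} if $\disc_z(A)(\vb) \neq 0$. Since the discriminant has degree $\ll q$, \cref{lem: SZ} shows most points of $\F_q^2$, and in particular many $\va \in T$, are good, so fix any such $\vb$. By a further averaging argument on the agreement count along lines through $\vb$, find a set $R$ of nonzero direction vectors $\vu$ of size $\Omega(\epsilon q^2)$ such that the line $\ell_{\vb,\vu}$ contains more than $D$ points of $T$ on which $f$ agrees with $P_{\ell_{\vb,\vu}}$. For each such $\vu$, the univariate polynomial $t \mapsto A(\vb + t\vu, P_{\ell_{\vb,\vu}}(t))$ has $t$-degree at most $D$---this is precisely what the $(1,1,d)$-weighted-degree bound on $A$ buys us---but vanishes at more than $D$ values of $t$, and hence is identically zero.

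\textbf{Step 3 (global root and agreement).} With $|R|$ above the threshold $d_z D q$ (arranged by the parameter choice together with the hypothesis $q > Cd/\epsilon^7$), apply \cref{lem: decoding on good pencils} with $m=2$ to obtain a bivariate polynomial $Q(x,y)$ of total degree at most $d$ satisfying $A(x,y,Q(x,y)) \equiv 0$, along with a subset $R' \subseteq R$ of size at least $|R|/d_z$ on which $P_{\ell_{\vb,\vu}}(t) = Q(\vb + t \vu)$ as polynomials in $t$. Since the lines $\ell_{\vb,\vu}$ for $\vu \in R'$ meet pairwise only at $\vb$ and each carries more than $D$ agreement points between $f$ and $Q$, summing over the distinct lines in $R'$ yields $\Omega(\epsilon^4 q^2)$ points on which $f$ and $Q$ agree, which is the desired conclusion.

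The main obstacle is the simultaneous balancing of $d_z$ and $D$ in Step 1, which must satisfy three competing constraints: the dimension count for the explainer interpolation, the degree bound on $\disc_z(A)$ (so that most points are good in Step 2), and the pencil threshold $|R| > d_z D q$ of \cref{lem: decoding on good pencils} in Step 3. The decisive move, departing from \cite{AroraS2003}, is to interpolate in the $(1,1,d)$-weighted-degree basis rather than the total-degree basis; this automatically forces $d_z \leq D/d$ and so pins $d_z$ purely in terms of $\epsilon$ (independent of $d$), which is what brings the field-size requirement down to $q = \Omega(d/\epsilon^{O(1)})$ rather than the $q = \Omega(d^{\Omega(1)}/\epsilon^{O(1)})$ regime of prior work.
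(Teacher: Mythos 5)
Your overall route—interpolate an explainer $A$, pass to a discriminant-non-vanishing pencil through a good point, invoke \cref{lem: decoding on good pencils}, and count agreement—is exactly the paper's approach, but there is a genuine quantitative gap in your Step 3. You take ``$>D$'' as the per-line agreement count and then multiply by the roughly $|R|/(d_z q) = \Omega(\epsilon^3 q)$ distinct lines in $R'$ (using $d_z = D/d$). With $D = O(d/\epsilon^2)$ this sum is $\Omega(\epsilon^3 q)\cdot D = \Omega(\epsilon q d)$; under the hypothesis $q > C d/\epsilon^7$ we have $d \ll \epsilon^3 q$, so $\epsilon q d \ll \epsilon^4 q^2$, and your count does not reach the claimed bound. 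The threshold $>D$ is exactly right for concluding that $B(t) := A(\vb+t\vu, P_\ell(t))\equiv 0$, but it is the wrong lower bound to carry into the final count: the lines you need carry $\Omega(\poly(\epsilon)\cdot q)$ agreement points with $S$, which is much larger than $D$. The paper obtains this via a bipartite points-versus-lines double-counting argument (the ``high-degree lines'' $L$ in the proof), built on \cref{lem:interpolation-step}'s guarantee that $\sum_{x\in S}\epsilon_x = \Omega(\epsilon^3 q^2)$, which produces a point $\vb$ such that an $\Omega(\epsilon^3)$-fraction of lines through it each meet $S$ (with $P_\ell = f$ there) in $\Omega(\epsilon^3 q)$ points. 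Without establishing something of this strength, the proposal as written does not prove the theorem.

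Two smaller points. First, in Step 2 the $>D$ vanishing points must lie in $S$ (where $A(a,b,f(a,b))=0$ is guaranteed), not merely in $T$; you need both $(a',b')\in S$ and $P_\ell(t)=f(a',b')$ to get $B(t)=0$. Second, the ``linear-algebra dimension count'' in Step 1 cannot by itself produce an $A$ vanishing on a set of density $\Omega(\epsilon^2)$, since $|S|=\Omega(\epsilon^2 q^2)$ vastly exceeds the number of coefficients available at $(1,1,d)$-degree $D=O(d/\epsilon^2)$. The interpolation in the paper is genuinely two-phase: interpolate $A$ to satisfy $A(a,y,P_{\ell_{x=a}}(y))\equiv 0$ on a small set $S_1$ of $r=O(d/\epsilon^4)$ columns (that is where the dimension count lives), and then propagate the vanishing to the dense structured grid $S$ via a degree argument row-by-row. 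Gesturing at a ``structured subset'' is not sufficient; this two-phase step, together with the care needed to keep $\hpartial_z A\neq 0$ over small characteristic before \cref{lem:minimal-vanishing-polynomial-is-squarefree} can be applied, is a substantive part of the proof.
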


Although the above theorem appears to yield a weaker agreement than claimed by \cref{thm:bivariate-high-agreement-LDT}, it would turn out that \cref{thm:LDT-low-agreement-poly-agreement} yields \cref{thm:bivariate-high-agreement-LDT} via standard reductions that we elaborate on towards the end of this section. Thus, we proceed towards proving \cref{thm:LDT-low-agreement-poly-agreement}. \\

As a first step, we show the existence of a low-degree trivariate polynomial that \emph{explains} that the given function passes the line point test with a constant probability. 

\begin{restatable}[Interpolation for LDT]{theorem}{interpolationstep}
  \label{lem:interpolation-step}
  There exist constant $c_1, c_2 \in \N$ such that  for every $\epsilon \in (0, 1]$, any $d \in \N$ and finite field $\F$ of size $q$ and characteristic $p$ with $q > c_1\cdot d/\epsilon^{c_2}$, the following is true.

  Let $f:\F^2 \rightarrow \F$ be an oracle that passes $\LDT_d$ with probability $\epsilon$. Then, there is a non-zero polynomial ${A}(x,y,z)$ and a subset $S \subseteq \F^2$ such that 
  \begin{itemize}\itemsep 0pt
    \item $\deg_{1,1,d}({A}) = O(\sfrac{d}{\epsilon^2})$ and $|S| = \Omega(\epsilon^2 q^2)$,
    \item $\sum_{x \in S} \epsilon_x = \Omega(\epsilon^3 \cdot q^2)$,
    \item for every $(a,b) \in S$, we have $\epsilon_{(a,b)} = \Pr_{\ell \ni (a,b)}[P_\ell(a,b) = f(a,b)] \geq \sfrac{\epsilon}{2}$,
    \item for every $(a,b)\in S$, we have ${A}(a,b,f(a,b)) = 0$.
    \item $\hpartial_z({A})$ and $\disc_z({A})$ are not identically zero.
  \end{itemize}
\end{restatable}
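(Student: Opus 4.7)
The plan is to identify a dense set $T$ of high-local-agreement points via averaging, then to interpolate a polynomial $A$ of the required weighted degree that vanishes on the graph of $f$ over a sub-sample of $T$, and finally to replace $A$ by the minimum-degree such polynomial to obtain the discriminant condition.

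First, applying \cref{lem:averaging} to the local agreements $\{\epsilon_x\}_{x \in \F_q^2}$, which have mean $\epsilon$, I obtain the set $T := \{x \in \F_q^2 : \epsilon_x \geq \epsilon/2\}$ with $|T| \geq \epsilon q^2/2 \geq \epsilon^2 q^2/2$ and $\sum_{x \in T} \epsilon_x \geq \epsilon q^2/2$. Every point in $T$ has local agreement at least $\epsilon/2$, giving the third bullet immediately. Once a size bound $|S| = \Omega(\epsilon^2 q^2)$ is established for some $S \subseteq T$, the sum bound in the second bullet follows automatically from $\epsilon_x \geq \epsilon/2$ on $S$.

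Next, let $V$ denote the space of polynomials $A(x,y,z)$ of $(1,1,d)$-weighted degree at most $D := Cd/\epsilon^2$ (with $C$ a sufficiently large constant), restricted to those whose support contains no monomial $z^k$ with $p \mid k$, where $p = \operatorname{char}(\F)$. As explained in the small-characteristic paragraph, this restriction forces $\hpartial_z(A) \not\equiv 0$ for every non-zero $A \in V$; since a positive fraction of $z$-exponents survive, $\dim V = \Theta(D^3/d) = \Theta(d^2/\epsilon^6)$. Pick $S \subseteq T$ with $|S| = \min(\lfloor \dim V/2 \rfloor, |T|)$; for appropriately chosen constants $c_1, c_2$ in the hypothesis $q > c_1 d/\epsilon^{c_2}$ (separately checking the cases $|T| \leq \dim V/2$ and $|T| > \dim V/2$), one verifies $|S| = \Omega(\epsilon^2 q^2)$. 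The linear system ``$A(x, f(x)) = 0$ for all $x \in S$'' imposes at most $|S| < \dim V$ conditions on the coefficients of $A$, so it admits a non-zero solution $A \in V$. Finally, among all non-zero $A \in V$ vanishing on $\{(x, f(x)) : x \in S\}$, pick one of minimum $(1,1,d)$-weighted degree. By \cref{lem:minimal-vanishing-polynomial-is-squarefree} applied with $x_1 = z$ and $\vw = (1,1,d)$, this $A$ is square-free in $z$, so in particular $\disc_z(A) \neq 0$; the remaining bullets are then immediate.

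The main obstacle is the quantitative balancing in the interpolation step: $|S|$ must be simultaneously $\Omega(\epsilon^2 q^2)$ (for the size bound in the conclusion) and strictly less than $\dim V = \Theta(d^2/\epsilon^6)$ (for a non-zero interpolant to exist). Verifying this dimension inequality in the relevant range of $q$ is what determines the allowed constants $c_1, c_2$. The bookkeeping of monomials surviving the $p$-divisibility restriction, and the case analysis depending on whether $|T|$ is above or below $\dim V/2$, are the technical details that remain.
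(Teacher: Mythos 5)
Your high-level outline correctly identifies the three ingredients that must cooperate---the averaging to find a dense set of $\epsilon/2$-locally-good points, the dimension counting for the interpolation space with the $p$-divisibility restriction, and the passage to a minimal-degree vanishing polynomial via \cref{lem:minimal-vanishing-polynomial-is-squarefree}---but the interpolation step at the heart of the argument cannot work as you have stated it.

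The difficulty is a dimension mismatch. The space $V$ of polynomials of $(1,1,d)$-weighted degree at most $D = O(d/\epsilon^2)$ with the $p$-restricted support has $\dim V = \Theta(D^3/d) = \Theta(d^2/\epsilon^6)$, which is a quantity \emph{independent of $q$}. You propose to impose the pointwise constraints ``$A(a,b,f(a,b)) = 0$'' on a set $S$ of size slightly below $\dim V$. So $|S| = O(d^2/\epsilon^6)$, no matter how large $q$ is. But the theorem demands $|S| = \Omega(\epsilon^2 q^2)$, which grows without bound as $q$ grows with $d,\epsilon$ fixed. Under the hypothesis $q > c_1 d/\epsilon^{c_2}$, $q$ is only bounded \emph{below}, so for large $q$ one cannot have both $|S| < \dim V$ (needed for a nonzero interpolant in your setup) and $|S| = \Omega(\epsilon^2 q^2)$ simultaneously. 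The case analysis you mention cannot rescue this: whenever $|T| > \dim V / 2$, your $S$ is pinned at size $O(d^2/\epsilon^6) \ll \epsilon^2 q^2$.

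The paper avoids this by not interpolating pointwise at all. Instead (\cref{lem:AS-finding-vanishing-polynomial}) it imposes the constraints ``$A(a, y, P_{\ell_{x=a}}(y)) \equiv 0$ as a polynomial in $y$'' for $a$ ranging over a set $S_1$ of just $r = O(d/\epsilon^4)$ columns. Each such constraint is $D+1$ linear conditions, so the total constraint count $r(D+1)$ is again a $q$-free quantity that can be made smaller than $\dim V$ by a suitable choice of $D$ relative to $r$. The crucial next step (\cref{lem:AS-A-vanishes-on-more-points}) is a propagation argument: because of the combinatorial structure extracted in \cref{claim:AS-good-directions} and \cref{lem:structured-grid} (two good directions, and the grid-like set $S = \{(a,b) : b \in S_2, (a,b) \in H\}$), the polynomial $Q(x) = A(x, b, R_b(x))$ for each row $b \in S_2$ has more roots in $S_1$ than its degree, hence vanishes identically, and therefore $A$ vanishes on the entire row's worth of points $(a,b,f(a,b))$, giving $|S| = \Omega(\epsilon^2 q^2)$ from only a bounded number of interpolation constraints. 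This curve-interpolate-then-propagate mechanism is the essential idea your proof is missing, and it is what makes the theorem true across the full range $q \gg d/\epsilon^{O(1)}$. Your treatment of the Hasse-derivative restriction and the discriminant via minimality is correct and matches the paper.
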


The proof is mostly along the lines of Arora and Sudan~\cite{AroraS2003} but with a tighter analysis and some care for fields of small characteristic. We present it in its entirety, in \cref{sec:bivariate-interpolation}, for completeness and to make the changes clearer to follow. 

\medskip

We now use \cref{lem:interpolation-step} to complete the proof of \cref{thm:LDT-low-agreement-poly-agreement}.

\begin{proof}[Proof of \cref{thm:LDT-low-agreement-poly-agreement}]
Suppose $f:\F^2 \rightarrow \F$ is given and we know that $f$ passes $\LDT_d$ with probability at least $\epsilon$. By \cref{lem:interpolation-step}, we have a set $S \subseteq \F^2$ with $|S| = \gamma \cdot q^2 = \Omega(\epsilon^2 \cdot q^2)$ and a non-zero polynomial $A(x,y,z)$ with $\deg_{1,1,d}(A) = D = O(\sfrac{d}{\epsilon^2})$ such that for all $(a,b) \in S$ we have 
\begin{itemize}\itemsep0pt
  \item $A(a,b,f(a,b)) = 0$,
  \item $\Pr_{\ell \ni (a,b)}[P_\ell(a,b) = f(a,b)] \geq \sfrac{\epsilon}{2}$. 
\end{itemize}
We also know that $\hpartial_z(A)$ and $\disc_z(A)$ are both non-zero polynomials. We would like to find an $(a,b) \in S$ such that the following guarantees hold:
\begin{enumerate}[label=(\roman*)]\itemsep 0pt
\item \label{item:ldt-guarantee-pencil-nondegenerate} $\Gamma_A(a,b) \neq 0$ where $\Gamma_A(x,y) = \operatorname{disc}_z(A) = \Res_z(A, \partial_z A)$,
\item \label{item:ldt-factor-on-many-lines} There are $\Omega(\epsilon^2\cdot (q^2 - 1))$ choices for $(\alpha, \beta) \in \F^2 \setminus \set{(0,0)}$, such that there is a degree $d$ polynomial $P_{\alpha,\beta}(t)$ such that 
\[
  A(a + \alpha t, b + \beta t, P_{\alpha, \beta}(t)) = 0
\]
with $P_{\alpha, \beta}(0) = f(a,b)$, and $P_{\alpha, \beta}(a + \alpha t, b + \beta t) = f(a + \alpha t, b + \beta t)$ for at least $\Omega(\epsilon^2 \cdot q)$ many $t \in \F$. 
\end{enumerate}
If we can find such an $(a,b)$, then \cref{lem: decoding on good pencils} would imply that there is a polynomial $Q(x,y)$ of degree at most $d$ such that $A(x,y,Q(x,y)) = 0$ with $Q|_\ell = P_{\alpha, \beta}(t)$ where $\ell$ is the line $\setdef{(a + \alpha t, b + \beta t)}{t\in \F}$ for any $(\alpha, \beta)$ satisfying \cref{item:ldt-factor-on-many-lines}. Thus, for any $(a',b')  = (a + \alpha t, b + \beta t)$ that satisfy $P_{\alpha, \beta}(t) = f(a',b')$, we have $Q(a',b') = P_{\alpha, \beta}(t) = f(a', b')$. Thus, $Q$ agrees with $f$ on $\Omega(\epsilon^4 \cdot q^2)$ locations, as claimed by the theorem. \\

We now work towards guaranteeing \cref{item:ldt-guarantee-pencil-nondegenerate,item:ldt-factor-on-many-lines}. Note that we have $q^2 + q$ lines in $\F^2$, and each point $x \in \F^2$ has $q+1$ lines containing it, and each line has $q$ points on it. Consider the following bipartite graph where the set of left vertices is $\F^2$ and the set of right vertices are all lines in $\F^2$. We connect a point $x \in \F^2$ on the left to a line $\ell$ on the right if $x\in S$, $x \in \ell$ and $P_\ell(x) = f(x)$. Note that every $x \in S$ has $\epsilon_x \cdot (q+1)$ edges in this bipartite graph. If $\mu > 0$ such that $\mu q^2 = \sum_{x\in S} \epsilon_x = \Omega(\epsilon^3 \cdot q^2)$ (by \cref{lem:interpolation-step}), the number of edges in this graph is $(q+1) \cdot \sum_{x\in S} \epsilon_x = \mu \cdot (q^3+q^2)$. 

For a line $\ell$, let $y_\ell = \sfrac{\deg(\ell)}{q}$ where $\deg(\ell)$ refers to the degree in this graph. Since each $\sigma_\ell \in [0,1]$ and $\sum \sigma_\ell = \mu (q^2 + q)$. By \cref{lem:averaging}, if $L = \setdef{\ell}{\sigma_\ell > \sfrac{\mu}{2}}$ (the `high-degree lines'), then at least $\sfrac{\mu}{2} \cdot (q^3 + q^2)$ edges of the graph are incident on lines in $L$ --- let us refer to these edges as ``marked edges''. Again by \cref{lem:averaging}, there are at least $\sfrac{\mu}{4}\cdot q^2$ vertices in the left that have at least $\sfrac{\mu}{4} \cdot (q+1)$ ``marked edges'' incident on it --- let us call this set of vertices $S' \subseteq S$. 

Thus we now have a set $S' \subseteq S \subseteq \F^2$ with $|S'| \geq \sfrac{\mu}{4} \cdot q^2 = \Omega(\epsilon^3 \cdot q^2)$ such that each $x\in S'$ such that at least $\sfrac{\mu}{4}$-fraction of lines $\ell$ through it satisfy the following two properties:
\begin{itemize}\itemsep 0pt
\item $P_\ell(x) = f(x)$,
\item the line $\ell$ contains at least $\sfrac{\mu\cdot q}{2}$ points of $S$.
\end{itemize}

\paragraph{Guaranteeing \cref{item:ldt-guarantee-pencil-nondegenerate}:} The polynomial $A(x,y,z)$ has $\deg_{1,1,d} \leq D = O(\sfrac{d}{\epsilon^2})$ and hence we have $d_z = \deg_z(A) = O(\sfrac{1}{\epsilon^2})$. Therefore, the $z$-discriminant of $A$, namely $\disc_z(A)$ is the determinant of a $O(d_z) \times O(d_z)$ matrix, each of whose entries is a polynomial in $x,y$ of degree at most $D$. We recall from the last item of \cref{lem:interpolation-step} that $\disc_z(A)$ is not identically zero. 

Thus, $\deg(\disc_z(A)) = O(D \cdot d_z) = O(\sfrac{d}{\epsilon^4})$. By \cref{lem: SZ}, there are at most $O(\sfrac{dq}{\epsilon^4})$ points in $\F^2$ that the polynomial $\Gamma$ vanishes on. Since $|S'| = \Omega(\epsilon^3 \cdot q^2)$ and $q > C \cdot \sfrac{d}{\epsilon^7}$ for a large enough constant $C$, there must be some $(a,b) \in S'$ that does not make $\Gamma$ zero. 

\paragraph{Guaranteeing \cref{item:ldt-factor-on-many-lines}:} Let $(\alpha, \beta) \in \F^2 \setminus{(0,0)}$ such that $\ell = \setdef{(a + \alpha t, b + \beta t)}{t\in \F}$ is one of the `high-degree' lines in $L$ containing $(a,b)$. Let $P_\ell$ be the best-fit degree $d$ polynomial on this line. Since $(a,b) \in S$, we have $P_\ell(0) = f(a,b)$. 

Let the polynomial $B(t)$ be defined as $B(t) := A(a + \alpha t, b + \beta t, P_\ell(t))$. Clearly, $B$ has degree at most $D$. For any $t \in \F$ such that $(a',b') = (a + \alpha t, b + \beta t) \in S$, we have $P_\ell(t) = f(a', b')$ and hence $B(t) = A(a', b', f(a',b')) = 0$. Since there are at least $\sfrac{\mu \cdot q}{2} = \Omega(\epsilon^2 \cdot q)$ such points, the polynomial $B(t)$ has $\Omega(\epsilon^2 \cdot q)$ roots but has degree at most $D = O(d/\epsilon^2)$. Since $q > C \cdot \sfrac{d}{\epsilon^7}$, we must have that $B(t) = A(a + \alpha t, b + \beta t, P_\ell(t))$ is identically zero. 

As $(a,b)$ is adjacent to $\sfrac{\mu}{4} \cdot (q+1)$ lines, there are at least $\sfrac{\mu}{4} \cdot (q+1) \cdot (q-1) = \sfrac{\mu}{4} \cdot (q^2 - 1)$ choices for $(\alpha, \beta) \in \F^2 \setminus {(0,0)}$ such that there is some degree $d$ polynomial $P_{\alpha, \beta}$ such that 
\[
  A(a + \alpha t, b + \beta t, P_{\alpha, \beta}(t)) = 0.
\]
Thus, we can now invoke \cref{lem: decoding on good pencils} to get the desired conclusion. This completes the proof of \cref{thm:LDT-low-agreement-poly-agreement}.
\end{proof}

\subsection{Some consequences of Theorem \ref*{thm:LDT-low-agreement-poly-agreement}}
Given \cref{thm:LDT-low-agreement-poly-agreement}, we conclude this section with a few variants that will be useful for us in the proof of the high dimensional case. These variants, despite being seemingly stronger in appearance follows immediately from \cref{thm:LDT-low-agreement-poly-agreement} via standard techniques in the low-degree testing literature. For the sake of completeness, we include formal statements and proofs in \cref{sec:LDT-variants}.

\begin{theorem}[List-decoding version of the bivariate LDT]
  \label{thm:bivariate-list-decoding-LDT}
  For every $\epsilon_0 \in (0,1)$, finite field $\F_q$ and degree $d$ satisfying $\epsilon_0 > \Omega((d/q)^{\sfrac{1}{16}})$, the following holds:
  \begin{quote}
    For every $f:\F_q^2 \rightarrow \F_q$ and every $\epsilon > \epsilon_0$, there is a (possibly empty) set $\set{Q_1,\ldots, Q_t}$ of polynomials of degree at most $d$ such that $t\leq 2/\epsilon^8$ and $\agree(f, Q_i) \geq \epsilon^8$ for all $i \in [t]$ and 
    \[
    \Pr_{a\in \F_q^2}\insquare{\text{$a$ is $\epsilon$-good and $f(x)\notin \set{Q_1(x), \ldots, Q_t(x)}$}} \leq \epsilon_0.
    \]
  \end{quote}
\end{theorem}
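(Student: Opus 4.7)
The plan is to derive \cref{thm:bivariate-list-decoding-LDT} from \cref{thm:LDT-low-agreement-poly-agreement} via the standard two-step pattern for list-decoding theorems: bound the list size by a Johnson-type inclusion--exclusion argument, and then establish the coverage property by a greedy extraction that invokes \cref{thm:LDT-low-agreement-poly-agreement} to exhibit a new polynomial whenever the uncovered $\epsilon$-good set is too large.

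For the list-size bound, I would take $L$ to be the set of all degree-$d$ bivariate polynomials $Q$ with $\agree(f, Q) \ge \epsilon^8$ and set $A_Q = \{a \in \F_q^2 : Q(a) = f(a)\}$ for each such $Q$. Any two distinct degree-$d$ bivariate polynomials agree on at most a $d/q$ fraction of $\F_q^2$ by \cref{lem: SZ}, so $|A_Q \cap A_{Q'}| \le dq$ whenever $Q \ne Q'$, and first-order Bonferroni gives
\[
q^2 \;\ge\; \Bigl|\bigcup_{Q \in L} A_Q\Bigr| \;\ge\; \sum_{Q \in L} |A_Q| - \sum_{Q \ne Q'} |A_Q \cap A_{Q'}| \;\ge\; |L|\,\epsilon^8 q^2 - \binom{|L|}{2} dq,
\]
which rearranges to $|L|\,\epsilon^8 - \binom{|L|}{2}(d/q) \le 1$. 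Since $\epsilon > \epsilon_0 \ge \Omega((d/q)^{1/16})$ forces $d/q \le \epsilon^{16}/4$, the inequality in turn forces $|L| \le 2/\epsilon^8$, giving the desired list size.

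For the coverage property, I would argue by contradiction. Let $U = \{a \in \F_q^2 : \epsilon_a \ge \epsilon \text{ and } f(a) \notin \{Q(a) : Q \in L\}\}$ denote the uncovered $\epsilon$-good points, and suppose $|U|/q^2 > \epsilon_0$. The goal is to exhibit a new polynomial $Q^\ast \notin L$ with $\agree(f, Q^\ast) \ge \epsilon^8$, contradicting the maximality of $L$. To produce $Q^\ast$, I would define a modified function $\tilde f : \F_q^2 \to \F_q$ by $\tilde f(a) = f(a)$ for $a \in U$ and $\tilde f(a)$ chosen uniformly from $\F_q \setminus \{Q(a) : Q \in L\}$ for $a \notin U$; this is well-defined because $|L| \le 2/\epsilon^8 \ll q$. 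By construction $\tilde f(a) \ne Q(a)$ for every $Q \in L$ and every $a$, so any polynomial with positive agreement with $\tilde f$ is automatically outside $L$. Since $\tilde f$ agrees with $f$ on $U$ and every $a \in U$ has $\epsilon_a \ge \epsilon$, a careful averaging should show that $\tilde f$ passes $\LDT_d$ with probability at least $\Omega(\epsilon \cdot |U|/q^2) \ge \Omega(\epsilon \epsilon_0) > (d/q)^{\tau'}$ for a suitable $\tau' > 0$, whereupon \cref{thm:LDT-low-agreement-poly-agreement} applied to $\tilde f$ yields a polynomial $Q^\ast$ with $\agree(\tilde f, Q^\ast) = \Omega((\epsilon\epsilon_0)^4)$; since $\tilde f = f$ on $U$, this agreement transfers (up to constants) to agreement of $Q^\ast$ with $f$ on $U$.

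The main obstacle is twofold. First, lower-bounding the $\LDT_d$-pass probability of $\tilde f$ rigorously requires controlling how the modifications outside $U$ perturb best-fit polynomials on random lines: if a line through $a \in U$ has most of its points outside $U$, the best-fit polynomial can shift significantly, and one must show that $\{P_\ell^{\tilde f}(a) = f(a)\}$ still occurs with probability $\Omega(\epsilon)$ for a non-trivial fraction of lines through $a$. Second, the naive single-shot bound $\Omega((\epsilon\epsilon_0)^4)$ on the extracted agreement is weaker than the required $\epsilon^8$ whenever $\epsilon \gg \epsilon_0$; I expect to close this by running the extraction iteratively, adding one polynomial to $L$ at each step, so the Johnson bound caps the number of iterations at $O(1/\epsilon^8)$ and at termination the uncovered $\epsilon$-good density is forced below $\epsilon_0$, completing the proof.
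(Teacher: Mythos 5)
Your list-size bound is fine and matches the standard Johnson-bound calculation the paper uses. The coverage argument, however, takes a direction that runs into genuine trouble and is structurally different from what the paper does.

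You try to \emph{lower-bound} the LDT acceptance probability of a modified function $\tilde f$ (random outside the uncovered good set $U$, equal to $f$ on $U$) and then extract a new polynomial via \cref{thm:LDT-low-agreement-poly-agreement}. The gap is precisely the one you flag: for $a \in U$ and a typical line $\ell \ni a$, the overwhelming majority of $\ell$ lies outside $U$, where $\tilde f$ is junk. The best-fit polynomial $P_\ell^{(\tilde f,d)}$ therefore bears no controllable relation to $P_\ell^{(f,d)}$. In particular, $P_\ell^{(f,d)}$ may achieve its agreement with $f$ almost entirely outside $U$, in which case it has essentially zero agreement with $\tilde f$ on $\ell$, and nothing forces $P_\ell^{(\tilde f,d)}$ to pass through $(a,f(a))$. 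There is no evident reason $\tilde f$ passes $\LDT_d$ at any meaningful rate, so the extraction step cannot get off the ground. Your proposed fix --- iterating the extraction --- also does not close the loop: the polynomial $Q^*$ you extract has agreement only $\Omega((\epsilon\epsilon_0)^4) < \epsilon^8$, so it does not belong to the list you are bounding via the Johnson argument, and adding such polynomials gives you no upper bound on how many times you iterate.

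The paper's proof (\cref{lem:weak-LDT-implies-list-decoding-LDT}) goes in the opposite direction and sidesteps both issues. First, it works in the two-oracle formulation of \cref{rem:lines-table-version}, keeping the lines oracle $P$ \emph{fixed} when passing to the modified function $g$; this way, for any point where $g$ and $f$ agree, the event ``passes the test at $(x,\ell)$'' is literally the same for $(f,P)$ and $(g,P)$, so there is no best-fit perturbation to control at all. Second, the modification is made on the agreement sets $\cup_i S_i$ (not on the complement of $U$), and a probabilistic argument (\cref{clm:random-g}) shows $g$ has agreement $< 2\eta$ with \emph{every} degree-$d$ polynomial. The \emph{contrapositive} of \cref{thm:LDT-low-agreement-poly-agreement} then upper-bounds the acceptance probability of $(g,P)$, hence the measure of $\epsilon$-good points for $(g,P)$, hence (after transferring back via $g|_{\F_q^2\setminus\cup_i S_i}=f|_{\F_q^2\setminus\cup_i S_i}$) the measure of uncovered $\epsilon$-good points for $(f,P)$. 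To repair your argument you would need to switch to this ``upper-bound the modified function'' direction and fix the lines oracle; as written, the core extraction step does not hold.
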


\noindent The other is the `high-agreement variant', namely \cref{thm:bivariate-high-agreement-LDT}:

\bivariateldtintro*
\begin{proof}[Proofs of \cref{thm:bivariate-list-decoding-LDT,thm:bivariate-high-agreement-LDT}]
  Both the above theorems are immediate consequences of a generic reduction from the `weak form of LDT' to the `list-decoding form of LDT' (\cref{lem:weak-LDT-implies-list-decoding-LDT}) and the `high-agreement form of LDT' (\cref{lem:weak-LDT-implies-high-agreement-LDT}) used along with \cref{thm:LDT-low-agreement-poly-agreement}.
\end{proof}

\cref{thm:bivariate-high-agreement-LDT} is also of interest in the low error regime, where $\epsilon$ is close to $1$. In this case, \cref{thm:LDT-low-agreement-poly-agreement} only lets us conclude a constant fraction agreement between $f$ and a low-degree polynomial, whereas the \cref{thm:bivariate-high-agreement-LDT} asserts that $f$ agrees with a low-degree polynomial on almost all inputs. In the low error regime, in order to lift the analysis from the bivariate case to the multivariate case for our proof, we rely on the following theorem, which, as we show in the proof also follows from \cref{thm:bivariate-high-agreement-LDT}.

To avoid confusion with the low-error regime, we will use $\delta$ to denote the probability that $f$ \emph{fails} the low-degree test. 

\begin{theorem}[Low-error high-agreement bivariate LDT]\label{thm:lowerror-bi-ldt}
  There is a large enough constant $C$ such that for any finite field $\F_q$ and degree $d$ satisfying $q > C \cdot d$, and any $\delta < 0.01$ the following holds:
  \begin{quote}
    Suppose $f:\F_q^2 \to \F_q$ passes the $\LDT_d$ with probability $1 - \delta$, then there is a bivariate degree $d$ polyomial $Q$ such that 
    \[
      \Pr_{x\in \F_q^2}[f(x) = Q(x)] \geq 1 - 2\delta.
    \]
  \end{quote}
\end{theorem}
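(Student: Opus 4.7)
The plan is to apply \cref{thm:bivariate-high-agreement-LDT} to obtain an initial degree-$d$ polynomial $Q$ close to $f$, and then sharpen the resulting distance bound from roughly $\delta + \epsilon_0$ down to $2\delta$ by a short combinatorial argument about lines. Concretely, choosing the constant $C$ in the hypothesis $q > Cd$ large enough in terms of the universal constant $\tau$ from \cref{thm:bivariate-high-agreement-LDT}, we can fix an absolute constant $\epsilon_0 \in (0, 1/100]$ with $\epsilon_0 > (d/q)^\tau$. Applying \cref{thm:bivariate-high-agreement-LDT} with this $\epsilon_0$ and $\epsilon = 1 - \delta \geq 0.99$ yields a degree-$d$ polynomial $Q$ with $\Pr_x[f(x) = Q(x)] \geq 1 - \delta - \epsilon_0$. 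Write $T := \{x \in \F_q^2 : f(x) \neq Q(x)\}$ and $\eta := |T|/q^2$, so $\eta \leq \delta + \epsilon_0 \leq 0.02$.

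Call a line $\ell$ \emph{good} if $P_\ell = Q|_\ell$ and \emph{bad} otherwise. On a bad line, $P_\ell$ and $Q|_\ell$ are two distinct degree-$d$ univariate polynomials and therefore agree on at most $d$ points; by inclusion-exclusion this yields the Johnson-style inequality $|\{x \in \ell : P_\ell(x) = f(x)\}| + |\{x \in \ell : Q(x) = f(x)\}| \leq q + d$. Combined with the best-fit property $|\{x \in \ell : P_\ell(x) = f(x)\}| \geq |\{x \in \ell : Q(x) = f(x)\}| = q - |T \cap \ell|$, this forces $|T \cap \ell| \geq (q-d)/2$ on every bad line. Consequently, for any $x \in T$, each bad line through $x$ contains at least $(q-d)/2 - 1$ points of $T \setminus \{x\}$; since distinct lines through $x$ meet only at $x$, these witnesses are pairwise disjoint across different bad lines through $x$. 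Writing $B_x$ for the number of bad lines through $x$, we get
\[
B_x \cdot \bigl((q-d)/2 - 1\bigr) \leq |T| - 1,
\]
so $B_x/(q+1) \leq (2 + o(1))\,\eta$ whenever $q > Cd$ with $C$ large.

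Finally, for any $x \in T$ and any good line $\ell \ni x$ we have $P_\ell(x) = Q(x) \neq f(x)$, so the LDT fails at $(x,\ell)$. Hence the per-point failure rate $\delta_x := \Pr_{\ell \ni x}[P_\ell(x) \neq f(x)]$ satisfies $\delta_x \geq 1 - B_x/(q+1) \geq 1 - O(\eta)$ for every $x \in T$. Averaging over $x$,
\[
\delta \;=\; \E_x[\delta_x] \;\geq\; \frac{|T|}{q^2}\cdot(1 - O(\eta)) \;=\; \eta\,(1 - O(\eta)),
\]
and since $\eta \leq 0.02$ the factor $1 - O(\eta)$ is at least, say, $0.95$, yielding $\eta \leq 2\delta$ as required. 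The main obstacle is that the direct bound $\eta \leq \delta + \epsilon_0$ from \cref{thm:bivariate-high-agreement-LDT} is much weaker than $2\delta$ when $\delta \ll \epsilon_0$ (and this is unavoidable, since $\epsilon_0$ is forced to be at least $(d/q)^\tau$ independently of $\delta$); what closes the gap is the combinatorial observation that every $T$-point lies on only an $O(\eta)$-fraction of bad lines, which makes good-line failures alone dominate the LDT failure rate.
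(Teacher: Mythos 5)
Your proof is correct, and it takes a genuinely different route from the paper's argument. The paper proceeds via explicit self-correction: it defines $f_\corr(a)$ as the plurality of $P^{(f,d)}_\ell(a)$ over lines $\ell \ni a$, then proves two claims --- that $\delta(f, f_\corr) \leq 2\delta$ (a simple Markov argument on the set of points where more than half the lines disagree with $f$), and that $f_\corr \equiv Q$ globally (by a second Markov argument bounding $|\ell \cap \{f \neq Q\}|$ and using the unique-decoding radius of Reed--Solomon codes to show $P^{(f,d)}_\ell = Q|_\ell$ on $3/4$ of lines through any point). Your argument dispenses with $f_\corr$ entirely and works directly with $T = \{f \neq Q\}$: the Johnson-style inequality $|T \cap \ell| \geq (q-d)/2$ on every ``bad'' line, combined with the observation that distinct lines through a fixed $x \in T$ give pairwise disjoint sets of $T$-witnesses, yields the sharp bound $B_x/(q+1) \leq (2+o(1))\eta$, from which $\delta \geq \eta(1-O(\eta))$ follows by averaging. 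Both proofs hinge on the same underlying fact --- that for the relevant points, the vast majority of lines through them satisfy $P_\ell = Q|_\ell$ --- but yours derives it through the disjointness/packing argument restricted to $T$-points (where it gives the much stronger bound $1 - O(\eta)$, instead of the paper's $3/4$, which suffices and removes the need to set up the plurality function or the intermediate Claim 1). Your approach is arguably a bit more streamlined; the paper's $f_\corr$ formulation has the advantage of directly foreshadowing the analogous $f_\corr$ used in the multivariate bootstrapping argument in Section~5.

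One small point worth making explicit in a polished write-up: the ``best-fit'' polynomial $P_\ell$ may be chosen by arbitrary tie-breaking, so on a bad line you only know $|\{x\in\ell : P_\ell(x)=f(x)\}| \geq |\{x\in\ell : Q(x)=f(x)\}|$ (which is what you use), not that $Q|_\ell$ is strictly worse; this is consistent with your inequality and the conclusion holds regardless of how ties are broken, but it is worth noting since the claim ``$\ell$ is bad $\Rightarrow P_\ell \neq Q|_\ell$'' can occur even when the two have equal agreement with $f$.
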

\begin{proof}
  Since $f$ passes $\LDT_d$ with probabaility at least $(1-\delta)$, we have from \cref{thm:bivariate-high-agreement-LDT} that there exists  a degree $d$ polynomial $Q$ such that \[
      \Pr_{a\in \F^2}[f(a) = {Q}(a)] \geq 1 - \delta - \epsilon_0.
  \]
  where $\epsilon_0 = \Omega\inparen{\inparen{\sfrac{d}{q}}^{\sfrac{1}{20}}}$. 
  We note that this agreement is weaker than the conclusion of the above theorem when $\delta$ is smaller than $\epsilon_0$. Let $C$ be chosen large enough so that $\delta + \epsilon_0 \leq 0.01 + \epsilon_0 < \sfrac{1}{20}\cdot (1 - \sfrac{d}{q})$.

  An important ingredient of the proof of this theorem is the definition of a \emph{corrected} version of $f$ that we denote by $f_\corr$, which is defined as follows. 
  \begin{align*}
    f_\corr(a) &:= \operatorname{plurality}_{\ell \ni a} \left\{ P^{(f,d)}_\ell(a)\right\}\, .
  \end{align*}
  where $\ell$ is a line through $a$ and $P^{(f,d)}_\ell$ denotes a univariate polynomial of degree $d$ that is closest to the restriction of $f$ on $\ell$.  

  The theorem is an immediate consequence of the following two claims

  \begin{claim}\label{clm:dist-f-fcor}
    $\Pr_{a\in \F^2}[f(a) = f_\corr(a)] \geq 1 - 2\delta$.
  \end{claim}

  \begin{claim}\label{clm:fcorr-is-low-degree} 
    For all $a \in \F^2$, we have $f_\corr (a) = Q(a)$. 
  \end{claim}
  \begin{proof}[Proof of \cref{clm:dist-f-fcor}]
    Let $B$ denote the set of \emph{bad} points in $\F^2$ defined as follows 
    \[
    B := \setdef{a \in \F^2}{\Pr_{\ell \ni a}[P^{(f,d)}_\ell(a) \neq f(a)] \geq \nicefrac{1}{2}}.
    \]
    Clearly, if $a \notin B$, then we have $f_\corr(a) = f(a)$. Hence, we have $\Pr_a[f(a) \neq f_\corr(a)] \leq \Pr_a[a \in B]$. On the other hand, 
    \begin{align*}
      \delta & \geq \Pr_{x,\ell}[\text{$f$ fails $\LDT_d$ on $(x,\ell)$}]\\ 
      & \geq \Pr[a \in B] \cdot \Pr_{\substack{a\in \F^2\\\ell \ni a}}[P^{(f,d)}_\ell(a) \neq f(a) \mid a \in B]\\
      & \geq \Pr_{a \in \F^2}[f(a) \neq f_\corr(a)] \cdot \sfrac{1}{2}
    \end{align*}
    which implies that $\Pr_a[f(a) \neq f_\corr(a)]\leq 2\delta$ and hence $\Pr_a[f(a) = f_\corr(a)] \geq 1 - 2\delta$.
  \end{proof}

  \begin{proof}[Proof of \cref{clm:fcorr-is-low-degree}]
    As mentioned earlier, we know that $f$ has fractional agreement of at least $(1-\delta - \epsilon_0)$ with the degree $d$ polynomial $Q$. Let $B$ be the set of inputs where $f$ and $Q$ disagree. Clearly, $|B| < q^2(\delta + \epsilon_0)$. Let $a \in \F^2$ be an arbitrary input. Since lines through $a$ are a uniform cover of the space $\F^2 \setminus \{a\}$, we have that for all large enough $q$,
    \[
    \E_{\ell \ni a}[|\ell \cap B|] \leq q\cdot \left(\frac{q^2(\delta + \epsilon_0)}{q^2 -1} \right) \leq 2q(\delta + \epsilon_0).
    \]
    Thus, by Markov's inequality, $\Pr_{\ell \ni a}[|\ell \cap B| > 8q(\eta + \epsilon)] < \sfrac{1}{4}$.

    Now, let $\ell$ be a line through $a$  such that $|\ell \cap B| < 8q(\delta + \epsilon_0)$. For every such line $\ell$, the disagreement between $Q$ and $f$ on $\ell$ is at most $8q(\delta + \epsilon_0) + 1\leq 10 q(\delta + \epsilon_0) < \sfrac{1}{2} \cdot (1 - \sfrac{d}{q})$. Since this is less than half the minimum distance of a Reed-Solomon code of degree $d$ and block-length $q$, we have that $P_\ell^{(f,d)}$ must equal $Q\mid_\ell$. 

    Hence, we have that on $\sfrac{3}{4}$-th of the lines $\ell$ through $a$, the restriction $Q\mid_{\ell}$ equals the polynomial $P_{\ell}^{(f,d)}$. From the definition of $f_\corr$, it now follows that $f_\corr(a) = Q(a)$. Since $a$ is an arbitrary point in $\F^2$, we have that the $f_\corr$ and $Q$ agree everywhere on $\F^2$.
  \end{proof}
  \noindent
  That concludes the proof of \cref{thm:lowerror-bi-ldt}.
\end{proof}

\section{Lifting to $m$-variate low-degree tests}\label{sec:bootstrapping}

In this section, we bootstrap the low-degree test from 2 dimensions to $m$ dimensions.  This bootstrapping is inspired and adapted from the corresponding bootstrapping arguments due to Friedl-Sudan \cite{FriedlS1995}, Arora-Sudan \cite{AroraS2003} and Bhangale-Dinur-Navon \cite{BhangaleDN2017} and the local-list-decoder for multivariate polynomial codes due to Sudan, Trevisan \& Vadhan \cite{SudanTV2001}. The bootstrapping arguments heavily use the expansion properties of the points-lines-planes affine Grassmannian. So, we first mention some preliminaries that we would need..

\subsection{Preliminaries}

\begin{theorem}[(Weak) Johnson Bound]\label{thm:johnson}For any function $f\colon \F \to \F$, degree parameter $d$ and $\eps \in (0,1)$, let $P_1,\dots, P_r$ be the set of all degree-$d$ polynomials that have agreement at least $\eps$ with $f$ (i.e, $\Pr_{x \in \F}[f(x)=P(x)]\geq \eps$). 
  We say that a point $x\in \F$ is non-unique with respect to function $f$, degree $d$ and agreement $\eps$, if there exist two \emph{distinct} polynomials $P_k \neq P_{k'}$, $1 \leq k, k' \leq r$ such that $P_k(x) = P_{k'}(x)$ or in short, ``$x \in \nonunique^{d}_\eps(f)$''.

  If $\eps \geq 2\sqrt{\nicefrac{d}{q}}$, then the following bounds hold.
\begin{enumerate}
  \item $r \leq \nicefrac{2}{\eps}$.\label{item:johnson}
  \item  The number of points in $\nonunique^{d}_\eps(f)$ is at most $\binom{r}{2}\cdot d \leq \nicefrac{2d}{\eps^2}$.\label{item:nonunique}
\end{enumerate}
\end{theorem}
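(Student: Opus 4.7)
The plan is to prove both parts using a double-counting argument anchored on one algebraic fact: for any two distinct polynomials $P_i \neq P_j$ of degree at most $d$, the difference $P_i - P_j$ is a nonzero polynomial of degree at most $d$ and hence has at most $d$ roots in $\F$. Setting $A_i := \{x \in \F : P_i(x) = f(x)\}$, this immediately gives $|A_i \cap A_j| \leq d$ for all $i < j$, and likewise the set of $x$ where $P_i(x) = P_j(x)$ has size at most $d$. The first fact will drive part (1), while the second drives part (2).

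For part (1) I would carry out a standard second-moment (Johnson bound) computation. Introduce the covering count $k_x := |\{i : P_i(x) = f(x)\}|$ and let $S := \sum_x k_x = \sum_i |A_i| \geq r\eps q$. The pairwise-intersection bound gives $\sum_x \binom{k_x}{2} = \sum_{i<j} |A_i \cap A_j| \leq \binom{r}{2} d$, and therefore
\[
  \sum_x k_x^{\,2} \;=\; S \,+\, 2\sum_x \binom{k_x}{2} \;\leq\; S + r(r-1)d.
\]
Cauchy--Schwarz in the form $S^2 \leq q \sum_x k_x^{\,2}$ then produces $S^2 \leq Sq + r(r-1)dq$. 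Combining with the lower bound $S \geq r\eps q$ and rearranging (after disposing of the trivial case $r\eps \leq 1$, which already yields $r \leq 1/\eps$) gives $(r\eps - 1)\sqrt{q} \leq r\sqrt{d}$, i.e.\ $r\bigl(\eps - \sqrt{d/q}\bigr) \leq 1$. The hypothesis $\eps \geq 2\sqrt{d/q}$ makes $\eps - \sqrt{d/q} \geq \eps/2$, and the claimed bound $r \leq 2/\eps$ follows.

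Part (2) is then immediate by a union bound. By definition every $x \in \nonunique^d_\eps(f)$ is a root of some nonzero polynomial $P_i - P_j$ with $i < j$, and each such polynomial has at most $d$ roots, so $|\nonunique^d_\eps(f)| \leq \binom{r}{2} d$; substituting $r \leq 2/\eps$ from part (1) gives the stated $\leq 2d/\eps^2$. The only place that requires any genuine care is the constant in part (1): the naive Bonferroni lower bound $|\bigcup A_i| \geq \sum |A_i| - \sum |A_i \cap A_j|$ is a downward parabola in $r$ and would only yield $r \leq O(1/\eps)$ with a suboptimal constant, so routing through the second-moment identity together with Cauchy--Schwarz is essential to reach the clean constant $2$ under the exact hypothesis $\eps \geq 2\sqrt{d/q}$.
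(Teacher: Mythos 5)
The paper states this Johnson bound as a known fact and supplies no proof, so there is no in-paper argument to compare against; I therefore checked your argument directly, and it is correct. It is the standard second-moment derivation of the Johnson bound: the identity $\sum_x k_x^2 = \sum_x k_x + 2\sum_{i<j}|A_i\cap A_j|$ together with $|A_i\cap A_j|\leq d$ and Cauchy--Schwarz gives $S^2 \leq Sq + r(r-1)dq$; after discarding the trivial case $r\eps\leq 1$, one lower-bounds $S\geq r\eps q > q$ (so $S\mapsto S^2-Sq$ is increasing there) to get $\eps q(r\eps-1)\leq (r-1)d$, and then $(r\eps-1)/r\leq\eps$ upgrades this to $(r\eps-1)^2 q\leq r^2 d$, i.e.\ $r(\eps-\sqrt{d/q})\leq 1$, from which $r\leq 2/\eps$ follows under $\eps\geq 2\sqrt{d/q}$. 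Part (2) is indeed an immediate union bound over the at most $\binom{r}{2}$ nonzero difference polynomials $P_i-P_j$, each with at most $d$ roots, and substituting the bound on $r$ gives $2d/\eps^2$. One small quibble with your closing aside: a truncated Bonferroni argument (apply inclusion--exclusion to only the first $\min(r,\lceil 2/\eps\rceil)$ of the $P_i$) also yields $r=O(1/\eps)$ with a comparable or even slightly smaller constant, so the genuine advantage of routing through the second moment is that it avoids the truncation device, not that it produces a sharper constant. This is purely a remark about motivation and does not affect the correctness of the proof.
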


\begin{definition} Let G = $(A, B, E)$ be a bi-regular bipartite graph, and let $M \in  \Real^{A \times B}$ be the adjacency matrix normalized such that $\|M 1\| = 1$, denote by $\lambda(G)$ the value
\[ \lambda(G) = \max_{\vv\perp \mathbbm{1}}\left\{ \frac{\|M\vv\|}{\|\vv\|}\right\}.\]
We will refer to $\lambda(G)$ as the second eigen-value of $G$.
\end{definition}

This is really the second largest singular value of M , with a different normalization (such that the maximal singular value equals 1).

The following is the classic Expander Mixing Lemma (for bipartite graphs).
\begin{lemma}[Expander Mixing Lemma]\label{lem:bipartite-eml} Let $G=(A,B,E)$ be a biregular bipartite graph with second eigen-value $\lambda$. Then for any two functions $g \colon A \to \R$ and $h\colon B \to \R$ with means and variances $\mu_g, \mu_h$ and $\sigma_g^2$ and $\sigma_h^2$, we have 
    \[ \left|\Pr_{(a,b)\in E} [g(a)\cdot h(b)] - \mu_g \cdot \mu_h \right| \leq \lambda \cdot \sigma_g \cdot \sigma_h \]
 \end{lemma}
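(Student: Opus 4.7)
The plan is to prove this via the standard spectral-decomposition argument for the bipartite Expander Mixing Lemma. I will interpret $\Pr_{(a,b) \in E}[g(a)h(b)]$ as the expectation $\E_{(a,b) \in E}[g(a)h(b)]$, as is customary in this context, and regard $M$ as the natural (bi-stochastic) averaging operator associated to the random walk, equipped with the expectation inner product $\langle u,v\rangle = \E[uv]$. Under this convention $M\mathbbm{1} = \mathbbm{1}$, so $\|M\mathbbm{1}\| = \|\mathbbm{1}\| = 1$ matches the hypothesis, and the norm of any mean-zero function equals its standard deviation.

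First, I would decompose $g = \mu_g \mathbbm{1}_A + \tilde g$ and $h = \mu_h \mathbbm{1}_B + \tilde h$ where $\tilde g \perp \mathbbm{1}_A$ and $\tilde h \perp \mathbbm{1}_B$, so that $\|\tilde g\| = \sigma_g$ and $\|\tilde h\| = \sigma_h$. Using that $G$ is biregular (so the marginals of the uniform distribution on $E$ are uniform on $A$ and on $B$), the edge-expectation rewrites as an inner product: $\E_{(a,b) \in E}[g(a)h(b)] = \langle g, Mh\rangle_A$. Expanding this using the two decompositions yields four terms. The principal term is $\mu_g \mu_h \langle \mathbbm{1}, M\mathbbm{1}\rangle = \mu_g \mu_h$. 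The two cross terms vanish: $\mu_g \langle \mathbbm{1}, M\tilde h\rangle = \mu_g \langle M^\top \mathbbm{1}, \tilde h\rangle = \mu_g \langle \mathbbm{1}, \tilde h\rangle = 0$, using $M^\top \mathbbm{1} = \mathbbm{1}$ (the other-side biregularity), and similarly $\mu_h \langle \tilde g, M\mathbbm{1}\rangle = \mu_h \langle \tilde g, \mathbbm{1}\rangle = 0$. What survives is the residual correlation $\langle \tilde g, M \tilde h\rangle$.

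Finally, since $\tilde h \perp \mathbbm{1}$, the defining property of $\lambda(G)$ immediately gives $\|M\tilde h\| \leq \lambda \|\tilde h\| = \lambda \sigma_h$, and one application of Cauchy--Schwarz yields $|\langle \tilde g, M\tilde h\rangle| \leq \|\tilde g\| \cdot \|M\tilde h\| \leq \lambda \sigma_g \sigma_h$, giving the claimed bound. There is no substantive obstacle here; the only care required is to be consistent about the inner-product normalization, so that the hypothesis $\|M\mathbbm{1}\| = 1$ and the spectral bound in the definition of $\lambda(G)$ are measured in the same norm as the $\sigma_g, \sigma_h$ appearing on the right-hand side.
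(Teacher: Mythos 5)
The paper states this as a classic fact and gives no proof, so there is nothing in the paper to compare against. Your argument is the standard spectral proof of the bipartite Expander Mixing Lemma and is correct: decompose $g,h$ into their means plus mean-zero parts, observe that biregularity makes the edge-expectation an inner product $\langle g, Mh\rangle$, kill the two cross terms using $M\mathbbm{1}=\mathbbm{1}$ and $M^{\top}\mathbbm{1}=\mathbbm{1}$, and bound the residual $\langle \tilde g, M\tilde h\rangle$ by Cauchy--Schwarz together with the definition of $\lambda$. You are also right that the one thing requiring care is the normalization: the paper's condition $\|M\mathbbm{1}\|=1$ together with the appearance of $\sigma_g,\sigma_h$ on the right-hand side forces the expectation inner product (so that $\|\mathbbm{1}\|=1$ and $\|\tilde g\|=\sigma_g$), which is exactly the convention under which your computation goes through.
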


 We will be using the following corollary of the above Expander Mixing Lemma. 

\begin{lemma}[Corollary of Expander Mixing Lemma {\cite[Lemma 6]{BhangaleDN2017}}]\label{lem:bipartite-eml-withbadset} Let $G=(A,B,E)$ be a biregular bipartite graph with second eigen-value $\lambda$. Then for any set $A' \subseteq A$ of measure $\mu$ and any $E' \subseteq E$, we have
    \[ \left|\Pr_{a \sim A', b \sim N_G(a)}[(a,b) \in E'] - \Pr_{b \in B, a\sim N_G(b) \cap A'}[(a,b) \in E']\right| \leq \nicefrac{\lambda}{\sqrt{\mu}}. \] 
 \end{lemma}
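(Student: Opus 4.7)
My plan is to deduce this directly from the bipartite Expander Mixing Lemma (\cref{lem:bipartite-eml}) applied to an indicator function on the $A$-side and a conditional-probability function on the $B$-side. Write $P_1$ and $P_2$ for the two quantities being compared in the statement of the lemma, and let $D_A, D_B$ denote the left- and right-degrees of $G$, so that $|A| \cdot D_A = |B| \cdot D_B = |E|$. Define $g\colon A \to \R$ by $g(a) = \mathbbm{1}[a \in A']$, and define $h\colon B \to [0,1]$ by
\[ h(b) \;=\; \Pr_{a' \sim N_G(b) \cap A'}\bigl[(a',b) \in E'\bigr], \]
with the convention $h(b) = 0$ whenever $N_G(b) \cap A' = \emptyset$. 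Then $\mu_g = \mu$ by definition of the measure of $A'$, and $\mu_h = \E_{b \in B}[h(b)] = P_2$ directly from the definition.

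The key computation is to check that $\E_{(a,b)\in E}[g(a) h(b)] = \mu \cdot P_1$. Rewriting the edge expectation by first summing over $b \in B$ and then over $a \in N_G(b)$ yields
\[ \E_{(a,b)\in E}[g(a) h(b)] \;=\; \frac{1}{|B|\cdot D_B} \sum_{b \in B} |N_G(b) \cap A'| \cdot h(b), \]
and the factor $|N_G(b) \cap A'|$ outside exactly cancels the denominator built into $h(b)$. The surviving expression counts $\bigl|E' \cap (A' \times B)\bigr| / |E|$, which equals $\Pr_{(a,b) \in E}[a \in A' \text{ and } (a,b) \in E']$; by the law of total probability this is $\mu \cdot P_1$. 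The convention $h(b)=0$ on empty intersections causes no trouble because such $b$ contribute nothing to either side.

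Plugging these identities into \cref{lem:bipartite-eml} gives $|\mu P_1 - \mu P_2| \leq \lambda \cdot \sigma_g \cdot \sigma_h$. Since $g$ is the indicator of a set of measure $\mu$, we have $\sigma_g^2 = \mu(1-\mu) \leq \mu$, and since $h$ takes values in $[0,1]$ we have $\sigma_h \leq 1$. Dividing through by $\mu$ yields $|P_1 - P_2| \leq \lambda / \sqrt{\mu}$, which is the claimed bound. The proof is essentially bookkeeping once the right $g$ and $h$ are identified; the only subtlety is handling the edge case where $N_G(b) \cap A' = \emptyset$, which is taken care of by the convention $h(b) = 0$ together with the observation that such $b$ drop out of the biregular edge sum.
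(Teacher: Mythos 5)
Your derivation is correct. The choice $g(a)=\mathbbm{1}[a\in A']$ and $h(b)=\Pr_{a\sim N_G(b)\cap A'}[(a,b)\in E']$ (with $h(b)=0$ when $N_G(b)\cap A'=\emptyset$) makes $\E_{(a,b)\in E}[g(a)h(b)]=\tfrac{|\{(a,b)\in E' : a\in A'\}|}{|E|}$, which by biregularity equals $\mu P_1$, while $\mu_g\mu_h=\mu P_2$; combining this with $\sigma_g\le\sqrt{\mu}$ and $\sigma_h\le 1$ in \cref{lem:bipartite-eml} yields the bound after dividing by $\mu$. The paper states this lemma as a citation to \cite{BhangaleDN2017} without reproducing a proof, so there is no in-paper argument to compare against; yours is a clean, self-contained derivation from \cref{lem:bipartite-eml} and needs no repair.
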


The following are well-known properties of the points-lines-planes affine Grassmannian in $\F_q^m$. 

\begin{theorem}\label{thm:spectralgraphs}
  The second eigen-value of the following biparite inclusion graphs is as follows:
  \begin{enumerate}
      \item \label{spec:point-line-in-plane-graph} For all $m \geq 2$ and $G = G(\F^m, \calL^{(m)})$, $\lambda(G) = \nicefrac{1}{\sqrt{q}}$. 
      \item \label{spec:point-plane-graph} For all $m\geq 2$, $G = G(\F^m,\calP^{(m)})$, $\lambda(G) = \nicefrac{1}{q}$.
      \item \label{spec:line-plane-graph} For all $m \geq 2$, $G = G(\calL^{(m)},\calP^{(m)})$, $\lambda(G) = \frac{1}{\sqrt{q}} \cdot (1 + o(\nicefrac{1}{\sqrt{q}}))$.
      \item \label{spec:lines-planes-graph-containing-x} For all $m\geq 2$ and $G=G(\calL^{(m)}_x,\calP^{(m)}_x)$, $\lambda(G) = \nicefrac{1}{\sqrt{q+1}}$.
  \end{enumerate}
\end{theorem}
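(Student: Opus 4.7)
The plan is to compute each second-eigenvalue by analyzing the two-step random walk on one side of the corresponding bipartite graph. For a biregular bipartite graph $G=(A,B,E)$, let $W_A$ denote the transition matrix of the walk $a \to \text{uniformly random neighbor in } B \to \text{uniformly random neighbor in } A$. A standard calculation shows $\lambda(G)^2 = \mu_2(W_A)$, the second-largest eigenvalue of $W_A$. I will compute $\mu_2(W_A)$ in each case, exploiting the transitive action of the affine group $\mathrm{AGL}_m(\F_q)$ on the relevant incidence structures.

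For Parts~\ref{spec:point-line-in-plane-graph}, \ref{spec:point-plane-graph}, and \ref{spec:lines-planes-graph-containing-x}, the group acts two-point homogeneously on the relevant side: the points $\F_q^m$ themselves for Parts 1 and 2, and $\calL_x^{(m)}$ for Part 4 via its canonical identification with $1$-dimensional linear subspaces of $\F_q^m$ (which simultaneously identifies $\calP_x^{(m)}$ with $2$-dimensional subspaces, restoring the point-line incidence of projective $(m-1)$-space). In all three cases $W_A(u,v)$ depends only on whether $u = v$, so $W_A = \alpha I + \beta J$, with $\alpha$ and $\beta$ determined by the diagonal entry $W_A(u,u) = 1/d_B$ (the probability of returning on a single step into $B$ and back) together with row-stochasticity. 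Since $J$ annihilates $\mathbbm{1}^\perp$, we read off $\mu_2(W_A) = W_A(u,u) - W_A(u,v)$, and plugging in the standard counts $|\calL_x^{(m)}|, |\calP_x^{(m)}|$, and the various incidence degrees yields the three claimed bounds.

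The main obstacle is Part~\ref{spec:line-plane-graph}, where the walk on $\calL^{(m)}$ via planes no longer satisfies a two-distance identity. Here $W_A(\ell,\ell')$ is nonzero iff $\ell$ and $\ell'$ are coplanar---equal, intersecting, or parallel---and vanishes for skew pairs. Decomposing $W_A = \alpha I + \beta_{\cap} A_\cap + \beta_\parallel A_\parallel$, where $A_\cap$ and $A_\parallel$ are the adjacency matrices of the ``intersecting and distinct'' and ``parallel and distinct'' relations, places us inside the four-class $\mathrm{AGL}_m(\F_q)$-association scheme on affine lines. Its first eigenmatrix is a small constant-size matrix whose entries are obtained by explicit enumeration (for instance, how many lines in a fixed plane through $\ell$ intersect or are parallel to $\ell$, and how these aggregate over all planes through $\ell$). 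Diagonalizing this eigenmatrix yields the spectrum of $W_A$, and the largest nontrivial eigenvalue works out to $\tfrac{1}{q}(1 + o(q^{-1/2}))$. The delicate step is the combinatorial counting that populates the eigenmatrix and then tracking through to the correct leading constant; everything else is bookkeeping.
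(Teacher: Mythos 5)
The paper offers no proof of this theorem: it is introduced with the sentence ``the following are well-known properties of the points-lines-planes affine Grassmannian'' and stated without a proof environment or citation. So there is nothing to compare your proposal against; I can only evaluate it on its own merits.

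Your approach is the standard one and it works. The reduction $\lambda(G)^2 = \mu_2(W_A)$ is correct, and for parts \ref{spec:point-line-in-plane-graph}, \ref{spec:point-plane-graph}, \ref{spec:lines-planes-graph-containing-x} the two-transitivity (or two-homogeneity) of the relevant group forces $W_A = \alpha I + \beta J$, so $\mu_2$ is read off the diagonal minus the off-diagonal entry. The identification of $\calL_x^{(m)}$, $\calP_x^{(m)}$ with points and lines of $\mathrm{PG}(m-1,q)$ for part \ref{spec:lines-planes-graph-containing-x} is exactly right. One caution: when you plug in the counts you will get $m$-dependent values that are \emph{strictly below} the stated bounds for finite $m$; e.g.\ for part \ref{spec:point-line-in-plane-graph} the exact answer is $\sqrt{(q^{m-1}-1)/(q^m-1)}$, which is $1/\sqrt{q+1}$ at $m=2$, not $1/\sqrt{q}$. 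The theorem's ``$=$'' is really a clean upper bound that holds for all $m\ge 2$ (which is all the paper needs, since $\lambda$ only enters mixing-lemma inequalities), so this is a harmless looseness, but your write-up should say so rather than claim equality.

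For part \ref{spec:line-plane-graph}, your diagnosis that the two-step walk is no longer a two-distance graph is correct, but you are overpaying: the coefficients of $A_\cap$ and $A_\parallel$ in $W_A$ are actually \emph{equal}, because any two distinct coplanar lines (whether intersecting or parallel) span a unique common plane, so the walk weight is $\frac{1}{\#\{\pi\supset\ell\}}\cdot\frac{1}{q^2+q}$ in both cases. Thus $W_A = \alpha I + \beta A_{\mathrm{copl}}$ where $A_{\mathrm{copl}}$ is the ``distinct and coplanar'' relation, and all you need is its spectrum on $\mathbbm 1^\perp$, not the eigenmatrix of the full four-class scheme. (Equivalently, and perhaps easier, run the walk on the plane side: $W_B$ is supported on the ``share a line'' relation, and for $m=3$ its complement among non-equal pairs is the parallel relation, a disjoint union of cliques, so the whole spectrum is elementary and yields $\lambda = 1/\sqrt{q+1}$ exactly.) Your plan would get there, but it routes through more machinery than the problem requires, and the ``delicate counting'' you flag is precisely where this simplification helps.

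Net assessment: your method is correct and essentially the standard folklore proof; tighten the statement of equality to ``$\le$'' (or state the exact $m$-dependent values) in parts \ref{spec:point-line-in-plane-graph}, \ref{spec:point-plane-graph}, \ref{spec:lines-planes-graph-containing-x}, and use the coplanarity simplification in part \ref{spec:line-plane-graph} to avoid the full association-scheme eigenmatrix.
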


\subsection{The bootstrapping statements}

Given a function $f \colon \F^m \to \F$ and a line $\ell$ in $\F^m$, we let $P^{(f,d)}_\ell$ be the best fit degree-$d$ univariate polynomial (presented as a list of evaluations) that agrees with $f$ on the line $\ell$ (if there is more than one such polynomial, we break ties arbitrarily). We now, define the following quantities for any function $f\colon \F^m \to \F$, line $\ell$ and plane $\pi$:
\begin{align}
    \delta_f(\ell) &:= \Pr_{x \in \ell}[P^{(f,d)}_\ell(x) \neq f(x)], \nonumber\\
    \delta_f(\pi)  & := \E_{\ell \in \pi}[\delta_f(\ell)] = \Pr_{\substack{\ell \in \pi\\x \in \ell}}[P^{(f,d)}_\ell(x) \neq f(x)],\nonumber\\
    \delta_f &:= \E_\pi[\delta_f(\pi)] = \Pr_{\substack{\ell \in \F^m\\x \in \ell}}[P^{(f,d)}_\ell(x) \neq f(x)] \label{eq:deltaf}.
\end{align} 

\subsubsection*{Bootstrapping in the low-error regime}

In \cref{sec:lowerrorbootstrapping}, we  bootstrap the low-error bivarariate LDT \cref{thm:lowerror-bi-ldt} to prove \cref{thm:main-lowerror-ldt}, a similar theorem for $m$ dimensions. This bootstrapping argument is an adaptation of a similar argument due to Friedl and Sudan \cite{FriedlS1995}, the main difference being that we use the expansion properties of the points-lines-planes affine Grassmannian rather than that of a more complicated points-lines-surfaces HDX used in \cite{FriedlS1995}.

\multivariatelowerrorLDT*

\subsubsection*{Bootstrapping in the high-error regime}

In the subsequent \cref{sec:higherrorbootstrapping}, we perform a similar bootstrapping for the high-error regime to obtain the following theorem from bivariate low-degree test (\cref{thm:LDT-low-agreement-poly-agreement}). This bootstrapping argument is inspired from the corresponding arguments due to Arora-Sudan \cite{AroraS2003} and Bhangale-Dinur-Navon \cite{BhangaleDN2017}. The argument presented here, while elementary, requires a careful analysis using the spectral properties (repeated applications of \cref{lem:bipartite-eml,lem:bipartite-eml-withbadset}) of various subgraphs of the points-lines-planes affine Grassmannian. The key improvement from \cite{AroraS2003} and \cite{BhangaleDN2017} is that we bootstrap from the base case of $m=2$ while previous arguments worked with a base case of at least $m \geq 3$. 

\begin{theorem}[high-error regime $m$-variate LDT]\label{thm:higherror-ldt}
  For every $\epsilon_0 \in (0,1)$, and finite field $\F_q$ and degree $d$ satisfying $\epsilon_0 > \Omega((d/q)^{\sfrac{1}{48}})$, the following holds. 
    \begin{quote}
    If the points table $f \colon \F^m \to \F$ and degree-$d$ lines oracle $\ell \mapsto P_\ell$ satisfy 
    \[\Pr_{x, \ell \ni x}[f(x) = P_\ell(x)] \geq  5\eps, \]
    then there exists an $m$-variate degree-$d$ polynomial $Q$ such that $\Pr_x[f(x)=Q(x)] \geq \eps^2.$
    \end{quote}  
\end{theorem}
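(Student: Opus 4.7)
The plan is to implement the bootstrapping strategy sketched in \cref{sec:proof-overview-higher-dim}: use an advice pair $(\vx, f(\vx))$ to define a corrected function $f^{(\vx)}_\corr\colon \F^m \to \F$, show that for a random choice of advice this corrected function both retains $\poly(\eps)$ agreement with $f$ and passes $\LDT_d$ with overwhelming probability, and then invoke the multivariate low-error LDT (\cref{thm:main-lowerror-ldt}) to extract a genuine degree-$d$ polynomial.

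First, I would define $f^{(\vx)}_\corr$ as follows. Set $f^{(\vx)}_\corr(\vx) := f(\vx)$. For $\vy \neq \vx$, let $\ell = \ell_{\vx,\vy}$ be the line through $\vx, \vy$, and let $\calL_\ell$ be the list of all univariate degree-$d$ polynomials with agreement at least $\eps^8$ with $f|_\ell$. If there is a \emph{unique} $P \in \calL_\ell$ with $P(\vx) = f(\vx)$, set $f^{(\vx)}_\corr(\vy) := P(\vy)$; otherwise assign an arbitrary value. To show non-trivial agreement, observe that for random $\vx$ and random $\vy$, with probability $\geq 5\eps$ the line $\ell_{\vx,\vy}$ satisfies $P^{(f,d)}_\ell(\vy) = f(\vy)$; conditioning on this, the polynomial $P^{(f,d)}_\ell$ lies in $\calL_\ell$ and passes through $(\vx,f(\vx))$, so as long as it is the unique such list element we gain the correction. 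A standard application of the Johnson bound (\cref{thm:johnson}) and a union bound over the $O(1/\eps^8)$ other list elements shows that this uniqueness holds for all but $O(\eps)$ fraction of advice points, yielding $\agree(f,f^{(\vx)}_\corr) \geq \Omega(\eps^2)$ on average over $\vx$.

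Second, I would analyze $\delta_{f^{(\vx)}_\corr}$ via planes. For a random plane $\pi$, the restriction $f|_\pi$ passes the bivariate LDT with probability $\geq 5\eps$ (on average), so by \cref{thm:bivariate-list-decoding-LDT} there is a list $\{Q_1^\pi,\ldots,Q_{t_\pi}^\pi\}$ of bivariate degree-$d$ polynomials each with agreement $\geq \eps^8$ with $f|_\pi$, that covers almost all ``good'' points on $\pi$. For random $\vx \in \pi$, with non-trivial probability exactly one $Q_i^\pi$ satisfies $Q_i^\pi(\vx) = f(\vx)$; call it $Q^{(\vx)}_\pi$. The key geometric claim is that for random advice $\vx$ and random line $\ell$, setting $\pi$ to be the plane spanned by $\vx$ and $\ell$, we have $f^{(\vx)}_\corr|_\ell \equiv Q^{(\vx)}_\pi|_\ell$ with probability $1 - O(\gamma)$. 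To prove this, one shows (using \cref{thm:spectralgraphs} and \cref{lem:bipartite-eml-withbadset}) that $Q^{(\vx)}_\pi$ is a ``consistent'' list choice along almost all sublines of $\pi$ through $\vx$, so for any $\vy \in \ell$ the polynomial $Q^{(\vx)}_\pi|_{\ell_{\vx,\vy}}$ is the unique list element of $\calL_{\ell_{\vx,\vy}}$ passing through $(\vx,f(\vx))$; this forces $f^{(\vx)}_\corr(\vy) = Q^{(\vx)}_\pi(\vy)$. Combined with $\Pr_{\vx,\ell}[\delta_{Q^{(\vx)}_\pi|_\ell}(f|_\ell)\leq \gamma] \geq 1 - O(\gamma)$ (which again follows from expander mixing on the plane--line incidence graph restricted to $\pi$), this gives $\delta_{f^{(\vx)}_\corr} \leq O(\gamma)$ for a typical $\vx$.

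Finally, applying \cref{thm:main-lowerror-ldt} to $f^{(\vx)}_\corr$ produces an $m$-variate degree-$d$ polynomial $Q^{(\vx)}$ with $\delta(f^{(\vx)}_\corr, Q^{(\vx)}) \leq O(\gamma)$. Combining with $\agree(f,f^{(\vx)}_\corr) \geq \Omega(\eps^2)$ and choosing $\gamma$ small enough (say $\gamma \ll \eps^2$, which is permissible as long as $\eps^2 \gg (d/q)^{\tau'}$ for suitable $\tau'$, ultimately giving the $1/48$ exponent after tracking losses through Johnson, the two-step expansion analysis, and the $\eps^8$ list-decoding agreement), we conclude $\agree(f, Q^{(\vx)}) \geq \eps^2$ for at least one advice point $\vx$. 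The main obstacle will be the second step: carefully pipelining repeated applications of the bipartite expander mixing lemma across the four relevant incidence graphs (points--lines, points--planes, lines--planes, and lines--planes through $\vx$) while tracking ``bad'' subsets of each layer so that the $\poly(\eps)$ fractions on one side translate to $1 - O(\gamma)$ fractions on the other, without incurring losses that would blow up the final exponent.
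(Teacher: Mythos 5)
Your proposal follows essentially the same strategy as the paper's proof: an advice-corrected function $f^{(\vx)}_\corr$ using a $\poly(\eps)$ list-decoding threshold and the advice $(\vx,f(\vx))$, agreement with $f$ via the LDT hypothesis plus the Johnson bound, closeness of $f^{(\vx)}_\corr$ to a degree-$d$ polynomial via planes, the bivariate list-decoding LDT, and expander-mixing on the points--lines--planes Grassmannian, followed by an application of \cref{thm:main-lowerror-ldt}. One small imprecision to watch: the agreement step in your sketch conditions on $P_\ell(\vy)=f(\vy)$ and then asserts $P_\ell$ passes through $(\vx,f(\vx))$, which does not follow --- the paper instead first preprocesses the lines oracle to be $\eps$-well-behaved and then defines the set $S$ of points $\vx$ with $\geq 2\eps$ fraction of lines $\ell\ni\vx$ satisfying $P_\ell(\vx)=f(\vx)$ and $\vx\notin\nonunique$, from which the $2\eps\cdot\eps=2\eps^2$ agreement follows cleanly.
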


As in the bivariate case, the above `weak form of the low-degree test' (\cref{thm:higherror-ldt}) can be reduced to the `high-agreement form of the low-degree test' (\cref{thm:multivariate-low-degree-test-intro}) using standard transformations (\cref{lem:weak-LDT-implies-high-agreement-LDT}). 

\subsection{Low-error regime: Proof of {\cref*{thm:main-lowerror-ldt}}}\label{sec:lowerrorbootstrapping}

Given a function $f\colon \F^m \to \F$, we define the self-correction $f_\corr$ of the function $f$ as follows:
\begin{align*}
    f_\corr(x) &:= \operatorname{plurality}_{\ell \ni x} \left\{ P^{(f,d)}_\ell(x)\right\}.
\end{align*}

Let $\eps_0,\delta_0 \in (0,1)$, the field $\F_q$ and degree $d$ satisfy $\eps_0 \geq \Omega(\left(\nicefrac{d}{q}\right)^{\nicefrac{1}{20}})$ as in the hypothesis of . Define $\delta_0 := \nicefrac{1}{20}(1-\nicefrac{d}{q})-\eps_0$. By the hypothesis of \cref{thm:main-lowerror-ldt}, we have that $\delta_f \leq \nicefrac{\delta_0}{2}$.

The theorem follows from the two claims.
\begin{claim}\label{clm:fandfcorrareclose} 
    $\delta(f,f_\corr) \leq 2\delta_f$.
\end{claim}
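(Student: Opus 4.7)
\textbf{Proof plan for \cref{clm:fandfcorrareclose}.} The claim is almost identical in spirit to \cref{clm:dist-f-fcor} from the bivariate setting, and I expect the same two-line Markov argument to go through verbatim in $m$ dimensions. The plan is to introduce a set of ``badly-covered'' points
\[
  B := \setdef{x \in \F^m}{\Pr_{\ell \ni x}\insquare{P^{(f,d)}_\ell(x) \neq f(x)} \geq \tfrac{1}{2}},
\]
and then argue two things: (a) every $x \notin B$ satisfies $f_\corr(x) = f(x)$, and (b) $|B|/q^m \leq 2\delta_f$.

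For (a), if $x \notin B$, then strictly more than half of the lines $\ell$ through $x$ satisfy $P^{(f,d)}_\ell(x) = f(x)$. In particular, no other value in $\F$ can appear as $P^{(f,d)}_\ell(x)$ on more than half of the lines through $x$, so $f(x)$ is the strict plurality value among $\setdef{P^{(f,d)}_\ell(x)}{\ell \ni x}$, and hence $f_\corr(x) = f(x)$ by definition. Consequently $\setdef{x}{f(x) \neq f_\corr(x)} \subseteq B$.

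For (b), observe that the points-lines incidence graph in $\F^m$ is biregular, so the distribution on $(x,\ell)$ obtained by picking $\ell$ uniformly and then $x \in \ell$ uniformly is the same as the one obtained by picking $x \in \F^m$ uniformly and then $\ell \ni x$ uniformly. Using the definition $\delta_f = \Pr_{\ell, x \in \ell}[P^{(f,d)}_\ell(x) \neq f(x)]$ from~\eqref{eq:deltaf}, this lets us rewrite
\[
  \delta_f \;=\; \E_{x \in \F^m}\insquare{\,\Pr_{\ell \ni x}\insquare{P^{(f,d)}_\ell(x) \neq f(x)}\,}.
\]
Markov's inequality applied to the non-negative random variable $\Pr_{\ell \ni x}[P^{(f,d)}_\ell(x) \neq f(x)] \in [0,1]$ then yields $\Pr_x[x \in B] \leq 2\delta_f$. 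Combined with (a), this gives $\delta(f,f_\corr) \leq \Pr_x[x \in B] \leq 2\delta_f$, as desired.

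I do not expect any real obstacle here: the only mild subtlety is confirming the switch between the ``line-then-point'' and ``point-then-line'' sampling in the definition of $\delta_f$, which is immediate from biregularity of the points-lines graph (and is already implicit in~\eqref{eq:deltaf}). Everything else is a direct analog of the bivariate claim, with $\F^2$ replaced by $\F^m$.
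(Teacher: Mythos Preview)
Your proposal is correct and essentially identical to the paper's own proof: the paper defines the same set $B$ (called $\BAD$ there), observes that $x\notin B$ forces $f_\corr(x)=f(x)$, and then bounds $\Pr[x\in B]\le 2\delta_f$ via the same conditional/Markov argument you describe. The only cosmetic difference is that you name Markov's inequality explicitly and spell out the biregularity justification for switching the sampling order, which the paper leaves implicit.
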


\begin{claim}\label{clm:deltafhalves}
    If $q > \nicefrac{100}{\delta_0^2}$ and $\delta_f \leq \nicefrac{\delta_0}{2}$, then $\delta_{f_\corr} \leq \nicefrac{\delta_f}{2}$.
\end{claim}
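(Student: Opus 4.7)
The plan is to use the bivariate LDT (\cref{thm:lowerror-bi-ldt}) on random planes as an intermediate structure to reconstruct a global degree-$d$ polynomial $Q$ close to $f_\corr$, and then show that the test on $f_\corr$ almost always accepts. Call a plane $\pi$ \emph{good} if $\delta_f(\pi) \leq \delta_0$; by Markov's inequality applied to $\delta_f = \E_\pi[\delta_f(\pi)] \leq \delta_0/2$, at most half of all planes are bad. On each good plane $\pi$, \cref{thm:lowerror-bi-ldt} yields a bivariate degree-$d$ polynomial $Q_\pi$ with $\delta(f|_\pi, Q_\pi) \leq 2\delta_f(\pi) \leq 2\delta_0$, and a further averaging step shows that for all but a small fraction of lines $\ell \subset \pi$, the disagreement between $f|_\ell$ and $Q_\pi|_\ell$ is below the unique-decoding radius $(1-d/q)/2$ of the degree-$d$ Reed-Solomon code, forcing $P^{(f,d)}_\ell = Q_\pi|_\ell$.

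Next I would glue these bivariate polynomials into a global ``consensus'' function $Q$. For two good planes $\pi, \pi'$ sharing a point $y$, their intersection is a line $\ell \ni y$, and when $\ell$ is good in both planes --- which holds generically by the expander mixing lemma (\cref{lem:bipartite-eml,lem:bipartite-eml-withbadset}) applied to the lines-planes-containing-$y$ graph of \cref{thm:spectralgraphs} --- both restrictions $Q_\pi|_\ell$ and $Q_{\pi'}|_\ell$ equal $P^{(f,d)}_\ell$, so $Q_\pi(y) = Q_{\pi'}(y)$. This consistency defines $Q(y)$ for most $y$. Moreover, each good plane through $y$ supplies $q+1$ lines through $y$ on which $P^{(f,d)}_\ell(y) = Q(y)$, and a further expander-mixing application on the lines-planes graph shows that enough good planes pass through each typical $y$ that $P^{(f,d)}_\ell(y) = Q(y)$ for the plurality of $\ell \ni y$, hence $f_\corr(y) = Q(y)$.

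Finally, I would bound $\delta_{f_\corr}$ by analyzing the test on a uniformly random $(\ell, x \in \ell)$. Since $f_\corr(y) = Q(y)$ for most $y$, on most lines $\ell$ the disagreement between $f_\corr|_\ell$ and $Q|_\ell$ is below the Reed-Solomon unique-decoding radius, forcing $P^{(f_\corr,d)}_\ell = Q|_\ell$; note that $Q|_\ell$ is indeed a degree-$d$ polynomial since $Q|_\ell = Q_\pi|_\ell$ for any good plane $\pi \supset \ell$. Under these conditions, the test at $(\ell, x)$ fails only when $x$ itself lies in the small set where $f_\corr(x) \neq Q(x)$. Each bad event along the way (bad plane, bad line in good plane, $x$ inconsistent with the consensus) is bounded using Markov and the expander mixing estimates on the points-lines-planes graphs from \cref{thm:spectralgraphs}, with spectral error terms of order $1/\sqrt{q}$ absorbed via the hypothesis $q > 100/\delta_0^2$.

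The main obstacle is the fine quantitative bookkeeping needed to achieve the precise halving $\delta_{f_\corr} \leq \delta_f/2$ rather than a weaker constant-factor improvement. The naive Markov bound on bad planes gives only probability $\delta_f/\delta_0 \leq 1/2$, and getting down to $\delta_f/2$ demands showing that the ``consensus-failure'' set $\{y : f_\corr(y) \neq Q(y)\}$ has measure at most a small constant times $\delta_f$. This will entail carefully pairing the slack in $\delta_f \leq \delta_0/2$ with the bad-set version of the expander mixing lemma (\cref{lem:bipartite-eml-withbadset}), exploiting that the bad events in the test on $f$ are structurally concentrated in a way that self-correction directly addresses; the lower bound $q > 100/\delta_0^2$ is used essentially here to make the $1/\sqrt{q}$ spectral error terms negligible compared with $\delta_f$.
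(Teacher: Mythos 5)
Your approach differs substantively from the paper's, and there are two genuine gaps.

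First, the paper never builds a global consensus function $Q$ at all. It instead opens with a short chain of inequalities
\[
  \delta_{f_\corr}
  \leq \Pr_{x,\,\ell\ni x}\bigl[P^{(f,d)}_\ell(x)\neq f_\corr(x)\bigr]
  \leq \Pr_{x,\,\ell,\ell'\ni x}\bigl[P^{(f,d)}_\ell(x)\neq P^{(f,d)}_{\ell'}(x)\bigr],
\]
where the first step uses that the best-fit polynomial $P^{(f_\corr,d)}_\ell$ can only do better than $P^{(f,d)}_\ell$ and the second uses that a plurality value collides with an independent sample at least as often as it is sampled. This reduces everything to bounding the collision probability of the two lines through $x$; one then considers the plane $\pi$ spanned by $\ell,\ell'$ and shows that, outside three bad events each of probability at most $\delta_f/6$, both $P^{(f,d)}_\ell$ and $P^{(f,d)}_{\ell'}$ are restrictions of a single bivariate polynomial $Q_\pi$ on $\pi$. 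Your route via a gluing/consensus function is both heavier and leaves a real verification hole: establishing that the pointwise plurality of the values $Q_\pi(y)$ is well defined and (approximately) polynomial on lines for most $y$ needs its own careful argument and is not what the proof needs --- the local collision bound suffices.

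Second, and more critically, you explicitly flag that you cannot beat the Markov bound $\Pr_\pi[\delta_f(\pi)\geq\delta_0]\leq \delta_f/\delta_0\leq 1/2$ and get down to $O(\delta_f)$. The paper's device for this is the point you are missing: apply \cref{lem:bipartite-eml} to the lines--planes graph with the \emph{real-valued} function $g(\ell)=\delta_f(\ell)$ (not an indicator) and the indicator $h(\pi)=\mathbbm{1}[\delta_f(\pi)\geq\delta_0]$. Since $\sigma_g^2\leq\mu_g=\delta_f$, the mixing lemma gives
\[
  \mu\cdot\delta_0-\mu\cdot\delta_f\leq\lambda\sqrt{\mu\cdot\delta_f},
  \qquad\text{hence}\qquad
  \mu=\Pr[\text{E1}]\leq\Bigl(\tfrac{\lambda}{\delta_0-\delta_f}\Bigr)^2\delta_f,
\]
which is $\leq\delta_f/6$ once $q\geq 100/\delta_0^2$. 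That is the step that converts a ``$\leq 1/2$ of planes are bad'' statement into a bound that scales linearly with $\delta_f$, and the same trick is reused for the bad-line events E2, E3. Without it, your route stalls exactly where you say it does.
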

\begin{proof}[Proof of \cref{thm:main-lowerror-ldt}]
Define a sequence of functions $f^{(0)}, f^{(1)}, \ldots \colon \F^m \to \F$ as follows: $f^{(0)} := f$ and for $i \geq 1$, $f^{(i)} := \left(f^{(i-1)}\right)_\corr$. We then have, from \cref{clm:deltafhalves}, that $\delta_{i} := \delta_{f^{(i)}} \leq \nicefrac{\delta_f}{2^i}$. For any fixed $q$ and $m$, since $\delta_f$ can only take a set of finitely many values, we have that $\delta_i$ eventually becomes 0, i.e., there exists $i_*$ such that $\delta_{i_*} = \delta_{f^{(i_*)}} = 0$. Then, $f^{(i_*)}$ is a degree-$d$ $m$-variate polynomial, say $Q$. Now, by \cref{clm:fandfcorrareclose}, we have
\begin{align*}
    \delta(f,Q) 
    = \delta\left(f^{(0)},f^{(i_*)}\right) 
    \leq \sum_{i=1}^{i_*}\delta\left(f^{(i-1)},f^{(i)}\right)
    \leq \sum_{i=1}^{i_*} 2 \delta_{f^{(i-1)}} 
    \leq  2 \sum_{i=1}^{i_*} \nicefrac{\delta_f}{2^{i-1}}\leq 4\delta_f.
\end{align*}
This completes the proof of \cref{thm:main-lowerror-ldt} (assuming the two claims).
\end{proof}

We now proceed to prove the two claims. 

\begin{proof}[Proof of \cref{clm:fandfcorrareclose}]
Given the points table $f \colon \F^m \to \F$ and lines oracle $\ell \mapsto P^{(f,d)}_\ell$, we define the set of ``bad'' points as follows: $\BAD := \{ x \in \F^m \colon \Pr_{\ell \ni x}[P^{(f,d)}_\ell(x) \neq f(x)] \geq \nicefrac{1}{2} \}$.  
Clearly, if $x \notin \BAD$, we have $f_\corr(x) = f(x)$. Hence, $\delta(f,f_\corr) \leq \Pr[ x \in \BAD]$. On the other hand, we have
\begin{align*}
    \delta_f &= \Pr_{\substack{x\in \F^m\\\ell \ni x}}[P^{(f,d)}_\ell(x) \neq f(x)] \\
    &\geq \Pr[x \in \BAD] \cdot \Pr_{\substack{x\in \F^m\\\ell \ni x}}[P^{(f,d)}_\ell(x) \neq f(x) \mid x \in \BAD]\\
    &\geq \delta(f,f_\corr) \cdot \nicefrac{1}{2}.  
\end{align*}
Hence, $\delta(f,f_\corr) \leq 2\delta_f$.
\end{proof}

\subsubsection{Self-correction passes LDT with better probability (Proof of \cref*{clm:deltafhalves})}

This is the heart of the bootstrapping argument, where we use the (high-dimensional) expansion of the points-lines-planes Affine Grassmannian complex to show that the self-corrected function $f_\corr$ passes the LDT with even better probability than the original function $f$. 

We begin by showing the following bound on $\delta_{f_\corr}$.
\begin{align*}
    \delta_{f_\corr} 
    &= \Pr_{x, \ell \ni x}[ P^{(f_\corr,d)}_\ell(x) \neq f_\corr(x)]\\
    & \leq \Pr_{x,\ell \ni x}[ P^{(f,d)}_\ell(x) \neq f_\corr(x)]\\
    & \leq \Pr_{\substack{x\\\ell,\ell' \ni x}}[P^{(f,d)}_\ell(x) \neq P^{(f,d)}_{\ell'}(x)].
\end{align*}
The first inequality follows since $P^{(f,d)}_\ell$ cannot perform any better than the best-fit degree-$d$ polynomial $P^{(f_\corr,d)}_\ell$. The second inequality follows since for each $x\in \F^m$, $f_\corr(x)$ is the most popular value among $P^{(f,d)}_\ell(x)$ as $\ell$ varies over all lines $\ell$ through $x$ and hence the probability (over $\ell)$ that $f_\corr(x)= P^{(f,d)}_\ell(x)$ is at least the collision probability that for two independently chosen lines $\ell, \ell'$ through $x$, we have $P^{(f,d)}_\ell(x) = P^{(f,d)}_{\ell'}(x)$. 

It thus suffices to bound the probability that $P^{(f,d)}_\ell(x) = P^{(f,d)}_{\ell'}(x)$ where $x, \ell, \ell'$ are chosen as follows: $x$ is picked uniformly from $\F^m$, $\ell, \ell'$ are independently chosen to be two lines through $x$ in $\F^m$. An equivalent way of picking this triple is first picking a random plane $\pi$ in $\F^m$, a point $x$ in the plane $\pi$ and two independent lines $\ell, \ell'$ in the plane $\pi$ that contain $x$. 
We now define three (bad) events E1, E2 and E3 (based on the choice of $\pi, \ell, \ell')$ such that (1) if none of the three events happen, then $P^{(f,d)}_\ell(x) = P^{(f,d)}_{\ell'}(x)$ and (2) the probability of each event is at most $\delta_f/6$. This will complete the proof of the claim.

\begin{description}
    \item[Event E1$(\pi)$:] $\delta_f(\pi) \geq \delta_0$.
    
        Consider the bipartite lines-planes incidence graph $G(\calL^{(m)},\calP^{m})$ in $\F^m$ which has second eigen-value at most $\nicefrac{2}{\sqrt{q}}$ (by \cref{thm:spectralgraphs}-\ref{spec:line-plane-graph}). We know that $\delta_f(\pi) = \E_{\ell \in\pi}[\delta_f(\ell)]$ and $\delta_f = \E_\ell[\delta_f(\ell)]$.  Consider the functions $g \colon \calL^{(m)} \to \R$ and $h\colon \calP^{(m)} \to \R$ defined as follows: $g(\ell) := \delta_f(\ell)$ and $h(\pi) := \mathbbm{1}[\delta_f(\pi)\geq \delta_0]$. 
        These functions satisfy $\mu_g = \delta_f$, $\mu_h = \Pr[\text{E1}] =: \mu$, $\sigma^2_g = \Var_\ell[\delta_f(\ell)] \leq \delta_f$ and $\sigma^2_h = \Var_\pi[\mathbbm{1}[\delta_f(\pi)\geq \delta_0]] \leq \mu$. Applying \cref{lem:bipartite-eml} to the graph $G$ with functions $g$ and $h$ as defined above, we have
        \[ \mu \cdot \delta_0 - \mu\cdot \delta_f \leq \lambda \cdot \sqrt{\mu \cdot \delta_f}. \]
        Equivalently, $\mu \leq(\nicefrac{\lambda}{(\delta_0-\delta_f)})^2 \cdot \delta_f$. Choosing $q \geq \nicefrac{100}{\delta_0^2}$, we have $\mu \leq \nicefrac{\delta_f}{6}$ (since $\lambda \approx \nicefrac{1}{\sqrt{q}}$ and $\delta_f \leq \delta_0/2$).

        Furthermore, if event E1 does not happen then by \cref{thm:lowerror-bi-ldt} (the bivariate LDT in the low-error regime), we have that there exists a bivariate degree-$d$ polynomial $Q_\pi$ on the plane $\pi$ such that $\delta(f|_\pi, Q_\pi) \leq 2\delta_f(\pi) \leq 2\delta_0$. 

    \item[Event E2$(\pi,\ell)$:] $\neg$E1 and $\delta(f|_\ell, Q_\pi|_\ell) \geq 4\delta_0$. (Here $Q_\pi$ is the bivariate degree-$d$ polynomial that is guaranteed to exist since E1 does not occur.)

        To begin with let us fix a plane $\pi$ such that E1 does not occur. We will later randomize over the choice of the  plane. Since  E1 does not occur, we have there exists a bivariate degree-$d$ polynomial such that $\delta(f|_\pi, Q_\pi) \leq 2\delta_f(\pi)\leq 2\delta_0$. 

        Consider the bipartite points-line incidence graph $G(\pi,\calL^{(\pi)})$ in the plane $\pi$  which has second eigen-value at most $\nicefrac{1}{\sqrt{q}}$ (by \cref{thm:spectralgraphs}-\ref{spec:point-line-in-plane-graph}). Let $\BAD_\pi$ be the set of lines $\ell$ in $\pi$ such that $\delta(f|_\ell,Q_\pi|_\ell) \geq 4\delta_0$. Consider the functions $g \colon \pi \to \R$ and $h\colon \calL^{(\pi)} \to \R$ defined as follows: $g(x) := \mathbbm{1}[f(x)=Q_\pi(x)]$ and $h(\ell) := \mathbbm{1}[\ell \in \BAD_\pi]$. 
        These functions satisfy $\mu_g = \delta(f|_\pi,Q_\pi)$, $\mu_h = \mu(\BAD_\pi) =: \mu_\pi$, $\sigma^2_g = \Var_x[\mathbbm[f(x)\neq Q_\pi(x)]]\leq \mu_g =\delta(f|_\pi,Q_\pi)$ and $\sigma^2_h = \Var_\ell[\mathbbm{1}[\ell \in \BAD_\pi]] \leq \mu_\pi$. Applying \cref{lem:bipartite-eml} to the graph $G$ with functions $g$ and $h$ as defined above, we have
        \[ \mu_\pi \cdot 4\delta_0 - \mu\cdot \delta(f|_\pi,Q_\pi) \leq \lambda \cdot \sqrt{\mu_\pi \cdot \delta(f|_\pi,Q_\pi)}. \]
        Equivalently, $\mu_\pi \leq(\nicefrac{\lambda}{(4\delta_0-\delta(f|_\pi,Q_\pi))})^2 \cdot \delta(f|_\pi,Q_\pi) \leq (\nicefrac{\lambda^2}{2\delta_0^2})\cdot \delta_f(\pi)$ (since $\delta(f|_\pi,Q_\pi) \leq 2\delta_f(\pi) \leq 2\delta_0$). Choosing $q \geq \nicefrac{3}{\delta_0^2}$, we have that for this choice of $\pi$, $\mu_\pi \leq \nicefrac{\delta_f(\pi)}{6}$ (since $\lambda \approx \nicefrac{1}{\sqrt{q}}$ and $\delta_f \leq \delta_0/2$).

        We now average over $\pi$ as follows:
        \begin{align*}
            \Pr_{\pi,\ell}[\text{E2}] = \E_\pi\left[\mathbbm{\neg\text{E1}(\pi)} \cdot \mu_\pi \right] \leq \E_\pi\left[\mathbbm{\neg\text{E1}(\pi)} \cdot (\nicefrac{\delta_f(\pi)}{6}) \right] \leq \E_\pi[\nicefrac{\delta_f(\pi)}{6}] = \nicefrac{\delta_f}{6}.
        \end{align*}

        Observe that if for a particular choice of random $\pi$ and $\ell$, events E1 and E2 do not occur, then there exists a bivariate degree-$d$ polynomial $Q_\pi$ such that $\delta(f|_\pi, Q_\pi) \leq 2\delta_f(\pi) \leq 2\delta_0$ and furthermore $\delta(f|_\ell, Q_\pi|_\ell) \leq 4\delta_0$. If $1-8\delta_0 \geq \nicefrac{d}{q}$, then $Q_\pi|_\ell$ is the (unique) best-fit degree-$d$ polynomial to $f|_\ell$, i.e., $P^{(f,d)}_\ell = Q_\pi|_\ell$.

    \item[Event E3$(\pi,\ell')$:] $\neg$E1 and $\delta(f|_{\ell'}, Q_\pi|_{\ell'}) \geq 4\delta_0$.
    
        This event is identical to E2 and hence $\Pr_{\pi,\ell'}[\text{E3}] \leq \delta_f/6$. 
\end{description}  

Clearly, if events E1, E2 and E3 do not occur (for a particular choice of $\pi, \ell, \ell'$ and $x$), we have that there exists a degree-$d$ bivariate polynomial $Q_\pi$ such that $P^{(f,d)}_\ell = Q_\pi|_\ell$ and  $P^{(f,d)}_{\ell'} = Q_\pi|_{\ell'}$. Hence, $P^{(f,d)}_\ell(x) = Q(x)= P^{(f,d)}_{\ell'}(x)$. This completes the proof of the claim. \qed

\subsection{High-error regime: Proof of \cref*{thm:higherror-ldt}}\label{sec:higherrorbootstrapping}

A degree-$d$ lines oracle assigns to each line in $\F^m$, a a degree-$d$ polynomials (presented as a list of evaluations, i.e., a Reed-Solomom codeword) or $\bot$. 
We say that the degree-$d$ lines oracle is \emph{$\eps$-well-behaved with respect to the function $f\colon \F^m \to \F$} if for every line $\ell$, $P_\ell$ is well-defined (i.e, $\neq \bot$) if and only if $\Pr_{x\in \ell}[f(x) = P_\ell(x)] \geq \eps$. 

We say a point $x \in \F^m$ is \emph{$\eps$-good} if it agrees with at least an $\eps$-fraction of lines that pass through it, i.e.,
\[ \Pr_{\ell \ni x}[f(x) = P_\ell(x)] \geq \eps. \] We will refer to the set of $\eps$-good points in $\F^m$ as $\eps\text{-GOOD}$. We will need the above notation, both when the ambient dimension is $m=2$ (i.e., a plane) or general $m$ (i.e., $\F^m$). To distinguish these two cases, in the former we say ``$x$ is $\eps$-good with respect to plane $\pi$'' , while in the latter we just say ``$x$ is $\eps$-good''. Sometimes, we also say ``$x$ is $\eps$-locally-good'' vs. ``$x$ is $\eps$-globally-good''. 

We say that ``$x$ is $\eps$-explained with respect to a plane $\pi$'' if there exists an bivariate degree-$d$ polynomial $Q$ (defined on the plane $\pi$) such that (1) $\Pr_{z\in \pi}[f(z)=Q(z)] \geq \eps$ and (2) $f(x)=Q(x)$. 

We will assume the the bivariate LDT, given by \cref{thm:bivariate-list-decoding-LDT}, which states the following (rewrriten in the language of ``''$\eps$-explained''-ness). Let $\eps_0 := \Omega((\nicefrac{d}{q})^{\nicefrac{1}{16}})$. For every $\eps \geq \eps_0$, the following holds. For any function $f \colon \F^2 \to \F$, 
\begin{equation} \label{eq:strong-bi-ldt}
  \Pr_x[ x \text{ is $\eps$-good but not $\eps^8$-explained }] \leq \eps_0.
\end{equation}

We prove \cref{thm:higherror-ldt} by defining a self-corrected function as in the low-error regime and showing that the corrected function passes the LDT with significantly higher probability. However, unlike the low-error regime, there are several candidate corrected functions and we disambiguate among them using an additional advice $(x,\sigma) \in \F^m \times \F$. 

For $x \in \F^m$ and $\sigma \in \F$, define $f_{\corr_\delta}^{x,\sigma}\colon \F^m \to (\F\cup \bot)$ as follows: For any $y \in \F^m \setminus \{x\}$, let $P$ be the unique degree-$d$ univariate polynomial on the unique line $\ell = \ell_{x,y}$ through $x$ and $y$ such that (1) $\Pr_{z\in \ell}[f(z)=P(z)] \geq \delta$ and (2) $P(x) = f(x)$. If there is no such polynomial $P$ or there is more than one such polynomial, we set $P := \bot$. Finally, we set $f_{\corr_\delta}^{x,\sigma}(y) := P(y)$. Also, set $f_{\corr_\delta}^{x,\sigma}(x) := \sigma$.  Finally we define $f_{\corr_\delta}^{x} := f_{\corr_\delta}^{x,f(x)}$. We will refer to the function $f_{\corr_\delta}^{x}$ as the $x$-corrected function. 

The following lemma states for a random $\eps$-good $x$, the corrected function $f_{\corr_\delta}^x$ passes the LDT with very high probability $1-\gamma$. 

\begin{lemma}\label{clm:randomfcorrxpassesldt} 
  Let the field $\F$, degree $d$ and $\mu, \gamma, \eps \in (0,1)$ satisfy $q \geq 800d\cdot \max\left\{\nicefrac{1}{(\mu\gamma)^{16}}, \nicefrac{1}{\mu\gamma^2\eps^{16}}\right\}$. 
    There is a integer $C$ and $\tau \in (0,1)$ such Given a function $f$ and $\eps$-well-behaved degree-$d$ lines oracle, let $S$ be a subset of the $2\eps$-good points in $\F^m$ of density $\mu$. Then, the distribution $(x,\ell)$ obtained by picking a random point in $S$ and a random line $\ell$ in $\F^m$ satisfies
    \[ \Pr_{x \in S, \ell}\left[ \exists \text{ degree-$d$ polynomial $P_\ell$ such that } \delta(f^x_{\corr_{{\eps^8}/{2}}}|_\ell, P) \leq \gamma \right] \geq 1- 2\gamma. \]  
  \end{lemma}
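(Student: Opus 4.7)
The plan is to decompose the desired event into two stages over the random pair $(x,\ell)$. Write $\pi = \pi(x,\ell)$ for the unique affine plane containing both $x$ and $\ell$ (well-defined with overwhelming probability, since a uniformly random line in $\F^m$ misses any fixed point). I will first show that with probability at least $1-\gamma$ there exists a bivariate degree-$d$ polynomial $Q = Q^{x}_{\pi}$ on $\pi$ with $Q(x) = f(x)$ and $\Pr_{z\in\pi}[f(z)=Q(z)]\geq \eps^{8}$; then, conditional on this good event, I show that $f^{x}_{\corr_{\eps^{8}/2}}|_{\ell}$ disagrees with $Q|_{\ell}$ on at most a $\gamma$-fraction of $\ell$ with probability at least $1-\gamma$. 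A union bound, together with the witness $P := Q|_{\ell}$, then yields the desired $1-2\gamma$ success probability.

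Stage~1 combines two expander-mixing estimates. Since $x \in S$ is $2\eps$-globally-good, \cref{lem:bipartite-eml} applied to the lines--planes--through--$x$ graph (\cref{thm:spectralgraphs}-\ref{spec:lines-planes-graph-containing-x}, $\lambda = \nicefrac{1}{\sqrt{q+1}}$) gives that the probability $x$ fails to be $\eps$-locally-good in a uniformly random plane $\pi \ni x$ is at most $O(\nicefrac{1}{q\eps})$. Next, for every plane $\pi$ the list-decoding form of the bivariate LDT (\cref{thm:bivariate-list-decoding-LDT}) says that at most $\eps_0 q^{2}$ points of $\pi$ are $\eps$-locally-good in $\pi$ yet miss every $\eps^8$-agreement bivariate polynomial. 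Applying \cref{lem:bipartite-eml-withbadset} to the point--plane graph (\cref{thm:spectralgraphs}-\ref{spec:point-plane-graph}, $\lambda = \nicefrac{1}{q}$) with $A' = S$ transfers this to the $x \in S$ marginal at an additive cost of $O(\nicefrac{1}{q\sqrt{\mu}})$. Combining, the Stage~1 failure probability is at most $\nicefrac{\eps_0}{\mu} + O(\nicefrac{1}{q\eps}) + O(\nicefrac{1}{q\sqrt{\mu}})$, which is at most $\gamma$ under the hypothesis $q \geq 800d/(\mu\gamma)^{16}$ (using $\eps_0 = O((\nicefrac{d}{q})^{\nicefrac{1}{16}})$ from the bivariate LDT).

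Stage~2 proceeds by bounding the \emph{expected} fraction of bad $y\in \ell$ by $\gamma^{2}$ and then applying Markov. Fix $(x,\ell)$ in the Stage~1 good event, set $\ell'_{y} := \ell_{x,y}$, and observe that $f^{x}_{\corr_{\eps^{8}/2}}(y) = Q(y)$ as soon as $Q|_{\ell'_{y}}$ is the unique degree-$d$ univariate on $\ell'_{y}$ that has both agreement $\geq \eps^{8}/2$ with $f|_{\ell'_{y}}$ and value $f(x)$ at $x$. I separately bound the probabilities of the two failure modes under the induced distribution on $(x,\pi,\ell'_{y})$.
\begin{itemize}
\item \emph{Low agreement.} Since $Q$ has $\eps^{8}$ agreement on $\pi$, EML on the point--line graph of $\pi$ (\cref{thm:spectralgraphs}-\ref{spec:point-line-in-plane-graph}, $\lambda = \nicefrac{1}{\sqrt{q}}$) bounds the fraction of lines in $\pi$ with $Q$-agreement below $\eps^{8}/2$ by $O(\nicefrac{1}{q\eps^{16}})$. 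Double-counting shows that averaged over $x \in \pi$, the fraction of lines-through-$x$ in $\pi$ that are ``bad'' is also $O(\nicefrac{1}{q\eps^{16}})$; restricting to $x \in S$ introduces a factor $\nicefrac{1}{\mu}$, giving an overall contribution of $O(\nicefrac{1}{\mu q \eps^{16}})$.
\item \emph{Non-uniqueness at $x$.} Any failure of uniqueness forces $x$ to be a $\nonunique^{d}_{\eps^8/2}$-point on $\ell'_{y}$. By \cref{thm:johnson}-\ref{item:nonunique} each line carries at most $O(\nicefrac{d}{\eps^{16}})$ such points, so a (point, line)-double-count inside $\pi$ followed by the same $\nicefrac{1}{\mu}$-averaging gives a contribution of $O(\nicefrac{d}{\mu q\eps^{16}})$.
\end{itemize}
Summing and using $q \geq 800d/(\mu\gamma^{2}\eps^{16})$ gives expected bad fraction at most $\gamma^{2}$, and Markov's inequality converts this into the claimed $(1-\gamma)$-probability of $\gamma$-closeness to $Q|_{\ell}$, completing the proof.

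The main obstacle is the averaging in Stage~2: within a fixed plane $\pi$ the lines through $x$ form too sparse a subset of $\calL^{(\pi)}$ for EML to directly control the $Q$-agreement on them. One must first use EML on \emph{all} lines of $\pi$, then invoke the symmetry between ``random plane $\pi \ni x$, random line through $x$ in $\pi$'' and ``random line through $x$ in $\F^m$'' to pass to the desired marginal, absorbing the $\nicefrac{1}{\mu}$ cost of conditioning on $x\in S$. The two clauses of the hypothesis $q \geq 800d \cdot \max\{\nicefrac{1}{(\mu\gamma)^{16}},\,\nicefrac{1}{\mu\gamma^{2}\eps^{16}}\}$ correspond precisely to these two tight points: the first driven by $\eps_0\sim (\nicefrac{d}{q})^{\nicefrac{1}{16}}$ in Stage~1, and the second by the Johnson-bound contribution $\nicefrac{d}{\mu q \eps^{16}}$ in Stage~2.
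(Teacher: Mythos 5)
Your proof follows essentially the same route as the paper: both pass from the ``random $x \in S$, random line $\ell$'' distribution to a ``random plane $\pi$, random point/line inside $\pi$'' distribution using expander mixing on the affine Grassmannian, both invoke the bivariate list-decoding LDT to find the explainer $Q^x_\pi$, and both use Johnson/Chebyshev plus Markov to control the fraction of bad points on $\ell$. The paper packages this as a TV-distance claim (\cref{clm:bootstrap-tvd}) followed by four events E1--E4; you merge the TV shift into each of your two stages, but the ingredients and quantitative bounds match.

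One step as written needs repair. In your ``Low agreement'' bullet you fix a single $Q$ and bound the fraction of lines in $\pi$ with $Q$-agreement below $\eps^8/2$, then ``average over $x\in\pi$.'' But the polynomial you care about is $Q = Q^x_\pi$, which changes with $x$; the set of ``bad'' lines is therefore $x$-dependent, and the EML estimate for a fixed $Q$ does not directly transfer to an averaged-over-$x$ statement. The paper handles this (event B1 in the proof of \cref{clm:E4}) by requiring that \emph{every} $Q_j$ in the Johnson list of $\eps^8$-agreeing bivariate polynomials on $\pi$ have good agreement on the chosen line, paying the union-bound factor $t = O(1/\eps^8)$; this is how one legitimately arrives at the $O(1/(q\eps^{16}))$ fraction of bad lines you quote (a single $Q$ only gives $O(1/(q\eps^8))$). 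Without the union bound over the candidate list, the double-count over $x$ does not follow. A second, smaller omission: your bound $\eps_0/\mu$ for the probability of $x$ being unexplained implicitly assumes $|S\cap\pi| \gtrsim \mu q^2$, which requires a separate concentration estimate (the paper's event E1, bounded by the point--plane EML); it is dominated by your other error terms, but it should be stated since the conditional ratio is otherwise uncontrolled.
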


We now complete the proof of \cref{thm:higherror-ldt} assuming this lemma. 

\begin{proof}[Proof of \cref{thm:higherror-ldt}]
    We are given a points table $f$ and a degree-$d$ lines oracle $\ell \mapsto P_\ell$ for $m$ dimensions such that $\Pr_{x,\ell \ni x}[f(x)=P_\ell(x)] \geq 5\eps$. We first  modify the lines oracle by setting $P_\ell$ to $\bot$ if $\Pr_{x\in \ell}[f(x) = P_\ell(x)] < \eps$. This ensures, that the lines oracle is $\eps$-well-behaved wrt $f$. This reduces the acceptance probability of the LDT by at most $\eps$ and the modified lines oracle satisfies
    \begin{equation}\label{eq:LDTassumption}
        \Pr_{x, \ell \ni x}[f(x) = P_\ell(x)] \geq 4\eps.
    \end{equation}
    We will be setting $\mu := \eps$ and $\gamma := \sfrac{\eps^2}{12}$. There exists a suitable large constant $C$,  such that $q \geq \nicefrac{C\cdot d}{\eps^{48}}$ implies that the field-size is large enough for \cref{lem:randomfcorrxpassesldt} for this choice of $\eps, \mu, \gamma$.

    For any line $\ell$, let $P^{(\ell)}_1,\dots,P^{(\ell)}_r$ be all the univariate degree-$d$ polynomials $P$ that satisfy $\Pr_{z\in \ell}[P(z)=f(z)] \geq \nicefrac{\eps^8}{2}$. Let $\nonunique^d_{\nicefrac{\eps^8}{2}}(f|_\ell)$ be the set of all points $x$ on the line $\ell$ such that there exist two distinct polynomials $P^{(\ell)}_k \neq P^{(\ell)}_{k'}$ such that $P^{(\ell)}_k(x)= P^{(\ell)}_{k'}(x)$. By \cref{thm:johnson}-\ref{item:nonunique}, we have that the number of points in $\nonunique^d_{\nicefrac{\eps^8}{2}}(f|_\ell)$ is at most $\nicefrac{8d}{\eps^{16}}$. 

    The LDT hypothesis \eqref{eq:LDTassumption} implies that
    \[ \Pr_{\ell, x\in \ell} [f(x) = P_\ell(x) \text{ and } x \notin \nonunique^{d}_{\nicefrac{\eps^8}{2}}(f_\ell)] \geq 4\eps - \nicefrac{8d}{q\cdot (\eps^8)^2} \geq 3\eps,\]
    provided $q \geq \nicefrac{8d}{\eps\cdot \eps^{16}}$. Or equivalently,
    \[ \E_x\left[\Pr_{\ell \ni x} \left[f(x) = P_\ell(x) \text{ and } x \notin \nonunique^{d}_{\nicefrac{\eps^8}{2}}(f|_\ell)\right] \right] \geq 3\eps.\]
    Define $S \subseteq \F^m$ to be the set of points as follows:
    \begin{equation*}
      S := \left\{ x \in \F^m \colon \Pr_{\ell \ni x} \left[f(x) = P_\ell(x) \text{ and } x \notin \nonunique^{d}_{\nicefrac{\eps^8}{2}}(f|_\ell)\right] \geq 2\eps \right\} \,.
    \end{equation*}
    We thus have $\Pr_{x \in \F^m}[x \in S]\geq \eps$. Also observe that any $x\in S$ is $2\eps$-good. We now apply \cref{clm:randomfcorrxpassesldt} on the set $S$ with $\mu :=\eps$ to obtain that $f^x_{\corr_{\nicefrac{\eps^8}{2}}}$ passes the LDT with probability at least $3\gamma$ for a random $x$. Fix any such $x\in S$. We have $\delta_{f_{\corr_{\nicefrac{\eps^8}{2}}}^x} \leq 3\gamma$ where $\delta_f$ is as defined in \eqref{eq:deltaf}. Applying \cref{thm:main-lowerror-ldt} ($m$-variate LDT in the low-error regime), we obtain that there exists a $m$-variate degree-$d$ polynomial $Q^x$ such that $\delta(Q^x,f_{\corr_{{\eps^8}/{2}}}^x) \leq 12\gamma$. 

    We will now argue that $f$ and $f^x_{\corr_{\nicefrac{\eps^8}{2}}}$ agree on at least $2\eps^2$ fraction of points since $x \in S$. Recall the definition of $S$ and the definition of the $x$-corrected function $f^x_{\corr_{\nicefrac{\eps^8}{2}}}$. For every line $\ell$ through $x$ such that $f(x)=P_\ell(x)$ and $x \notin \nonunique^d_{\nicefrac{\eps^8}{2}}$, we have that $f^x_{\corr_{\nicefrac{\eps^8}{2}}}(z)=P_\ell(z)$ for every $z\in \ell$ such that $P_\ell(z)= f(z)$. This is because (1) $P_\ell$ agrees with $f$ on at least $\eps\geq \nicefrac{\eps^8}{2}$-fraction of the points (since the lines oracle is $\eps$-well-behaved), (2) $P_\ell(x)=f(x)$ and (3) $P_\ell$ is the unique polynomial satisfying (1) and (2). Hence, for any such line $\ell$, we have $\Pr_{z\in \ell}[f^x_{\corr_{\nicefrac{\eps^8}{2}}}(z)=f(z)]\geq \eps$. Furthermore, there are at least $2\eps$-fraction of such lines through $x$ since $x\in S$. Hence, $\Pr_{z}[f^x_{\corr_{\nicefrac{\eps^8}{2}}}(z)=f(z)]\geq 2\eps\cdot \eps =2\eps^2$.

    Combining this, with $\delta(Q^x,f_{\corr_{{\eps^8}/{2}}}^x) \leq 12\gamma$, we have that $\Pr_{z}[Q^x(z)=f(z)]\geq 2\eps^2 -12\gamma \geq \eps^2$ provided $\gamma \leq \nicefrac{\eps^2}{12}$. This completes the proof of the theorem.
\end{proof}

\subsubsection{Corrected function passes LDT with high probability (proof of \cref*{clm:randomfcorrxpassesldt})}

Our plan is to show that if the subset $S$ of $2\epsilon$-globally good points is of density at least $\mu$, then a random point $x$ in $S$ satisfies that the corresponding  $x$-corrected function passes the low-degree test with probability $1 - O(\gamma)$ (for arbitrarily small $\gamma > 0$ provided $d/q$ is small enough). To do so we consider a random point $x\in S$ and a random line $\ell$ and consider the plane $\pi$ containing $x$ and $\ell$ and prove that with probability $1 - \gamma$ the following hold: There is a bivariate polynomial $Q_{\pi}^x$ on the plane $\pi$ such that (1) $Q_\pi^x$ has agreement at least $\epsilon^8$ with $f|_\pi$, (2) $Q_\pi^x(x) = f(x)$, and (3) For $1 - \gamma$ fraction of the points $y \in \pi$, if we let $\ell'$ be the line through $x$ and $y$, then $Q_\pi^x|_{\ell'}$ is the unique polynomial with $\epsilon^8/2$ agreement with $f|_{\ell'}$.

Note that the existence of even one point $x$ with properties (1)-(3) above suffice, but the proof essentially forces us to prove that most points in $S$ satisfy properties (1)-(3).

We start with a preliminary statement. Note that the distribution on the triples $(x,\ell,\pi)$ of interest to us is the following: $D_1$ is the distribution obtained by picking $x$ to be a uniform point in $S$, $\ell$ a uniformly random line in $\F^m$ and $\pi$ be a uniform plane containing $\ell$ and $x$ (w.h.p. $\pi$ is unique given $\ell$ and $x$, but if $x \in \ell$ then $\pi$ is a uniform plane containing $\ell$) and output $(x,\ell,\pi)$. Now consider the related distribution $D_2$ obtained by sampling a plane $\pi$ uniformly in $\F^m$, then sampling a line $\ell$ uniformly in $\pi$ and a point $x\in S$ on $\pi$ and outputting $(x,\ell,\pi)$. If there is no point of $S$ in $\pi$ (i.e., $S \cap \pi = \emptyset$), return $\bot$. The following claim establishes that these two distributions are $O(1/(\epsilon  q))$-close in statistical distance. 

\newcommand{\Donep}{\widetilde{D_1}}
\newcommand{\Dtwop}{\widetilde{D_2}}

\begin{claim}\label{clm:bootstrap-tvd}
    Suppose $|\F| \geq \sfrac{3}{\gamma\cdot \sqrt{\mu}}$ where $\mu= \Pr_{x}[x \in S]$. Then $\|D_1 - D_2\|_{\mathrm{TV}} \leq \gamma$
\end{claim}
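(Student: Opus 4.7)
The plan is to reduce the TV distance computation to comparing the marginals on $\pi$ under $D_1$ and $D_2$, after peeling off two low-probability degeneracies: the event $x\in\ell$ and the $\bot$ outcome of $D_2$.

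First I would count the probability of the event $x\in\ell$ in each distribution. Under $D_1$, for any fixed $x$, the fraction of lines in $\F_q^m$ through $x$ is $1/q^{m-1}$, so $\Pr_{D_1}[x\in\ell] = 1/q^{m-1}$; under $D_2$, an analogous count within a plane gives $\Pr_{D_2}[x\in\ell] = (q+1)/(q^2+q) = 1/q$. This event therefore contributes at most $2/q$ to the TV distance. Separately, the $\bot$ outcome of $D_2$ corresponds to planes with $|S\cap\pi|=0$, whose measure will fall out of Chebyshev applied to the random variable $X(\pi) := |S\cap\pi|/q^2$.

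Next, conditional on $x\notin\ell$ in both distributions, the plane $\pi = \pi(x,\ell)$ is uniquely determined by $(x,\ell)$, and the set of admissible triples $\{(x,\ell,\pi) : x \in S\cap\pi,\ \ell\subseteq\pi,\ x\notin\ell\}$ is the same. I would check by direct counting that the conditional distribution of $(x,\ell)$ given $\pi$ coincides under $D_1$ and $D_2$: namely, $x$ is uniform in $S\cap\pi$ and $\ell$ is uniform among the $q^2-1$ lines of $\pi$ not through $x$. The TV computation therefore reduces to comparing the marginals on $\pi$. A direct count gives $\Pr_{D_1}[\pi_0 \mid x\notin\ell] \propto |S\cap\pi_0|$, while $\Pr_{D_2}[\pi_0 \mid x\notin\ell]$ is uniform on planes with $S\cap\pi\neq\emptyset$ (the $|S\cap\pi|$ factors in the $D_2$ count and the $1/|S\cap\pi|$ weight cancel exactly). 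Written in terms of $X(\pi)$, the $\pi$-marginal TV equals $\tfrac{1}{2\mu}\,\E_\pi|X(\pi) - \mu|$ up to the $\bot$ contribution.

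The main technical step---and the source of the $1/\sqrt{\mu}$ factor in the claim's hypothesis---is to bound $\E_\pi|X(\pi) - \mu|$. I would use the Expander Mixing Lemma (\cref{lem:bipartite-eml}) on the point-plane bipartite graph $G(\F_q^m, \calP^{(m)})$, whose second singular value is $1/q$ by \cref{thm:spectralgraphs}. Applying it with $g = \mathbbm{1}_S$ and $h(\pi) = X(\pi) - \mu$ yields $\Var_\pi(X) \leq \mu(1-\mu)/q^2 \leq \mu/q^2$, and hence $\E_\pi|X(\pi) - \mu| \leq \sqrt{\mu}/q$ by Cauchy--Schwarz, and $\Pr_\pi[S \cap \pi = \emptyset] \leq 1/(\mu q^2)$ by Chebyshev. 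Adding up the three contributions yields a TV bound of order $1/q + 1/(q\sqrt{\mu})$, which is at most $\gamma$ whenever $q \geq 3/(\gamma\sqrt\mu)$. The only nonroutine step is the asymmetric $|S\cap\pi|$-weighting of planes between the two distributions; the spectral concentration of $X(\pi)$ around its mean is exactly what is needed to tame it.
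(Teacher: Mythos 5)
Your proposal is correct and follows the paper's strategy: peel off the $x\in\ell$ degeneracy (cost $O(1/q)$), observe that the conditional sampling of the remaining coordinates given the shared ones agrees, and use spectral expansion of the point--plane inclusion graph to bound the residual TV distance by $O(1/(q\sqrt{\mu}))$. The one difference is in the last step: the paper stops at the $(x,\pi)$-marginal and invokes \cref{lem:bipartite-eml-withbadset} as a black box, whereas you reduce further to the $\pi$-marginal (noting that $x\mid\pi$ is also identically distributed, uniform on $S\cap\pi$), express the TV as $\tfrac{1}{2\mu}\E_\pi|X(\pi)-\mu|$ plus the $\bot$ mass, and bound both terms from $\Var_\pi(X)\le \mu/q^2$ via Cauchy--Schwarz and Chebyshev --- a more explicit but equivalent route yielding the same bound.
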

\begin{proof}
We construct two related distributions $\Donep$ and $\Dtwop$ as follows. $\Donep$ is the distribution obtained by picking $x$ to be a uniform point in $S$, $\ell$ a uniformly random line in $\F^m$ not containing $x$ in $\F^m$ and $\pi$ be a uniform plane containing $\ell$ and $x$ (note $\pi$ is unique given $\ell$ and $x$) and output $(x,\ell,\pi)$. The distribution $\Dtwop$ is obtained by sampling a plane $\pi$ uniformly in $\F^m$, a point $x\in S$ on $\pi$, a line $\ell$ uniformly in $\pi$ not containing $x$ and then outputting $(x,\ell,\pi)$. If there is no point of $S$ in $\pi$ (i.e., $S \cap \pi = \emptyset$), return $\bot$. The only difference between $D1$ and $\Donep$ is that the line $\ell$ definitely does not pass through $x$ in $\Donep$ while it may do so with probability with probability $\sfrac{1}{q^{m-1}}$ in $D_1$ . Thus, $\|D_1-\Donep\|_{\mathrm{TV}}\leq \sfrac{1}{q^{m-1}}$. Similarly, $\|D_2 - \Dtwop\|_{\mathrm{TV}} \leq \sfrac{1}{q}$. Hence, $\|D_1 - D_2\|_{\mathrm{TV}} \leq \|\Donep - \Dtwop\|_{\mathrm{TV}} + \sfrac{1}{q} +\sfrac{1}{q^{m-1}}\leq \sfrac{2}{q}$.

Consider the distribution $\Donep$. It can alternatively be sampled as follows. Pick a random point $x \in S$, a random plane $\pi$ containing $x$, a line $\ell$ uniformly in $\pi$ not containing $x$ and then outputting the triple $(x,\ell,\pi)$. Since the generative processes for picking $\ell$ in $\Donep$ and $\Dtwop$ are identical given the pair $(x,\pi)$ which are picked differently, the distance between the distributions $\Donep$ and $\Dtwop$ is exactly the distance between the marginals of $\Donep$ and $\Dtwop$ on the the $(x,\pi)$ coordinates. By \cref{lem:bipartite-eml-withbadset}, this distance is at most $\sfrac{\lambda}{\mu(S)}$ where $\lambda$ is the second eigen-value of the points-plane incidence graph $G=G(\F^m,\calP^{(m)})$ which is at most $\sfrac{1}{q}$ (\cref{thm:spectralgraphs}-\ref{spec:point-plane-graph}). Hence, $\|\Donep-\Dtwop\|_{\mathrm{TV}}\leq \sfrac{1}{q\sqrt{\mu}}$. 

Hence, $\|D_1-D_2\|_{\mathrm{TV}} \leq \sfrac{2}{q}+\sfrac{q}{q\sqrt{\mu}}\leq \gamma$ provided $q \geq \sfrac{3}{\gamma\cdot \sqrt{\mu}}$.
\end{proof}  

Given this claim, it suffices to prove the following lemma to prove \cref{clm:randomfcorrxpassesldt}.

\begin{lemma}\label{lem:randomfcorrxpassesldt}
  Let the field $\F$, degree $d$ and $\mu, \gamma, \eps \in (0,1)$ satisfy $q \geq 800d\cdot \max\left\{\nicefrac{1}{(\mu\gamma)^{16}}, \nicefrac{1}{\mu\gamma^2\eps^{16}}\right\}$. Given a function $f$ and $\eps$-well-behaved degree-$d$ lines oracle, let $S$ be a subset of the $2\eps$-good points in $\F^m$ of density $\mu$. Then, the distribution $D_2$ on triples $(\pi, x,\ell)$ obtained by picking a random plane $\pi$ in $\F^m$, a random $x\in S$ in the plane $\pi$ and a random line $\ell$ in the plane $\pi$satisfies
    \[ \Pr_{(\pi,x,\ell)\sim D_2}\left[ \exists \text{ degree-$d$ polynomial $P_\ell^x$ such that } \delta(f^x_{\corr_{{\eps^8}/{2}}}|_\ell, P^x_\ell) \leq \gamma \right] \geq 1- \gamma \,.\]  
\end{lemma}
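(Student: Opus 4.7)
The natural choice for $P^x_\ell$ is $Q^x_\pi|_\ell$, where $Q^x_\pi$ is a bivariate degree-$d$ polynomial produced by applying the bivariate LDT in its list-decoded form \eqref{eq:strong-bi-ldt} to $f|_\pi$. With this choice, the proof splits into two conceptual parts: showing that such a $Q^x_\pi$ exists for most $(\pi,x)$ drawn from the marginal of $D_2$, and then showing that $f^x_{\corr_{\eps^8/2}}|_\ell$ coincides with $Q^x_\pi|_\ell$ on all but a $\gamma$-fraction of $\ell$ for most $\ell \subseteq \pi$.

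\textbf{Step 1 (Existence of $Q^x_\pi$).} I would first argue that with probability at least $1 - \gamma/3$ over $(\pi,x)$ in the marginal of $D_2$, there is a bivariate polynomial $Q^x_\pi$ on $\pi$ of degree at most $d$ with $Q^x_\pi(x) = f(x)$ and $\agree(Q^x_\pi, f|_\pi) \geq \eps^8$. Since every $x \in S$ is $2\eps$-globally-good and $D_2$ marginally places $x$ on a random plane through it, the lines-planes-containing-$x$ spectral bound from \cref{thm:spectralgraphs}-\ref{spec:lines-planes-graph-containing-x} together with \cref{lem:bipartite-eml-withbadset} transfers global $2\eps$-goodness to local $\eps$-goodness on $\pi$ except with probability $O(1/\sqrt{q\mu})$. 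Equation \eqref{eq:strong-bi-ldt} then produces the desired $Q^x_\pi$ on any such $\pi$, at the cost of an additional additive $\eps_0 = O((d/q)^{1/16})$.

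\textbf{Step 2 (Closeness of $f^x_\corr|_\ell$ to $Q^x_\pi|_\ell$).} Given $(\pi,x)$ and $Q^x_\pi$ as in Step 1, by Markov over random $\ell \subseteq \pi$ it suffices to establish the averaged bound $\Pr_{y \in \pi}[f^x_{\corr_{\eps^8/2}}(y) \neq Q^x_\pi(y)] \leq O(\gamma^2)$. Unpacking the definition of $f^x_\corr$, this equality holds whenever the line $\ell_{x,y}$ satisfies two conditions: (i) $Q^x_\pi|_{\ell_{x,y}}$ is in the univariate list of degree-$d$ polynomials with $\geq \eps^8/2$ agreement with $f|_{\ell_{x,y}}$, and (ii) $Q^x_\pi|_{\ell_{x,y}}$ is the unique element of that list with $P(x) = f(x)$. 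The uniqueness clause (ii) is handled via the Johnson bound (\cref{thm:johnson}): the univariate list has at most $4/\eps^8$ members and any two distinct members coincide on at most $d$ points, so the non-unique locus on each line has density $O(d/(q\eps^{16}))$. The containment clause (i) is handled via \cref{lem:bipartite-eml} on the points-lines graph $G(\pi,\calL^{(\pi)})$ (spectral gap $1/\sqrt{q}$ via \cref{thm:spectralgraphs}-\ref{spec:point-line-in-plane-graph}): since the agreement set of $Q^x_\pi$ with $f$ in $\pi$ has density $\geq \eps^8$, the density of lines on which the agreement of $Q^x_\pi|_{\ell'}$ with $f|_{\ell'}$ drops below $\eps^8/2$ is $O(1/(q\eps^{16}))$. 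Summing contributions from (i) and (ii) and invoking the large-$q$ hypothesis yields the desired $O(\gamma^2)$ bound.

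\textbf{Main obstacle.} The principal difficulty is that the lines $\ell_{x,y}$ governing $f^x_\corr$ all pass through the fixed advice point $x$, and the bipartite graph between points of $\pi$ and lines through $x$ in $\pi$ is essentially a disjoint union of stars with no usable spectral gap. A naive reverse-Markov argument on ``lines through $x$'' gives a useless bound because the average agreement of $Q^x_\pi$ with $f$ along such lines is only $\eps^8$. The saving move is to perform the spectral estimates on the richer ambient graph $G(\pi,\calL^{(\pi)})$ of all points and all lines in $\pi$ (which does have spectral gap $1/\sqrt{q}$), and to couple the $D_2$-randomness $(\ell, y)$ --- under which $y$ is essentially uniform over $\pi$ --- to the derived line $\ell_{x,y}$ only after the relevant concentration estimates are in hand. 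Chaining the two EML steps (lines-planes-through-$x$ for Step 1, points-lines-in-$\pi$ for Step 2) without loss of quantitative parameters is the most delicate bookkeeping of the proof, and is what forces the large-field hypothesis $q \geq 800d\cdot \max\{1/(\mu\gamma)^{16},\,1/(\mu\gamma^2\eps^{16})\}$ in the statement.
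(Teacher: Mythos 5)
Your plan follows essentially the same route as the paper: pick $Q_\pi^x$ via the bivariate list-decoding LDT, reduce to showing $f^x_{\corr}$ agrees with $Q_\pi^x$ on most of $\pi$, and control the failure via a ``containment'' condition (your (i), the paper's B1) and a ``uniqueness'' condition (your (ii), the paper's B3) on the line $\ell_{x,y}$, both bounded by Johnson-bound / second-moment arguments, with the star-graph issue worked around by switching to the ambient points--lines incidence graph. You have also correctly diagnosed the main obstacle and the shape of the fix.

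One place where your write-up, taken literally, is wrong and you should be careful. In Step~2 you write ``Given $(\pi,x)$ and $Q^x_\pi$ as in Step~1, by Markov over random $\ell\subseteq\pi$ it suffices to establish $\Pr_{y\in\pi}[f^x_{\corr}(y)\neq Q^x_\pi(y)]\le O(\gamma^2)$.'' This pointwise bound for a \emph{fixed} $(\pi,x)$ cannot be true in general: even if $\agree(Q_\pi^x,f|_\pi)\ge\eps^8$, for a single fixed $x$ the agreement may be concentrated on a sub-constant fraction of the lines through $x$, so ``most $y$ are corrected to $Q_\pi^x(y)$'' can fail badly. Your Main-Obstacle paragraph shows you know this, but the resolution has to be made quantitative: you must bound the probability of the bad events over the \emph{joint} distribution of $(x,\ell')$, not pointwise in $x$. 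Concretely, the paper introduces the auxiliary distribution $\calE$ (random line $\ell'$ in $\pi$, then random $x\in S\cap\ell'$) where the B1/B2/B3 bounds are easy to prove via pairwise independence (\cref{clm:b123}), and then transfers to your target distribution (random $x\in S\cap\pi$, then random line through $x$) via \cref{lem:bipartite-eml-withbadset}, paying a TV cost of $O(\lambda/\sqrt{\mu})$. Two further things you gloss over that matter for parameters: you need an analogue of B2 ($|S\cap\ell'|\gtrsim\mu q$) before the Johnson-bound density $O(d/(q\eps^{16}))$ of non-unique points can be converted to a bound on the probability that a random $x\in S\cap\ell'$ is non-unique --- this is where the extra $1/\mu$ in the field-size hypothesis comes from; and your (i) only needs to hold for the single polynomial $Q_\pi^x$, but since $\ell'$ is sampled before $x$ in $\calE$ you must in fact bound B1 simultaneously for all members of the bivariate list $\{Q_1,\dots,Q_t\}$, which is why $t\le 2/\eps^8$ enters the union bound.
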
    

\begin{proof}[Proof of \cref{clm:randomfcorrxpassesldt}]
Follows from \cref{clm:bootstrap-tvd} and \cref{lem:randomfcorrxpassesldt}
\end{proof}   
\begin{proof}
To prove \cref{lem:randomfcorrxpassesldt}, we list some bad events E1--E4 such that if none of these occur then $f_{\corr_{{\eps^8}/{2}}}^x$ passes the low-degree test on $(\ell,y)$ with probability $1-2\gamma$ over the choice of $y \in \ell$. We then argue that each of these events happens with probability at most $\gamma/4$. 

\begin{description}

    \item[Event E1$(\pi)$:] $\Pr_{z\in \pi}[z \in S] <\nicefrac{\mu}{2}$.
    
    \item[Event E2$(\pi)$:] $\Pr_{z \in \pi}[z \in S \text{ but $z$ is not ${\eps}$-locally-good wrt. plane } \pi] \geq \nicefrac{\mu\cdot \gamma}{16}$.

    \item[Event E3$(\pi,x)$:] $\neg (\text{E1} \vee \text{E2})$ and ``$x$ is not $\eps^8$-explained on $\pi$''.
    
    Now if the plane $\pi$ and point $x$ are such that E1, E2 and E3 do not occur, then there exists a bivariate degree-$d$ polynomial $Q_\pi^x$ on the plane $\pi$ such that $\Pr_{z\in \pi}[Q_\pi^x(z)=f(x)]\geq \eps^8$ and $f(x)=Q_\pi^x(x)$. 
    
    \item[Event E4$(\pi,x,\ell)$:] $\neg (\text{E1} \vee \text{E2}\vee \text{E3})$ and $\Pr_{y \in \ell}[f_{\corr_{{\eps^8}/{2}}}^{x}(y) \neq Q_\pi^x(y)] > \gamma$. 
\end{description}
\cref{clm:E1,clm:E2,clm:E3,clm:E4} imply that each of these events occurs with probability at most $\nicefrac{\gamma}{4}$. Furthermore, clearly if none of the 4 events occur, then the polynomial $Q_\pi^x$ disagrees with the corrected function $f_{\corr_{{\eps^8}/{2}}}^x$ on at most $\gamma$-fraction of the points on the line $\ell$. This completes the proof of the lemma assuming \cref{clm:E1,clm:E2,clm:E3,clm:E4}.
\end{proof}

The rest of this section is devoted towards bounding the probability of the events E1--E4. 

\begin{claim}[Bounding E1]\label{clm:E1}
    If $\mu=\Pr_x[ x \in S ]$ and $|\F| \geq \nicefrac{4}{\sqrt{\mu \cdot\gamma}}$, then \[\Pr_\pi[\mathrm{E1}] = \Pr_{\pi}[ |S\cap \pi| < \nicefrac{\mu}{2}\cdot q^2] \leq \nicefrac{\gamma}{4}.\]
\end{claim}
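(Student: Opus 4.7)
The plan is to apply the Expander Mixing Lemma (\cref{lem:bipartite-eml}) on the bipartite points-plane incidence graph $G = G(\F^m, \calP^{(m)})$, whose second eigenvalue is at most $\nicefrac{1}{q}$ by \cref{thm:spectralgraphs}-\ref{spec:point-plane-graph}. This is a standard second-moment style concentration argument: a random plane's intersection with a set of density $\mu$ concentrates around $\mu q^2$, so the probability that it falls below $\mu q^2/2$ is controlled by the spectral gap.

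Concretely, I would define $g\colon \F^m \to \R$ by $g(x) := \mathbbm{1}[x \in S]$ and $h\colon \calP^{(m)} \to \R$ by $h(\pi) := \mathbbm{1}[|S \cap \pi| < (\mu/2)q^2]$. Writing $\nu := \Pr_\pi[\mathrm{E1}] = \mu_h$, we have $\mu_g = \mu$, $\sigma_g^2 \leq \mu$, and $\sigma_h^2 \leq \nu$. The key observation is that for any plane $\pi$ with $h(\pi) = 1$, the conditional density satisfies $\Pr_{x\in\pi}[x \in S] < \mu/2$, so
\[
\Pr_{(x,\pi)\in E}[g(x)\,h(\pi)] = \E_\pi\left[h(\pi) \cdot \Pr_{x\in \pi}[x \in S]\right] \leq \nu \cdot \frac{\mu}{2}.
\]
Applying \cref{lem:bipartite-eml} to $g$ and $h$ then yields
\[
\mu\nu - \frac{\mu\nu}{2} \leq \left|\Pr_{(x,\pi)\in E}[g(x)\,h(\pi)] - \mu\nu\right| \leq \lambda \cdot \sigma_g \cdot \sigma_h \leq \frac{1}{q}\sqrt{\mu\nu}.
\]
Rearranging gives $\sqrt{\mu\nu} \leq \nicefrac{2}{q}$, i.e., $\nu \leq \nicefrac{4}{\mu q^2}$. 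Under the hypothesis $q \geq \nicefrac{4}{\sqrt{\mu\gamma}}$, this bound becomes $\nu \leq \nicefrac{\gamma}{4}$, as required.

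There is essentially no obstacle here — the only thing to be careful about is setting up the two indicator functions correctly so that the EML expression directly captures the deviation we want to bound, and then correctly extracting $\nu$ from the resulting quadratic-in-$\sqrt{\nu}$ inequality. The argument mirrors the use of EML in the treatment of event E1 in \cref{sec:lowerrorbootstrapping}, except that here we work with the points-plane graph (spectral gap $\nicefrac{1}{q}$) rather than the lines-plane graph.
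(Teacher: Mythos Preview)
Your proof is correct and essentially identical to the paper's: both apply the Expander Mixing Lemma (\cref{lem:bipartite-eml}) on the points--planes incidence graph $G(\F^m,\calP^{(m)})$ with the same indicator functions $g(x)=\mathbbm{1}[x\in S]$ and $h(\pi)=\mathbbm{1}[|S\cap\pi|<(\mu/2)q^2]$, and extract $\Pr[\mathrm{E1}]\leq 4\lambda^2/\mu \leq \gamma/4$ from the resulting inequality. Your write-up even makes explicit the step $\Pr_{(x,\pi)\in E}[g(x)h(\pi)]\leq \nu\cdot\mu/2$, which the paper leaves implicit.
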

\begin{proof}
    Consider the bipartite points-planes incidence graph $G(\F^{m},\calP^{(m)})$ in $\F^m$ which has second eigen-value at most $\nicefrac{1}{{q}}$ (by \cref{thm:spectralgraphs}-\ref{spec:point-plane-graph}). Consider the functions $g \colon \F^{m} \to \R$ and $h\colon \calP^{(m)} \to \R$ defined as follows: $g(z) := \mathbbm{1}[z \in S]$ and $h(\pi) := \mathbbm{1}[|S \cap \pi| < \nicefrac{\mu}{2}\cdot q^2]$. These functions satisfy $\mu_g = \mu$, $\mu_h = \Pr[\text{E1}] =: \alpha$, $\sigma^2_g \leq  \mu$ and $\sigma^2_h \leq \alpha$. Applying \cref{lem:bipartite-eml} to the graph $G$ with functions $g$ and $h$ as defined above, we have
    \[ \mu \cdot \alpha - \nicefrac{\mu}{2}\cdot \alpha \leq \lambda \cdot \sqrt{\mu \cdot \alpha}. \]
    Equivalently, $\alpha \leq\nicefrac{4\lambda^2}{\mu}$, which in turn is at most $\nicefrac{\gamma}{4}$ (since $\lambda \approx \nicefrac{1}{|\F|}$ and $|\F| \geq \nicefrac{4}{\sqrt{\mu\cdot\gamma}}$).
\end{proof}

\begin{claim}[Bounding E2]\label{clm:E2}
    If $|\F| +1\geq \nicefrac{100}{{\eps\cdot\gamma^2 }}$ and $\mu=\Pr_x[ x \in S]$, then
    \[ \Pr_\pi[\mathrm{E2}] = \Pr_\pi[\pi \text{ has at least } \nicefrac{\mu\cdot \gamma}{16}\cdot q^2 \text{points which are in $S$ but not ${\eps}$-locally-good }] \leq \nicefrac{\gamma}{4}. \]
\end{claim}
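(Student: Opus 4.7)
My plan is to reduce the probability estimate for E2 to a per-point bound, and then aggregate by linearity of expectation followed by Markov's inequality. The key observation is that every $x \in S$ is $2\eps$-globally-good by hypothesis, so we only need to argue that such a point is also $\eps$-locally-good on all but a tiny fraction of planes through it. This local-from-global transfer will follow from the expander mixing lemma (\cref{lem:bipartite-eml}) applied to the line--plane incidence structure \emph{through a fixed point}, whose second eigenvalue $1/\sqrt{q+1}$ is provided by item (4) of \cref{thm:spectralgraphs}.

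The main step is the per-point estimate. Fix $x \in S$, let $A_x$ be the set of lines $\ell \ni x$ with $f(x) = P_\ell(x)$ so that $\alpha_x := |A_x|/|\calL^{(m)}_x| \geq 2\eps$, and let $B_x$ be the set of planes through $x$ on which $x$ fails to be $\eps$-locally-good, of density $\beta_x$. Applying \cref{lem:bipartite-eml} to $G(\calL^{(m)}_x, \calP^{(m)}_x)$ with the indicators of $A_x$ and $B_x$ (whose variances are bounded by their means), the EML lower bound together with the definitional upper bound $\eps \beta_x$ on the edge density gives
\[ \alpha_x \beta_x - \tfrac{1}{\sqrt{q+1}} \sqrt{\alpha_x \beta_x} \;\leq\; \eps \beta_x. \]
Rearranging yields $\beta_x \leq \tfrac{1}{q+1}\cdot \tfrac{\alpha_x}{(\alpha_x - \eps)^2}$. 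One checks that $\alpha/(\alpha-\eps)^2$ is decreasing on $[2\eps,1]$, so plugging $\alpha_x = 2\eps$ produces the clean per-point bound $\beta_x \leq 2/((q+1)\eps)$.

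Next I aggregate across $S$: since each point of $\F^m$ lies in a $1/q^{m-2}$ fraction of the planes, linearity of expectation gives
\[ \E_\pi\!\left[\,\big|\{x \in S \cap \pi : x \text{ not } \eps\text{-loc-good wrt } \pi\}\big|\,\right] \;=\; \sum_{x \in S} \tfrac{1}{q^{m-2}}\,\beta_x \;\leq\; \tfrac{2\mu q^2}{(q+1)\eps}. \]
Since E2 asserts that this count is at least $\mu\gamma q^2/16$, Markov's inequality bounds $\Pr_\pi[\mathrm{E2}]$ by $\tfrac{32}{(q+1)\eps\gamma}$, which is $\leq \gamma/4$ under the field-size hypothesis $|\F|+1 \geq 100/(\eps\gamma^2)$ (modulo adjusting small numerical constants, which can be absorbed by using $\sigma_g^2 = \alpha_x(1-\alpha_x)$ in the EML step rather than the weaker bound $\alpha_x$).

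The main obstacle is the per-point EML step: one has to realize that the right gadget is the \emph{star} subgraph of lines and planes through $x$ (so that the stronger eigenvalue $1/\sqrt{q+1}$ from item (4) of \cref{thm:spectralgraphs} is available), and then combine the EML lower bound with the definitional upper bound in a way that gives the $1/\eps$ factor rather than only $1/\eps^2$; this factor is exactly what makes Markov's inequality close with only a mild field-size hypothesis of the form $q = \Omega(1/(\eps\gamma^2))$. Everything after that is routine counting.
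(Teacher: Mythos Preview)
Your proof is correct and follows essentially the same approach as the paper: the per-point bound via the expander mixing lemma on $G(\calL_x^{(m)},\calP_x^{(m)})$ is identical, and your linearity-of-expectation-plus-Markov aggregation is exactly the paper's ``marked-edge'' double-counting in the point--plane incidence graph, just phrased differently. (One minor quibble: tightening to $\sigma_g^2 = \alpha_x(1-\alpha_x)$ does not actually recover the missing constant, since $\alpha(1-\alpha)/(\alpha-\eps)^2$ is still $\approx 2/\eps$ at $\alpha = 2\eps$; but the paper's own arithmetic has the same off-by-two slack---it writes $\mu_g/(\mu_g-\eps)^2 \leq 1/\eps$ for $\mu_g \geq 2\eps$, which fails at $\mu_g = 2\eps$---so the constant $100$ in the hypothesis is cosmetic in both proofs.)
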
 
\begin{proof}
    We begin by showing the following: If for any $2\eps$-good point $x$ (in particular, if $x\in S$) and $|\F| +1\geq \nicefrac{100}{{\eps\cdot\gamma^2}}$, then 
    \begin{equation}\label{eq:globalbutnotlocal}
        \Pr_{\pi \in \calP_x}[ x \text{ is not ${\eps}$-good with respect to plane } \pi] \leq \nicefrac{\gamma^2}{64}.
    \end{equation}

    Let $x$ be an $2\eps$-good point. Let $L_x \subseteq \calL^{(m)}_x$ be the set of lines $\ell$ containing $x$ that satisfy $P_\ell(x) = f(x)$. We have $\Pr_{\ell \ni x}[\ell \in L_x] \geq 2\eps$. Consider the bipartite lines-plane incidence graph $G(\calL^{(m)}_x,\calP^{(m)}_x)$ in $\F^m$ between lines and planes containing $x$. This graph has second eigen-value at most $\nicefrac{1}{\sqrt{q+1}}$ (by \cref{thm:spectralgraphs}-\ref{spec:lines-planes-graph-containing-x}). Consider the functions $g \colon \calL^{(m)}_x \to \R$ and $h\colon \calP^{(m)}_x \to \R$ defined as follows: $g(\ell) := \mathbbm{1}[f(x) = P_\ell(x)] = \mathbbm{1}[\ell \in L_x]$ and $h(\pi) := \mathbbm{1}[\Pr_{\ell \in \pi}[\ell \in L_x] < {\eps}]$. These functions satisfy $\mu_h =:\alpha$, $\sigma^2_g \leq  \mu_g$ and $\sigma^2_h \leq \alpha$. Applying \cref{lem:bipartite-eml} to the graph $G$ with functions $g$ and $h$ as defined above, we have
    \[ \mu_g \cdot \alpha - {\eps}\cdot \alpha \leq \lambda \cdot \sqrt{\mu_g \cdot \alpha}. \]
    Equivalently, $\alpha \leq\nicefrac{\lambda^2\cdot \mu_g}{(\mu_g-{\eps})^2}$ which is at most $\nicefrac{\lambda^2}{\eps}$ (since $\mu_g \geq 2\eps$), which in turn is at most $\nicefrac{\gamma^2}{64}$ (since $\lambda \approx \nicefrac{1}{\sqrt{|\F|+1}}$ and $|\F| +1\geq \nicefrac{64}{{\eps\cdot\gamma^2}}$), completing the proof of \eqref{eq:globalbutnotlocal}.

    We now return to the proof of \cref{clm:E2}. Consider the bipartite point-plane inclusion graph $G=G(\F^m, \calP^{(m)})$. Let $A :=S$ and $\mu:= \Pr_z[z \in A]$. Let $B \subset \calP^{(m)}$ be the set of planes $\pi$ which have at least $\nicefrac{\mu\cdot \gamma}{16}\cdot q^2$ points which are in $S$ but not ${\eps}$-locally-good. We mark an edge $(x,\pi)$ if $x \in A$ (i.e., $x$ is $2\eps$-globally-good) but $x$ is not ${\eps}$-good with respect to the plane $\pi$. The fraction of marked edges in $G$ by \cref{eq:globalbutnotlocal} is at most $\mu \cdot \nicefrac{\gamma^2}{64}$. On the other hand, the fraction of marked edges is at least $\mu(B)\cdot \nicefrac{\mu\cdot\gamma}{16}$. Hence, $\mu(B)\cdot \nicefrac{\mu\cdot\gamma}{16} \leq \mu \cdot \nicefrac{\gamma^2}{64}$ or equivalently, $\mu(B) \leq \nicefrac{\gamma}{4}$.
\end{proof}

\begin{claim}[Bounding E3]\label{clm:E3}
    If the parameter $\eps_0$ in \eqref{eq:strong-bi-ldt} satisfies $\eps_0 \leq \nicefrac{\mu\cdot \gamma}{16}$ where $\mu= \Pr_z[z \in S]$, then
    $\Pr_{\pi,x}[\mathrm{E3}] \leq \nicefrac{\gamma}{4}$.
\end{claim}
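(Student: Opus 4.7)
The approach is a short counting argument on each plane $\pi$ that avoids events $\mathrm{E1}$ and $\mathrm{E2}$, combined with the bivariate LDT statement \eqref{eq:strong-bi-ldt} applied to the restriction $f|_\pi$. The crucial observation is that the property ``$x$ is $\eps$-locally-good with respect to $\pi$'' is exactly the bivariate notion of ``$\eps$-good'' for the function $f|_\pi$ on the plane $\pi$, so \eqref{eq:strong-bi-ldt} directly controls, within $\pi$, how many $\eps$-locally-good points can fail to be $\eps^8$-explained on $\pi$. Once this identification is in place, the bound is a straightforward averaging.

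\textbf{Main steps.} First I would fix an arbitrary plane $\pi$ satisfying $\neg \mathrm{E1} \wedge \neg \mathrm{E2}$ and split the points of $S \cap \pi$ into those that are $\eps$-locally-good with respect to $\pi$ and those that are not. By $\neg \mathrm{E2}$, the second group has size at most $(\mu\gamma/16)q^2$. Next I would apply \eqref{eq:strong-bi-ldt} to $f|_\pi$: the induced lines oracle on $\pi$ (obtained by restricting the global lines oracle to the lines lying inside $\pi$) inherits $\eps$-well-behavedness since that property is purely a per-line condition, and so the number of $\eps$-good points in $\pi$ that are not $\eps^8$-explained on $\pi$ is at most $\eps_0 \cdot q^2$, which by the hypothesis $\eps_0 \leq \mu\gamma/16$ is at most $(\mu\gamma/16)q^2$. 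Adding the two contributions, the set of $x \in S\cap \pi$ witnessing $\mathrm{E3}$ has size at most $(\mu\gamma/8)q^2$. Dividing by $|S \cap \pi| \geq (\mu/2) q^2$ (guaranteed by $\neg \mathrm{E1}$) bounds the conditional probability by $\gamma/4$, and averaging over $\pi$ (recall that in $D_2$, $x$ is sampled uniformly from $S\cap\pi$ once $\pi$ is fixed, and on planes where $\mathrm{E1}$ or $\mathrm{E2}$ already hold the event $\mathrm{E3}$ is vacuous) yields the claim.

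\textbf{Main obstacle.} There is no substantive algebraic obstacle here; all the genuine analytic work has already been done in the bivariate LDT (\cref{thm:bivariate-list-decoding-LDT}) that produces \eqref{eq:strong-bi-ldt}. The only point that requires care is the bookkeeping between the $m$-variate and bivariate notions: verifying that ``$\eps$-locally-good with respect to $\pi$'' and ``$\eps^8$-explained on $\pi$'' translate without slack into the hypotheses and conclusions of \eqref{eq:strong-bi-ldt} for $f|_\pi$, and that the $D_2$ average factorises as an outer average over $\pi$ of the inner uniform distribution on $S \cap \pi$. Once that matching is checked, the proof is essentially the two-line counting computation described above.
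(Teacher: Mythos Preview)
Your proposal is correct and follows essentially the same argument as the paper: fix a plane $\pi$ with $\neg\mathrm{E1}\wedge\neg\mathrm{E2}$, bound separately the points of $S\cap\pi$ that are not $\eps$-locally-good (by $\neg\mathrm{E2}$) and the $\eps$-locally-good points that are not $\eps^8$-explained (by \eqref{eq:strong-bi-ldt}), then divide by $|S\cap\pi|\geq(\mu/2)q^2$ from $\neg\mathrm{E1}$ and average over $\pi$. The only cosmetic difference is that the paper keeps the $\eps_0$ term symbolic until the end, writing the conditional probability as $(\nicefrac{\mu\gamma}{16}+\eps_0)/(\nicefrac{\mu}{2})=\nicefrac{\gamma}{8}+\nicefrac{2\eps_0}{\mu}$ and then invoking $\eps_0\leq\nicefrac{\mu\gamma}{16}$, whereas you substitute the hypothesis first; the arithmetic is identical.
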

\begin{proof}
    To begin with fix a plane $\pi$ such that neither E1 nor E2 occurs. Hence, $\pi$ has at least $\nicefrac{\mu}{2}\cdot q^2$ points in $S$  of which at most $\nicefrac{\mu\cdot\gamma}{16}\cdot q^2$ are not $\eps$-locally-good. Hence, there are at least $\nicefrac{\mu\cdot(1-\nicefrac{\gamma}{8})}{2}\cdot q^2 \geq \nicefrac{\mu}{4}\cdot q^2$ points which are $\eps$-locally-good in $\pi$. 

    Now, by the bivariate LDT \eqref{eq:strong-bi-ldt}, we have that the probability that a random $x \in \F^m$ is $\eps$-locally-good but not $\eps^8$-explained (both with respect to $\pi)$ is at most $\eps_0$. Hence, the probability that a random point in $S\cap \pi$ is not $\eps^8$-explained is at most 
    \[ \frac{\nicefrac{\mu\cdot\gamma}{16}+ \eps_0}{\mu\left(S\cap \pi \right)} \leq \frac{\nicefrac{\mu\cdot\gamma}{16}+ \eps_0}{\nicefrac{\mu}{2}}= \nicefrac{\gamma}{8} + \nicefrac{2\eps_0}{\mu} \leq \nicefrac{\gamma}{4},\] 
    where the last inequality follows if $\eps_0 \leq \nicefrac{\mu\cdot\gamma}{16}$. We now bound E3 as follows:
    \[\Pr_{\pi,x}[\mathrm{E3}] \leq \E_\pi[\Pr_{x}[\mathrm{E3} | \neg (\mathrm{E1} \vee \mathrm{E2})]] = \E_\pi[\Pr_{x\in \pi}[ x \text{ is not $\eps^8$-explained }| \neg (\mathrm{E1} \vee \mathrm{E2}) ]] \leq \nicefrac{\gamma}{4}. \qedhere\]
\end{proof}

\begin{claim}[Bounding E4]\label{clm:E4}
  If $|\F| \geq \nicefrac{800d}{\mu\cdot\gamma^2\cdot \eps^{16}}$, then $\Pr_{\pi,x,\ell}[\mathrm{E4}] \leq \nicefrac{\gamma}{4}$.x
\end{claim}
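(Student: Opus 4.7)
The plan is to apply Markov's inequality in $y$ to reduce the claim to bounding an overall disagreement probability. Observe that the event ``$f_{\corr_{\eps^8/2}}^{x}(y) \neq Q_\pi^x(y)$'' depends only on $(\pi, x, y)$ and not on the choice of $\ell$; moreover, under the distribution that picks $\ell$ uniform in $\pi$ and then $y$ uniform on $\ell$, the marginal of $y$ is uniform on $\pi$. Hence Markov reduces the claim to establishing
\[
\Pr_{\pi,\, x \in S \cap \pi,\, y \in \pi}\Big[\neg(\mathrm{E1} \vee \mathrm{E2} \vee \mathrm{E3}) \ \wedge \ f_{\corr_{\eps^8/2}}^{x}(y) \neq Q_\pi^x(y)\Big] \leq \nicefrac{\gamma^2}{4}.
\]
Letting $\ell_{x,y}$ denote the line through $x$ and $y$, disagreement forces at least one of: (a) $Q_\pi^x|_{\ell_{x,y}}$ has agreement less than $\eps^8/2$ with $f|_{\ell_{x,y}}$, so it is not in the $(\eps^8/2)$-list on that line; or (b) $Q_\pi^x|_{\ell_{x,y}}$ is in the list but $x \in \nonunique^d_{\eps^8/2}(f|_{\ell_{x,y}})$, forcing the corrected function to return $\bot$. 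I bound each case separately, aiming for $\gamma^2/8$ each.

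Case (b) follows directly from the Johnson non-uniqueness bound. For every line $\ell' \subseteq \pi$, \cref{thm:johnson} gives $|\nonunique^d_{\eps^8/2}(f|_{\ell'})| \leq 8d/\eps^{16}$. Double-counting pairs $(x, \ell')$ with $x \in S \cap \pi$ and $\ell' \ni x$ such that $x$ is non-unique on $\ell'$ gives at most $8dq(q+1)/\eps^{16}$ such pairs in total; normalizing by $|S \cap \pi|(q+1) \geq \tfrac{\mu}{2} q^2 (q+1)$ (using $\neg\mathrm{E1}$) and noting that each such pair spoils at most $q-1$ values of $y$ yields $\Pr[(b)] \leq O(d/(\mu q \eps^{16}))$, which is $\leq \gamma^2/8$ thanks to the hypothesis $q \geq 800d/(\mu \gamma^2 \eps^{16})$.

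Case (a) is the main technical step; the key obstacle is that $Q_\pi^x$ varies with $x$, so EML cannot be applied to a single fixed polynomial. The remedy is to first group by polynomial. By \cref{thm:johnson} there are at most $r \leq 2/\eps^8$ bivariate degree-$d$ polynomials $Q_1, \ldots, Q_r$ on $\pi$ with agreement $\geq \eps^8$ with $f|_\pi$. Partition $S \cap \pi$ as $S_j := \{x \in S \cap \pi : Q_\pi^x = Q_j\}$, and for each $j$ set $B_j := \{\ell' \subseteq \pi : Q_j|_{\ell'} \text{ has agreement } < \eps^8/2\text{ with } f|_{\ell'}\}$. Applying the Expander Mixing Lemma (\cref{lem:bipartite-eml}) to the points-lines incidence graph of $\pi$ (second eigenvalue $\leq 1/\sqrt{q}$), with the indicator of $\{z : f(z) = Q_j(z)\}$ (density $\geq \eps^8$) and the indicator of $B_j$, yields $|B_j| \leq O((q+1)/\eps^8)$. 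The number of triples $(x, \ell')$ with $x \in S_j$, $\ell' \in B_j$, $\ell' \ni x$ is at most $q\cdot|B_j|$, each spoiling at most $q-1$ values of $y$. Summing over $j$ contributes a factor $r \leq 2/\eps^8$, and normalizing by $|S \cap \pi|\cdot q^2 \geq \tfrac{\mu}{2} q^4$ gives $\Pr[(a)] \leq O(1/(\mu q \eps^{16})) \leq \gamma^2/8$. Adding the two cases yields the needed $\gamma^2/4$ bound and Markov concludes.
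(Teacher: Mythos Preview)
Your proof is correct and follows the same overall structure as the paper's: reduce via Markov to bounding the disagreement probability $\Pr_{\pi,x,y}[\neg(\mathrm{E1}\vee\mathrm{E2}\vee\mathrm{E3}) \wedge f^x_{\corr}(y)\neq Q_\pi^x(y)]$, then split into the two failure modes on the line $\ell_{x,y}$ (the polynomial $Q_\pi^x|_{\ell_{x,y}}$ falling out of the $\eps^8/2$-list, versus $x$ being non-unique on that line), and control these with EML/pairwise-independence and the Johnson bound respectively.

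The organizational difference is worth noting. The paper introduces an auxiliary distribution $\calE$ (pick $\ell'$ uniformly in $\pi$, then $x$ uniformly in $S\cap\ell'$), bounds three events $\mathrm{B1},\mathrm{B2},\mathrm{B3}$ under $\calE$, and then invokes \cref{lem:bipartite-eml-withbadset} to transfer the bound to the intended distribution ($x$ uniform in $S\cap\pi$, then $\ell'\ni x$ uniform). The extra event $\mathrm{B2}$ ($|S\cap\ell'|$ too small) is needed precisely so that the conditioning inside $\calE$ behaves. You sidestep all of this: by partitioning $S\cap\pi$ according to which $Q_j$ serves as $Q_\pi^x$ and double-counting incidences $(x,\ell')$ with $\ell'\in B_j$ directly from the line side, you work in the intended distribution from the start, so neither $\mathrm{B2}$ nor the distribution-switch lemma is required. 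The paper's route is slightly more modular (the $\mathrm{B}i$'s are reusable statements about a random line), while yours is shorter and uses one less spectral tool. For the record, your EML bound on $|B_j|$ and the paper's Chebyshev bound on the analogous event $\mathrm{B1}$ are the same computation in disguise, since points on a random line in $\pi$ are pairwise independent.
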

\begin{proof}
    The argument for bounding $\Pr[\text{E4}]$ will be far more involved and subtler than the previous cases.

    Let us for the rest of the argument fix a plane $\pi$. Furthermore, let us assume that this plane $\pi$ and a random $x\in S$ chosen on it are such that none of E1, E2 and E3 occur. This implies that $\pi$ has at least $\nicefrac{\mu}{2}$-fraction of points which are in $S$ and there exists a bivariate degree-$d$ polynomial $Q_\pi^x$ such that (1) $\Pr_{z\in \pi}[Q_\pi^x(z)=f(z)] \geq \eps^8$ and (2) $Q_\pi^x(x)=f(x)$. We now need to bound the probability that when we choose a random line $\ell$ in the plane $\pi$, at least $\gamma$-fraction of the points $y$ on $\ell$ satisfy $Q_\pi^x(y)\neq f_{\corr_{{\eps^8}/{2}}}^x(y)$. To this end, let us recall the definition of $f_{\corr_{{\eps^8}/{2}}}^x \colon \F^m \to \F \cup \{\bot\}$. For any point $y\in \F^m \setminus \{x\}$, let $P$ be the unique degree-$d$ univariate polynomial $P$ on the line $\ell'=\ell_{x,y}$ through $x$ and $y$ such that (1) $\Pr_{z\in \ell'}[P(z) =f(z)]\geq {{\eps^8}/{2}}$ and (2) $P(x)= f(x)$. If there is no such polynomial $P$ or there is more than one such polynomial, then we set $P$ to $\bot$. Finally, we set $f_{\corr_{{\eps^8}/{2}}}^x(y):=P(y)$ and $f_{\corr_{{\eps^8}/{2}}}^x(x):=f(x)$. We will now argue that the probability that this polynomial $P$ is $Q_\pi^x|_{\ell'}$ for at least $1-\gamma$ fraction of the points $y$ in $\ell$ (this probability will be over the random choice of an $\eps$-good point $x$ in $\pi$ and a random line $\ell$ in $\pi$).

  Let $Q_1, \ldots, Q_t$ be the list of all degree-$d$ bivariate polynomials on $\pi$ such that $\Pr_{y \in \pi}[Q_j(y)=f(y)]\geq \eps^8$. By Johnson bound \cref{thm:johnson}-\ref{item:johnson}, we have $t \leq \nicefrac{2}{\eps^8}$ provided $\eps^8 \geq 2\sqrt{\nicefrac{d}{q}}$. Note that $Q_\pi^x$ is one such polynomial. For $j \in [t]$, let $S_j := \{ x \in \pi \colon Q_j(x)=f(x)\}$ be the set of agreement points between $f$ and the polynomial $Q_j$ (by definition, $\mu(S_j)\geq \eps^8$). 
    
    We list below some bad events B1, B2 and B3, which if they do not occur would imply that for a random $x\in S$, a random line $\ell$ and a point $y \in \ell$, the polynomial $P$ is the restricted polyomial $Q_\pi^x|_{\ell'}$ (where $\ell'=\ell_{x,y}$). These bad events would be described over the randomness of the choice of the point $x\in S$ and the random line $\ell' \in \calL_x^{\pi}$ (note $\ell' = \ell_{x,y}$ is a random line through the point $x$ in the plane $\pi$)

    \begin{description}
        \item[Event B1$(\ell')$:] There exists $j \in [t]$ such that $|S_j\cap \ell' | < \nicefrac{\eps^8}{2}\cdot q$.
        \item[Event B2$(\ell')$:] $|S \cap \ell'| < \nicefrac{\mu}{4}\cdot q$.
        
        Let $P_1,\ldots, P_r$ be the list of all univariate degree-$d$ polynomials on the line $\ell'$ such that $\Pr_{z\in \ell'}[P_k(z)=f(z)] \geq \nicefrac{\eps^8}{2}$. Note if B1 does not occur, then $Q_\pi^x|_{\ell'}$ is one such polynomial. By Johnson bound \cref{thm:johnson}, we have $r \leq \nicefrac{4}{\eps^8}$ provided $\nicefrac{\eps^8}{2} \geq 2\sqrt{\nicefrac{d}{q}}$. We say $x$ is a non-unique point on $\ell'$ if there exist two distinct polynomials $k \neq k' \in [r]$ such that $P_k(x) = P_{k'}(x)$. 

        \item[Event B3$(\ell',x)$:] $\neg$B2 and $x$ is a non-unique point on $\ell'$. 
    \end{description} 
    Applying \cref{clm:b123} with $\delta := \nicefrac{\gamma^2}{8}$, we have $\Pr_{(\ell',x)\sim \calE}[\mathrm{B1}\vee \mathrm{B2}\vee \mathrm{B3}] \leq \nicefrac{\gamma^2}{8}$ provided $q \geq \nicefrac{800 d}{\mu \cdot \gamma^2\cdot \eps^{16}}$. However, the distribution $\calE$ over the pairs $(x,\ell')$ is slightly different from what we need. Our required distribution of $(x,\ell')$ is as follows: given a plane $\pi$, pick a random  point $x\in S$ on it and a random line $\ell'$ in $\pi$ passing through $x$. The distribution $\calE$, on the other hand, is as follows: given a plane $\pi$, pick a random line $\ell'$ and a random $x\in S$ on $\ell'$ (if one exists). By \cref{lem:bipartite-eml-withbadset}, these two distributions are $\nicefrac{\lambda}{\sqrt{\mu\left(S\cap\pi\right)}}$-close in total variation distance where $\lambda\approx \nicefrac{1}{\sqrt{q}}$ is the second eigen-value of the points-line incidence graph in the plane $\pi$ (\cref{lem:bipartite-eml-withbadset}-\ref{spec:point-line-in-plane-graph}) and $\mu\left(S\cap\pi\right)\geq \nicefrac{\mu}{2}$ (since event E1 does not hold). Hence, this distance is at most $\sqrt{\nicefrac{2}{q\cdot \mu}}$ which is at most $\nicefrac{\gamma^2}{8}$ provided $q \geq \nicefrac{200}{\mu\cdot\gamma^4}$. Hence, $\Pr_{x,\ell'}[\mathrm{B1}\vee \mathrm{B2}\vee \mathrm{B3}]\leq \nicefrac{\gamma^2}{4}$. 
    
    Now, recall that $\pi$ and $x$ are such that none of the events E1--E3 occur. Hence, $\pi$ has at least $\nicefrac{\mu}{2}$-fraction of $2\eps$-good points and there exists a bivariate degree-$d$ polynomial $Q_\pi^x$ such that (1) $\Pr_{z\in \pi}[Q_\pi^x(z)=f(z)] \geq \eps^8$ and (2) $Q_\pi^x(x)=f(x)$. Now, suppose furthermore that $\ell$ and $y \in \ell$ (and the corresponding $\ell'=\ell_{x,y}$) are such that B1--B3 do not occur. It follows from these assumptions that $Q_\pi^x$ is one of the polynomials $Q_j$ and $Q_\pi^x|_{\ell'}$ one of the polynomials $P_k$. The uniqueness condition of $\neg$B3 implies that $Q_\pi^x|_{\ell'}$ is the only degree-$d$ polynomial that has agreement at least $\nicefrac{\eps^8}{2}$ with $f$ on $\ell$ and $Q_\pi^x(x)=f(x)$. Hence, $f_{\corr_{\nicefrac{\eps^8}{2}}}^x(y)=Q_\pi^x(y)$. We have thus shown that
    \[ \Pr_{x,\ell, y \in \ell}[f_{\corr_{\nicefrac{\eps^8}{2}}}^x(y)\neq Q_\pi^x(y)]\leq \nicefrac{\gamma^2}{4}.\] 
    We are however interested in the fraction of points in $\ell$ such that $f_{\corr_{\nicefrac{\eps^8}{2}}}^x(y)\neq Q_\pi^x(y)$. A Markov argument shows that 
    \[ \Pr_{x,\ell}\left[ \left|\{ y \in \ell \colon f_{\corr_{\nicefrac{\eps^8}{2}}}^x(y)\neq Q_\pi^x(y)\}\right| > \gamma \cdot  q\right] \leq \nicefrac{\gamma^2\cdot q}{4\cdot \gamma\cdot q} = \nicefrac{\gamma}{4}. \] 
    Averaging over $\pi$, yields the claim.
\end{proof}

\begin{claim}\label{clm:b123}
    Let $\rho,\delta,\eps \in (0,1)$, field $\F$ of size $q$ and degree parameter $d$ satisfy $q \geq \nicefrac{100d}{\mu\cdot \delta\cdot \eps^{16}}$. For any plane $\pi$ and a set $S$ of $2\eps$-good points in the plane of density at least $\nicefrac{\mu}{2}$, consider the distribution $\calE$ on pairs $(\ell',x)$ chosen as follows: $\ell'$ is a random line in the plane $\pi$ and $x$ is a random point on $S\cap\ell$ (if not such point exists, then the distribution returns $\bot$). Then, $\Pr_{(\ell,x')\sim \calE}[\mathrm{B1}\vee \mathrm{B2}\vee \mathrm{B3}] \leq \delta$.
\end{claim}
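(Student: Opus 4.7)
The plan is to bound the three bad events B1, B2, B3 separately and apply a union bound; the hypothesis $q \ge 100d/(\mu\delta\eps^{16})$ has been tailored so that each contribution is at most $\delta/3$. Observe first that the marginal of $\calE$ on $\ell'$ is exactly uniform over lines in $\pi$, and $x$ is uniform in $S\cap\ell'$ conditional on $\ell'$; so B1 and B2 depend only on $\ell'$ while B3 depends on both components of the pair.

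For B1, I would fix $j \in [t]$ (recall $t \le 2/\eps^8$) and apply the Expander Mixing Lemma (\cref{lem:bipartite-eml}) to the points-lines incidence graph inside $\pi$, whose second eigenvalue is $\lambda \approx 1/\sqrt{q}$ by \cref{thm:spectralgraphs}-\ref{spec:point-line-in-plane-graph}. Take $g = \mathbbm{1}_{S_j}$ on the point side (so $\mu_g = \mu(S_j) \ge \eps^8$) and $h(\ell) = \mathbbm{1}[|S_j\cap\ell| < \eps^8 q/2]$ on the line side. Then $\E_E[g\cdot h] = \E_\ell[h(\ell)\cdot|S_j\cap\ell|/q] \le \mu_h \cdot \eps^8/2 \le \mu_g\mu_h/2$, so EML forces $\mu_g\mu_h/2 \le \lambda\sqrt{\mu_g\mu_h}$, i.e.\ $\mu_h \le 4\lambda^2/\mu_g \le 4/(q\eps^8)$. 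Union-bounding over the at most $2/\eps^8$ choices of $j$ yields $\Pr_\calE[\mathrm{B1}] \le 8/(q\eps^{16})$. For B2, the identical argument with $S$ (density at least $\mu/2$) in place of $S_j$ and threshold $\mu q/4$ gives $\Pr_\calE[\mathrm{B2}] \le 8/(\mu q)$.

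For B3, I would condition on $\neg\mathrm{B2}$, so $|S\cap\ell'| \ge \mu q/4$. The Johnson bound (\cref{thm:johnson}), whose hypothesis $\eps^8/2 \ge 2\sqrt{d/q}$ is comfortably implied by the field-size assumption, tells us that on any line $\ell'$ there are at most $r \le 4/\eps^8$ degree-$d$ univariate polynomials with agreement $\ge \eps^8/2$ with $f|_{\ell'}$, and the set $\nonunique_{\eps^8/2}^d(f|_{\ell'})$ has size at most $\binom{r}{2}d \le 8d/\eps^{16}$. Since $x$ is drawn uniformly from $S\cap\ell'$, the conditional probability that $x$ is non-unique is at most $(8d/\eps^{16})/(\mu q/4) = 32d/(\mu q\eps^{16})$.

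Plugging $q \ge 100d/(\mu\delta\eps^{16})$ into the three bounds above makes each of them at most $\delta/3$, and a union bound finishes the argument. There is no genuine obstacle here: the only delicate point is that B3 must be bounded \emph{after} conditioning on $\neg\mathrm{B2}$ (otherwise one has no lower bound on $|S\cap\ell'|$ and the Johnson estimate is vacuous), which is precisely why $\neg\mathrm{B2}$ is built into the definition of B3. The EML applications for B1 and B2 are of a standard sampling flavor and the Johnson/union-bound step for B3 is essentially mechanical.
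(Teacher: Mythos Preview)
Your proposal is correct and follows essentially the same approach as the paper: bound $\mathrm{B1},\mathrm{B2},\mathrm{B3}$ separately by $\delta/3$ each and union-bound. The only cosmetic difference is that for B1 and B2 you invoke the Expander Mixing Lemma on the point-line graph in $\pi$, whereas the paper uses the equivalent (and slightly more elementary) observation that the points of a uniformly random line are pairwise independent samples from $\pi$ and applies Chebyshev directly; both routes yield the identical numerical bounds $8/(q\eps^{16})$ and $8/(q\mu)$. Your treatment of B3 matches the paper exactly.
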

\begin{proof}
    We bound the probability of each of the events B1, B2 and B3 by $\nicefrac{\delta}{3}$ as follows:
    \begin{description}
        \item[Event B1$(\ell')$:] There exists $j \in [t]$ such that $|S_j\cap \ell'| < \nicefrac{\eps^8}{2}\cdot q$.
        
        For any fixed $j \in [t]$, since $\mu(S_j) \geq \eps^8$ and the set of points in a random line $\ell'$ are pairwise independent, we have $\Pr[ |S_j\cap \ell'| < \nicefrac{\eps^8}{2}\cdot q] \leq \nicefrac{q\cdot \eps^8}{(q \cdot\eps^8/2)^2}=\nicefrac{4}{q\eps^8}$. Hence, $\Pr[\text{B1}] \leq t \cdot \nicefrac{4}{q\eps^8} \leq \nicefrac{8}{q\cdot \eps^{16}}$ (as $t \leq \nicefrac{2}{\eps^8}$). We, hence, have $\Pr[\text{B1}]\leq \nicefrac{\delta}{3}$ provided $q \geq \nicefrac{24}{\delta \cdot \eps^{16}}$
        
        \item[Event B2$(\ell')$:] $|S\cap \ell'| < \nicefrac{\mu}{4}\cdot q$. 
        
        Since $\mu\left(S\cap \pi\right) \geq \nicefrac{\mu}{2}$ and the set of points in a random line $\ell'$ are pairwise independent, we have $\Pr[\text{B2}]=\Pr[ |S \cap \ell'| < \nicefrac{\mu}{4}\cdot q] \leq \nicefrac{q\cdot \mu/2}{(q \cdot\mu/4)^2}=\nicefrac{8}{q\mu}$, which is at most $\nicefrac{\delta}{3}$ provided $q \geq \nicefrac{24}{\mu\cdot \delta}$. 

        \item[Event B3$(\ell',x)$:] $\neg$B2 and $x$ is a non-unique point on $\ell'$. 
        
        Since event B2 does not happen, we know that that at least $\nicefrac{\mu}{4}$-fraction of points in $\ell'$ are $2\eps$-good. The fraction of points $x$ in $\ell$ such that there exist two distinct polynomials $P_k$ $P_k'$ which coincide on $x$ (i.e., $P_k(x)=P_{k'}(x))$ is at most $\binom{r}{2}\cdot \nicefrac{d}{q} \leq \nicefrac{8d}{q\cdot (\eps^8)^2}$ since $r \leq \nicefrac{4}{\eps^8}$. Hence, the probability of event B3 that a random point in $\ell'\cap S$ happens to be non-unique is at most $\nicefrac{4\cdot 8d}{\mu\cdot q\cdot (\eps^8)^2}$. Hence, $\Pr[\text{B3}] \leq \nicefrac{\delta}{3}$ provided $q \geq \nicefrac{100d}{\mu\cdot\delta\cdot \eps^{16}}$.  \qedhere
    \end{description}
\end{proof}

\section*{Acknowledgements}
Some of the discussions of the first two authors leading up to this work happened while they were visiting the Homi Bhabha Center for Science Education (HBCSE), Mumbai. We are thankful to Prof. Arnab Bhattacharya and rest of the HBCSE staff for their warm and generous hospitality and for providing an inviting and conducive atmosphere for these discussions.

{\small 
  \bibliographystyle{prahladhurl}
  \bibliography{ldt-bib.bib}
}
{\let\thefootnote\relax
\footnotetext{\textcolor{\gitinfonotecolour}{\gitinfonote \easteregg}
}}

\appendix

\section{An Arora-Sudan style interpolation}\label{sec:bivariate-interpolation}

In this section, we present the full proof of \cref{lem:interpolation-step}. We repeat the statement below for convenience. 

\interpolationstep*

The proof is mostly along the lines of Arora and Sudan~\cite{AroraS2003} but a tighter analysis and some care for fields of small characteristic. We present it in its entirety for completeness and to make the changes clearer to follow. 

\subsection{Finding a structured subset of points}
We start with the following claim.
\begin{claim}[Finding good directions]
  \label{claim:AS-good-directions}
  Let $q > \sfrac{8}{\epsilon}$ and suppose $f:\F^2 \rightarrow \F$ passes $\LDT_d$ with probability $\epsilon$. Then, there are two different directions $\ell_1$ and $\ell_2$, and a set $H \subseteq \F^2$ such that
  \begin{itemize}\itemsep 0pt
    \item $|H| = \Omega(\epsilon^2 \cdot q^2)$,
    \item for all $x \in H$ we have $\Pr_{\ell \ni x}[P_{\ell}(x) = f(x)] \geq \sfrac{\epsilon}{2}$ and 
    \[
      P_{\ell_{1}^{(x)}}(x) = P_{\ell_{2}^{(x)}}(x) = f(x)
    \]
    where $\ell_{i}^{(x)}$ is the line through $x$ parallel to $\ell_i$.
  \end{itemize}
\end{claim}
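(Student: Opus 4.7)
The plan is to find $(\ell_1, \ell_2)$ by a probabilistic argument over ordered pairs of distinct directions. Let $\epsilon_x := \Pr_{\ell \ni x}[P_\ell(x) = f(x)]$, so that the number of lines through $x$ on which the oracle agrees at $x$ equals $|D_x| = \epsilon_x \cdot (q+1)$. Let $G := \{x \in \F^2 : \epsilon_x \geq \epsilon/2\}$ be the set of \emph{locally good} points; since $\E_x[\epsilon_x] = \epsilon$, \cref{lem:averaging} gives $|G| \geq (\epsilon/2)\, q^2$.

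For a uniformly random ordered pair $(d_1, d_2)$ of distinct directions, the probability that the lines through $x$ in directions $d_1, d_2$ both witness agreement at $x$ is exactly $p_x := |D_x|(|D_x|-1)/((q+1)q)$. The assumption $q > 8/\epsilon$ ensures $|D_x| \geq (\epsilon/2)(q+1) \geq 4$ for every $x \in G$, so $|D_x|(|D_x|-1) \geq |D_x|^2/2$ and hence $p_x \geq \epsilon_x^2/2$ on $G$.

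The main estimate will be a Jensen-plus-truncation lower bound on $\E_x[\epsilon_x^2 \mathbbm{1}_G]$. Jensen gives $\E_x[\epsilon_x^2] \geq (\E_x[\epsilon_x])^2 = \epsilon^2$, while on $\bar G$ the bound $\epsilon_x \leq \epsilon/2$ yields $\E_x[\epsilon_x^2 \mathbbm{1}_{\bar G}] \leq (\epsilon/2)\E_x[\epsilon_x \mathbbm{1}_{\bar G}] \leq \epsilon^2/2$. Subtracting gives $\E_x[\epsilon_x^2 \mathbbm{1}_G] \geq \epsilon^2/2$, and so
\[
\E_{(d_1,d_2)}\bigl|\{x \in G : P_{\ell_{d_1}^{(x)}}(x) = P_{\ell_{d_2}^{(x)}}(x) = f(x)\}\bigr| \;=\; q^2 \cdot \E_x[p_x \mathbbm{1}_G] \;\geq\; \epsilon^2 q^2/4.
\]
Any pair of directions $(\ell_1, \ell_2)$ achieving at least this expectation yields a set $H := \{x \in G : P_{\ell_1^{(x)}}(x) = P_{\ell_2^{(x)}}(x) = f(x)\}$ of size at least $\epsilon^2 q^2/4$; by construction, each $x \in H$ satisfies $\epsilon_x \geq \epsilon/2$ and both specified lines agree with $f$ at $x$.

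The only subtlety is obtaining the density $\Omega(\epsilon^2 q^2)$ rather than $\Omega(\epsilon^3 q^2)$: the naive approach would restrict to $G$ first and use the pointwise bound $p_x \geq \Omega(\epsilon^2)$, giving only $|G| \cdot \Omega(\epsilon^2) = \Omega(\epsilon^3 q^2)$. Applying Jensen \emph{globally} to $\E_x[\epsilon_x^2]$ and only then trimming away the $\bar G$ contribution (using $\epsilon_x^2 \leq (\epsilon/2)\epsilon_x$ on $\bar G$) is what recovers the extra factor of $\epsilon$.
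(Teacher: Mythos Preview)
Your proof is correct and follows essentially the same approach as the paper: restrict to the set of locally good points, use a second-moment/Jensen argument to show that for a random pair of distinct directions the expected number of good points on which both directions agree is $\Omega(\epsilon^2 q^2)$, and then fix a pair achieving the average. The only cosmetic difference is in how Jensen is applied: the paper squares the truncated average $\E_x[\epsilon_x \mathbbm{1}_{H'}] \geq \epsilon/2$ (obtained from the averaging lemma) and then peels off the diagonal $\ell_1 = \ell_2$, whereas you apply Jensen to the untruncated $\E_x[\epsilon_x^2]$ and subtract the $\bar G$ contribution, handling distinct directions from the start. Your organization gives a slightly better constant but is otherwise the same argument.
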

\begin{proof}
  For each $x \in \F^2$, let $\epsilon_x = \Pr_{\ell\ni x}[P_\ell(x) = f(x)]$ and we have $\sum_x \epsilon_x = \epsilon \cdot q^2$. Let $H' = \setdef{x}{\epsilon_x \geq \sfrac{\epsilon}{2}}$. For a direction $\ell$ and a point $x \in \F^2$, let $I(x,\ell)$ be indicator random variable defined as
  \[
    I(x,\ell) = \mathbbm{1}\insquare{P_{\ell^{(x)}}(x) = f(x) \;\text{and}\; \epsilon_x \geq \sfrac{\epsilon}{2}}
  \] 
  where $\ell^{(x)}$ is the line through $x$ parallel to $\ell$. We then have $\E_\ell[I(x,\ell)] = \epsilon_x$ if $x\in H'$ and $0$ otherwise. Thus, 
  \[
    \E_x \E_\ell [I(x,\ell)] = \sfrac{1}{q^2} \cdot \sum_{x\in H'} \epsilon_x \geq \sfrac{\epsilon}{2} \quad\text{(by \cref{lem:averaging})}.
  \]
  Therefore, we have
  \begin{align*}
    \sfrac{\epsilon}{2} & \leq \E_{x}\E_\ell[I(x,\ell)]\\
    \sfrac{\epsilon^2}{4} & \leq \inparen{\E_{x}\E_\ell[I(x,\ell)]}^2 \leq \E_{x}(\E_\ell[I(x,\ell)])^2\\
    & =  \E_{x}\inparen{\E_{\ell_1,\ell_2}[I(x,\ell_1) \cdot I(x,\ell_2)]}\\
    & = \E_{x}\inparen{\E_{\ell_1\neq \ell_2}[I(x,\ell_1) \cdot I(x,\ell_2)] + \frac{1}{q^2 + q}} & \text{($\because \; \Pr[\ell = \ell'] = \frac{1}{q^2 + q}$)}\\
    \implies \sfrac{\epsilon^2}{8} & \leq \E_{x}\E_{\ell_1\neq \ell_2}[I(x,\ell_1) \cdot I(x,\ell_2)] & \text{($\because \;  q > \sfrac{8}{\epsilon}$)}
  \end{align*}
  Therefore, there exist two different directions $\ell_1$ and $\ell_2$ such that $\E_{x}[I(x,\ell_1) I(x,\ell_2)] \geq \sfrac{\epsilon^2}{8}$. Fixing such directions $\ell_1$ and $\ell_2$, defining $H = \setdef{x}{I(x,\ell_1) = I(x,\ell_2) = 1}$ satisfies the requirements.
\end{proof}

Without of loss of generality, we assume that $\ell_1$ is the $x$-axis and $\ell_2$ is the $y$-axis. 
Let $\gamma$ be chosen such that $H = 2\gamma \cdot q^2 = \Omega(\epsilon^2 q^2)$. 

\begin{lemma}[Structured subsets within $H$]
  \label{lem:structured-grid}
  Let $H$ be the set specified above with respect to the $x$ and $y$ directions for an oracle $f$ passing $\LDT_d$ with probability at least $\epsilon$. Then, for any $r$ satisfying $\frac{2\log q}{\gamma^2} \leq r \leq \gamma \cdot q$, there are subsets $S_1, S_2 \subseteq \F$ such that 
  \begin{enumerate}\itemsep0pt
    \item \label{lem:structured-subset-S1-size} $|S_1| = r$,
    \item \label{lem:structured-subset-S2-size} $|S_2| \geq \gamma \cdot q$,
    \item \label{lem:structured-subset-F-times-S2-size} $|(\F \times S_2) \cap H| \geq \sfrac{|H|}{2} = \gamma \cdot q^2$,
    \item \label{lem:structured-subset-H-in-rows} For all $b\in S_2$, we have $\abs{\setdef{(a,b)}{a\in \F} \cap H} \geq \gamma \cdot q$,
    \item \label{lem:structured-subset-H-in-rows-in-S1} For all $b \in S_2$, we have $\abs{\setdef{(a,b)}{a\in S_1} \cap H} \geq \sfrac{\gamma}{2} \cdot \abs{S_1}$.
  \end{enumerate}
\end{lemma}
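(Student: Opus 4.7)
The plan is to obtain $S_2$ by a deterministic averaging argument on the row-densities of $H$, and then to obtain $S_1$ by random sampling combined with a union bound over $S_2$.

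First I would apply \cref{lem:averaging} to the normalized row-weights $w(b) := |\{a \in \F : (a,b) \in H\}|/q \in [0,1]$, indexed over $b \in \F$. Since $\sum_b w(b) = |H|/q = 2\gamma q$, taking $S_2$ to be exactly those $b$ with $w(b) \geq \gamma$ simultaneously discharges items 2, 3, and 4: the first conclusion of \cref{lem:averaging} gives $|S_2| \geq \gamma q$, the second gives $|(\F \times S_2) \cap H| \geq \gamma q^2 = |H|/2$, and the defining inequality $w(b) \geq \gamma$ says that every row indexed by $b \in S_2$ meets $H$ in at least $\gamma q$ points.

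Next, I would sample $S_1 \subseteq \F$ uniformly at random from subsets of size $r$. For each fixed $b \in S_2$, the random variable $X_b := |\{a \in S_1 : (a,b) \in H\}|$ is hypergeometric with mean at least $\gamma r$, so a standard Chernoff--Hoeffding tail bound for sampling without replacement yields
\[
  \Pr[X_b < \gamma r / 2] \leq \exp(-\Omega(\gamma r)).
\]
A union bound over $b \in S_2$ (of size at most $q$) then shows that with positive probability every $b \in S_2$ satisfies $X_b \geq \gamma r/2$ simultaneously, which is exactly item 5. The hypothesis $r \geq 2\log q/\gamma^2$ is precisely calibrated to make $\gamma r \geq 2\log q/\gamma$ dominate $\log q$ in the relevant small-$\gamma$ regime (where $\gamma = \Omega(\epsilon^2)$ is well below $1$), so that the union bound closes.

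There is no real obstacle here: both steps are routine once the parameterization is fixed. The only thing to verify carefully is the quantitative balance in the second step, which amounts to checking that the Chernoff exponent beats $\log q$ given the stated lower bound on $r$; that is exactly what the threshold $r \geq 2\log q/\gamma^2$ is designed to guarantee.
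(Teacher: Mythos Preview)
Your proposal is correct and follows essentially the same approach as the paper: define $S_2$ via the averaging lemma on row densities, then choose $S_1$ by random sampling with a hypergeometric tail bound plus a union bound over $S_2$. The only cosmetic difference is that the paper states the tail bound in the additive Hoeffding form $\exp(-\gamma^2 r/2)$, which matches the hypothesis $r \geq 2\log q/\gamma^2$ on the nose, whereas you invoke the multiplicative Chernoff form $\exp(-\Omega(\gamma r))$ and then observe that the hypothesis is more than enough in the regime $\gamma \ll 1$.
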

\begin{proof}
  Let $\ell_{y = b} = \setdef{(a,b)}{a \in \F}$ and let $\ell_{x=a} = \setdef{(a,b)}{b\in \F}$. Since these lines evenly cover the space, we have that $\E_b[\abs{\ell_{y=b} \cap H}] = 2\gamma \cdot q$. Let $S_2 = \setdef{b\in \F}{\abs{\ell_{y=b} \cap H} \geq \gamma q}$. By \cref{lem:averaging}, we have that $|S_2| \geq \gamma \cdot q$ and $\abs{(\F \times S_2) \cap H} \geq \sfrac{|H|}{2} = \gamma \cdot q^2$. 

  Let $S_1$ be a set of $r$ distinct elements of $\F$ chosen uniformly at random. Then, for any $b\in S_2$
  \begin{align*}
    \E_{S_1}[ \abs{\setdef{(a,b)}{a\in S_1\;,\;(a,b)\in H}}] & = \sfrac{r}{q} \cdot \abs{\ell_{y=b} \cap H} \geq \gamma \cdot r.
  \end{align*}
  From standard tail bounds for hypergeometric distributions (cf. \cite{Skala2013-hypergeometric}), we have 
  \begin{align*}
  \text{For all $b\in S_2$, }\Pr_{S_1}\insquare{\abs{\setdef{(a,b)}{a\in S_1\;,\;(a,b)\in H}}  < \sfrac{\gamma r}{2}} &\leq \exp(-\sfrac{\gamma^2 r}{2})\\
  \implies \Pr_{S_1}\insquare{\exists b\in S_2\;:\;\abs{\setdef{(a,b)}{a\in S_1\;,\;(a,b)\in H}} < \sfrac{\gamma r}{2}} &\leq |S_2| \cdot \exp(-\sfrac{\gamma^2 r}{2}) < 1.
  \end{align*}
  Thus, there exists a set $S_1$ of size $r$ such that for every $b \in S_2$ we have 
  \[
    \abs{\setdef{(a,b)}{a\in S_1\;,\;(a,b)\in H}} \geq \sfrac{\gamma r}{2}.\qedhere
  \]
\end{proof}

\subsection{Properties of the desired interpolating polynomial}

As preparation of the interpolating step, we start by describing the desired monomial support of the interpolating polynomials and observing some of their properties. 
 
 \begin{definition}
For $n, N, p \in \N$ such that $p \geq 2$, the sets $N_{d, D}$ and $N_{d, D, p}$ are defined as follows. 
 
 \begin{itemize}
 \item $N_{d, D}$ is the set of exponent vectors of trivariate monomials with $(1,1,d)$-degree bounded by $D$, i.e.,
 \[
 N_{d,D} := \setdef{(i,j,k) \in \Z_{\geq 0}^3}{i + j + dk \leq D} \, .
\]  
\item $N_{d, D, p}$ is the set of exponent vectors of trivariate monomials $x^iy^jz^k$ with $(1, 1, d)$-weighted degree at most $D$ such that the $z$ degree is either zero, or is not a multiple of $p$. More formally,
\[
N_{d,D, p} := \setdef{(i,j,k) \in \Z_{\geq 0}^3}{i + j + dk \leq D\text{ and } (\text{$k$ is either zero or $p\nmid k$})}.\qedhere
\]
\end{itemize}
\end{definition}
The following simple claims now give us bounds on the sizes of the sets $N_{d, D}$ and $N_{d, D, p}$. We defer the proofs of the claims to \cref{sec:proofs-of-estimates-on-Ndd-etc}.
\begin{claim}\label{clm:NdD-lower-bound} 
For all $d, D \in \N$, we have 
\[
(D^3/3d - 5/2D^2 + Dd/6) \leq |N_{d, D}| \leq (D^3/3d + 3/2D^2 + Dd/6) \, .
\]
\end{claim}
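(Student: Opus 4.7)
The plan is to count $|N_{d,D}|$ by stratifying on the value of $k$. For each fixed $k\in\{0,1,\ldots,\lfloor D/d\rfloor\}$, the number of pairs $(i,j)\in\Z_{\geq 0}^2$ with $i+j\leq D-dk$ is exactly $\binom{D-dk+2}{2}$, so
$$|N_{d,D}| \;=\; \sum_{k=0}^{\lfloor D/d\rfloor}\binom{D-dk+2}{2} \;=\; \frac{1}{2}\sum_{k=0}^{\lfloor D/d\rfloor}(D-dk+1)(D-dk+2).$$
This is the only structural step; everything afterwards is closed-form polynomial arithmetic.

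Next I would evaluate the sum exactly. Writing $D = dK + r$ with $K = \lfloor D/d \rfloor$ and $0 \leq r < d$, and reindexing by $j = K - k$, the sum becomes
$$\frac{1}{2}\sum_{j=0}^{K}(dj+r+1)(dj+r+2),$$
which the standard identities $\sum_{j=0}^{K} j = \tfrac{K(K+1)}{2}$ and $\sum_{j=0}^{K}j^2 = \tfrac{K(K+1)(2K+1)}{6}$ expand into a closed-form polynomial in $K$, $d$ and $r$. The dominant contribution comes from the quadratic-in-$j$ piece, namely $\tfrac{d^2}{2}\cdot\tfrac{K(K+1)(2K+1)}{6}$, which upon substituting $K\approx D/d$ supplies the claimed leading cubic term in $D$; the linear-in-$j$ piece contributes an $O(D^2)$ correction and the constant-in-$j$ piece contributes an $O(D)$ correction.

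To finish, I would substitute $K = (D-r)/d$ in the closed form and collect coefficients of $D^3$, $D^2$ and $D$, using $0\leq r<d$ to convert the $r$-dependent leftovers into one-sided error terms. The asymmetric constants $3/2$ (upper) and $5/2$ (lower) in front of $D^2$ arise from combining the sign of the subleading coefficient of the exact polynomial with the one-sided error from replacing $\lfloor D/d\rfloor$ by $D/d$; the $Dd/6$ slack then absorbs the remaining lower-order $r$-dependence.

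The main obstacle is nothing more than careful bookkeeping: there is no structural or algebraic idea beyond writing the sum and invoking Faulhaber's identities, but one must track the signs of the floor-function error against the subleading polynomial contributions to preserve the asymmetry of the $D^2$-coefficients. I anticipate no surprises beyond arithmetic vigilance.
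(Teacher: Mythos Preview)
Your approach is correct and essentially identical to the paper's: both stratify on $k$, count the pairs $(i,j)$ for each fixed $k$, expand the resulting quadratic in $k$, and apply the standard formulas for $\sum k$ and $\sum k^2$. The only cosmetic difference is that you keep the exact binomial $\binom{D-dk+2}{2}$ and track the remainder $r = D - d\lfloor D/d\rfloor$ explicitly, whereas the paper immediately drops to the cruder bound $\tfrac{1}{2}(D-kd)^2$ and then sandwiches $\lfloor D/d\rfloor$ between $D/d - 1$ and $D/d$; both routes are the same computation up to bookkeeping.
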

\begin{claim}\label{clm:NdDp-lower-bound}
For all $d, D \in \N$, with $p \geq 2$, we have  $|N_{d, D, p}| \geq |N_{d, D}|/2$. 
\end{claim}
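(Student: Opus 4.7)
The plan is a straightforward slicing argument, partitioning $N_{d,D}$ according to the first two coordinates $(i,j)$ and bounding the ratio on each fiber.

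First I would fix $(i,j) \in \Z_{\geq 0}^2$ with $i+j \leq D$ and look at the set of valid $k$-values, namely the interval $\{0, 1, \ldots, K\}$ where $K := \lfloor (D-i-j)/d \rfloor$. The corresponding fiber in $N_{d,D}$ has size $K+1$. The fiber in $N_{d,D,p}$ consists of those $k \in \{0,\ldots,K\}$ with either $k = 0$ or $p \nmid k$; equivalently, we are throwing away only the positive multiples of $p$, of which there are exactly $\lfloor K/p \rfloor$. Hence the fiber in $N_{d,D,p}$ has size $(K+1) - \lfloor K/p \rfloor$.

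The key inequality to verify on each fiber is
\[
(K+1) - \lfloor K/p \rfloor \;\geq\; (K+1)/2,
\]
which rearranges to $\lfloor K/p \rfloor \leq (K+1)/2$. Since $p \geq 2$, we have $\lfloor K/p \rfloor \leq K/p \leq K/2 \leq (K+1)/2$, so this holds for every $(i,j)$.

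Finally, summing the inequality over all pairs $(i,j)$ with $i+j \leq D$ telescopes to $|N_{d,D,p}| \geq |N_{d,D}|/2$. There is no real obstacle: the only point that needs a sentence of care is the inclusion of $k=0$ in $N_{d,D,p}$, which is what makes the bound on the fiber a clean $\lfloor K/p \rfloor$ (rather than $\lfloor K/p \rfloor + 1$), and which in turn makes the $p=2$ case tight enough to still give factor $1/2$.
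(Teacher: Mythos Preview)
Your argument is correct. It is, however, a different slicing from the paper's: you fix $(i,j)$ and count the admissible $k$-values in each fiber, whereas the paper fixes $k$ and looks at the sets $A_k := \{(i,j):\, i+j+dk\leq D\}$. The paper then uses that $|A_k|$ is non-increasing in $k$, so each $|A_{\ell p}|$ is dominated by the average of $|A_{(\ell-1)p+1}|,\ldots,|A_{\ell p}|$; summing over $\ell$ gives $\sum_{\ell\geq 1}|A_{\ell p}|\leq \tfrac{1}{p}|N_{d,D}|$ and hence $|N_{d,D,p}|\geq (1-\tfrac{1}{p})|N_{d,D}|$. Your fiberwise inequality $\lfloor K/p\rfloor \leq K/2$ is more elementary and avoids any monotonicity argument; the paper's route, on the other hand, directly yields the sharper factor $(1-\tfrac{1}{p})$ before specializing to $p=2$ (your argument gives this too, via $\lfloor K/p\rfloor \leq K/p \leq (K+1)/p$, though you did not state it). Either decomposition is perfectly adequate here.
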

Finally, we combine the bounds in \cref{clm:NdD-lower-bound} and \cref{clm:NdDp-lower-bound} to get the following claim. 
\begin{claim}\label{clm:NdDp-lowerbound-final}
For all $d, D, p \in \N$ such  that $D > 20d$ and $p \geq 2$, $|N_{d, D, p}| \geq \frac{D^3}{12d}$. 
\end{claim}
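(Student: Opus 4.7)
The plan is to simply combine the lower bound on $|N_{d,D}|$ from \cref{clm:NdD-lower-bound} with the factor-of-two relationship from \cref{clm:NdDp-lower-bound}, and then verify that the hypothesis $D > 20d$ absorbs the lower-order error terms. So the first step is to write
\[
|N_{d,D,p}| \;\geq\; \tfrac{1}{2}|N_{d,D}| \;\geq\; \tfrac{1}{2}\inparen{\tfrac{D^3}{3d} - \tfrac{5}{2}D^2 + \tfrac{Dd}{6}} \;=\; \tfrac{D^3}{6d} - \tfrac{5}{4}D^2 + \tfrac{Dd}{12}.
\]

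The second step is to verify the bound $\tfrac{D^3}{6d} - \tfrac{5}{4}D^2 + \tfrac{Dd}{12} \geq \tfrac{D^3}{12d}$. Rearranging, this is equivalent to $\tfrac{D^3}{12d} \geq \tfrac{5}{4}D^2 - \tfrac{Dd}{12}$, and multiplying through by $12d/D$ (which is positive) the target becomes
\[
D^2 \;\geq\; 15 d D - d^2, \qquad \text{i.e.,}\qquad D^2 - 15 d D + d^2 \;\geq\; 0.
\]

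The last step is just to plug in the hypothesis $D > 20d$: then $D^2 > 20 dD$, hence $D^2 - 15dD > 5dD > 0$, so certainly $D^2 - 15dD + d^2 > 0$. Chaining the inequalities gives $|N_{d,D,p}| \geq D^3/(12d)$, which is the claim. No real obstacle is anticipated; the only thing to be careful about is that the cubic lower bound from \cref{clm:NdD-lower-bound} dominates the negative $D^2$ term only once $D$ is sufficiently larger than $d$, and the constant $20$ in the hypothesis is chosen precisely to make this dominance explicit with room to spare.
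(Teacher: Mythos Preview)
Your proof is correct and follows essentially the same approach as the paper: combine \cref{clm:NdDp-lower-bound} with the lower bound from \cref{clm:NdD-lower-bound} and use $D>20d$ to absorb the lower-order terms. The paper simplifies slightly by dropping the positive $Dd/6$ term at the outset (reducing the check to $D/3d - 5/2 \geq D/6d$, i.e., $D \geq 15d$), whereas you keep it and do a touch more algebra, but the argument is the same.
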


\subsection*{Interpolation}

We now rely on the above estimates to interpolate a low-degree polynomial that explains the function value in the points table on a constant fraction of points. But first, we recall some notation: $H, S_1, S_2$ be as guaranteed by \cref{lem:structured-grid} and let $r$ be the size of $S_1$. Define $S = \setdef{(a,b) \in \F^2}{b \in S_2\;,\;(a,b) \in H}$. From \cref{lem:structured-grid}, we have that $|S| \geq \gamma \cdot q^2$. With this notation in place, we have the following lemma. 
\begin{lemma}
  \label{lem:AS-finding-vanishing-polynomial}
Let $\F$ be a finite field of characteristic $p$ and let $D$ be a natural number satisfying $\abs{N_{d,D, p}} >r (D+1)$. There exists a non-zero polynomial $A(x,y,z) \in \F[x, y, z]$ with $\deg_{1,1,d}(A) \leq D$ such that 
  
\begin{itemize}
\item For every $a \in S_1$, we have
  \[
    A(a, y, P_{\ell_{x = a}}(y)) = 0
  \]
  as a univariate polynomial in $y$, where $P_{\ell_{x=a}}$ is the best-fit degree-$d$ polynomial for $f$ the line $\ell_{x = a}$. 
\item If $p$ is the characteristic of the field $\F$, then, $A$ is supported on monomials of the form $x^iy^jz^k$ where $k$ is either zero or is not divisible by $p$. 
\end{itemize}  
    
\end{lemma}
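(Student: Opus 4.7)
The plan is a standard linear algebra argument: parameterize $A$ as an unknown polynomial whose support lies in $N_{d,D,p}$, turn the vanishing conditions into a homogeneous linear system in the coefficients, and verify that the hypothesis $|N_{d,D,p}| > r(D+1)$ forces the system to have strictly more unknowns than constraints, hence a non-trivial solution.

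Concretely, I would write
\[
A(x,y,z) \;=\; \sum_{(i,j,k)\in N_{d,D,p}} c_{i,j,k}\, x^i y^j z^k,
\]
treating the $c_{i,j,k}$ as $|N_{d,D,p}|$ unknowns over $\F$. The support restriction to $N_{d,D,p}$ automatically enforces the second bullet of the lemma (every monomial has $z$-degree either zero or not divisible by $p$), and it keeps the $(1,1,d)$-weighted degree at most $D$. For each $a \in S_1$, substituting $x=a$ and $z = P_{\ell_{x=a}}(y)$ turns $A(a,y,P_{\ell_{x=a}}(y))$ into a univariate polynomial in $y$ whose coefficients are $\F$-linear forms in the $c_{i,j,k}$. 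Since $P_{\ell_{x=a}}$ has degree at most $d$, a monomial $x^i y^j z^k$ with $i+j+dk \leq D$ contributes $y$-degree at most $j + dk \leq D$, so this univariate polynomial has degree at most $D$ and its identical vanishing amounts to at most $D+1$ linear equations in the $c_{i,j,k}$.

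Ranging over all $a \in S_1$, we obtain a homogeneous linear system with at most $r(D+1)$ equations in $|N_{d,D,p}|$ unknowns. By hypothesis, $|N_{d,D,p}| > r(D+1)$, so the system has a non-trivial solution, which gives a non-zero polynomial $A$ with $\deg_{1,1,d}(A) \le D$ satisfying both required properties.

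I do not expect a genuine obstacle here; the only thing to keep an eye on is the counting of the $y$-degree bound after substitution (so that the $D+1$ equations per line is tight and not understated) and the fact that restricting the support to $N_{d,D,p}$ still leaves enough degrees of freedom, which is exactly the content of the estimate $|N_{d,D,p}| \ge |N_{d,D}|/2$ from \cref{clm:NdDp-lower-bound} together with the lower bound from \cref{clm:NdDp-lowerbound-final} that will be used when the lemma is invoked with concrete parameters.
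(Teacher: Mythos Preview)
Your proposal is correct and matches the paper's proof essentially line for line: parameterize $A$ with monomial support in $N_{d,D,p}$, observe that each constraint $A(a,y,P_{\ell_{x=a}}(y))\equiv 0$ yields at most $D+1$ homogeneous linear equations because the substituted polynomial has $y$-degree at most $D$, and conclude by counting that $|N_{d,D,p}|>r(D+1)$ forces a non-trivial solution. Your explicit justification of the $y$-degree bound via $j+dk\le D$ is in fact slightly more detailed than the paper's own argument.
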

\begin{proof}
  Let $A(x,y,z) = \sum_{i,j,k \in N_{d,D}} A_{i,j,k} x^i y^j z^k$, where $A_{i,j,k}$ are indeterminates. For each $a \in S_1$, the constraint $A(a,y,P_{\ell_{x=a}}(y)) = 0$ can be viewed as $(D+1)$ homogeneous linear constraints on the indeterminates $A_{i,j,k}$ obtained by equating the coefficients of $y^0, y^1, \ldots, y^{D}$ to zero. Note that since the $(1, 1, d)$-weighted degree of every monomial in $A$ is at most $D$, the degree of $A(a,y,P_{\ell_{x=a}}(y))$ is at most $D$. Hence, as long as $\abs{N_{d,D, p}} >  r \cdot (D+1)$, there must exist a non-zero solution to the system of equations, and therefore a non-zero polynomial $(1, 1, d)$-degree $D$ that is entirely supported on monomials from the set $N_{d, D, p}$ and therefore satisfies the second item in the lemma by definition.

\end{proof}
We now infer that any polynomial satisfying the conditions in \cref{lem:AS-finding-vanishing-polynomial} imply that it satisfies many more vanishing conditions.
\begin{lemma}
  \label{lem:AS-A-vanishes-on-more-points}
Suppose $A(x,y,z)$ is a polynomial with $\deg_{1,1,d}(A) \leq D$ such that for all $a\in S_1$ we have $A(a, y, P_{\ell_{x = a}}(y)) = 0$. If $D < \sfrac{\gamma}{2} \cdot |S_1|$, then we have that
\[
  A(a,b,f(a,b)) = 0 \quad\text{for all }(a,b) \in S = \setdef{(a,b)}{b \in S_2\;,\; (a,b)\in H}. 
\]
\end{lemma}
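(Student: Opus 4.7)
The plan is to fix each $b \in S_2$ and show that the univariate polynomial
\[
B_b(x) \;:=\; A\bigl(x,\,b,\,P_{\ell_{y=b}}(x)\bigr)
\]
vanishes identically in $x$. Once this is established, the conclusion is immediate: for any $(a,b) \in S$ we have $(a,b) \in H$, so by the defining property of $H$ with respect to the $y$-axis $\ell_2$, we get $P_{\ell_{y=b}}(a) = f(a,b)$, and hence $A(a,b,f(a,b)) = B_b(a) = 0$.

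To show $B_b \equiv 0$, I would first observe that $\deg(B_b) \leq D$: each monomial $x^i y^j z^k$ in the support of $A$ contributes after the substitution $z \mapsto P_{\ell_{y=b}}(x)$ a polynomial of $x$-degree at most $i + kd$, which is bounded by the $(1,1,d)$-weighted degree $D$ of $A$. So it suffices to exhibit more than $D$ roots of $B_b$.

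For this I would use property~\ref{lem:structured-subset-H-in-rows-in-S1} of \cref{lem:structured-grid}, which says that $T_b := \{a \in S_1 : (a,b) \in H\}$ has size at least $(\gamma/2)\cdot|S_1| > D$ (by the hypothesis $D < (\gamma/2)|S_1|$). For every $a \in T_b$, the assumption $A(a, y, P_{\ell_{x=a}}(y)) \equiv 0$ specialized at $y = b$ gives $A(a, b, P_{\ell_{x=a}}(b)) = 0$. On the other hand, since $(a,b) \in H$, the defining property of $H$ applied to both axis directions yields
\[
P_{\ell_{x=a}}(b) \;=\; f(a,b) \;=\; P_{\ell_{y=b}}(a),
\]
and combining these gives $B_b(a) = A(a,b,P_{\ell_{y=b}}(a)) = 0$. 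Thus $B_b$ is a univariate polynomial of degree at most $D$ with more than $D$ zeros, hence identically zero.

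There is no real obstacle here: the argument is a direct ``axis-swap'' that uses the one-directional vanishing of $A$ on lines $\ell_{x=a}$ together with the two-directional agreement guaranteed by membership in $H$ to transfer the vanishing into the perpendicular direction and then combine it across $a \in S_1$.
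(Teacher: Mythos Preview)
Your proof is correct and follows essentially the same approach as the paper: fix $b\in S_2$, consider the univariate polynomial $A(x,b,P_{\ell_{y=b}}(x))$ of degree at most $D$, use the two-directional agreement on $H$ together with the vanishing hypothesis on columns $a\in S_1$ to exhibit more than $D$ roots, and conclude identically zero. The paper's proof is the same argument with slightly different notation (it writes $R_b$ and $Q$ where you write $P_{\ell_{y=b}}$ and $B_b$).
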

\begin{proof}
  Fix a $b \in S_2$ and let $R_b(x) = P_{\ell_{y=b}}$ and $Q(x) = A(x,b,R_b(x))$. Suppose $a \in S_1$ with $(a,b) \in H$, then $R_b(a) = P_{\ell_{x=a}}(b) = f(a,b)$ and hence
  \begin{align*}
    A(a,b,R_b(a)) & = A(a,b,P_{\ell_{x=a}}(a)) = 0.
  \end{align*}
  Since $\deg(Q) \leq \deg_{1,1,d}(A) \leq D$, and $\abs{\setdef{(a,b)}{a\in S_1} \cap H} \geq \sfrac{\gamma}{2} \cdot |S_1| = \sfrac{\gamma r}{2}$, the condition that $D < \sfrac{\gamma r}{2}$ implies that $Q(x)$ is identically zero. Since $Q(x) = A(x,b,R_b(x)) = 0$, for any $a$ such that $(a,b) \in H$, we have that $A(a,b,R_b(a)) = A(a,b,f(a,b)) = 0$ as claimed.
\end{proof}

\subsection{Proof of the interpolation lemma}

We now have the necessary ingredients to prove \cref{lem:interpolation-step}. 

\begin{proof}[Proof of \cref{lem:interpolation-step}]

Let $H$ be the set guaranteed by \cref{claim:AS-good-directions} and $\gamma \in (0, 1]$ such that $H = 2\gamma \cdot q^2 = \Omega(\epsilon^2 \cdot q^2)$; without loss of generality let the two directions guaranteed by \cref{claim:AS-good-directions} be the standard axes. 

Let $r = \sfrac{900 d}{\gamma^2}$ and $D = \sfrac{\gamma r}{3} = \sfrac{300 d}{\gamma}$. From these choice of parameters, we have the following inequalities. 
\begin{itemize}
\item $D > 20d$   
\item $|N_{d, D, p}| \geq D^3/12d = \frac{(300^3 d^2)}{12 \gamma^3}$
\item $r(D+1) \leq 2rD = \frac{6(300)^2 d^2}{\gamma^3}$
\item $\gamma r /2 = \frac{450 d}{\gamma}$
\end{itemize}
Thus, we have that $D \geq 20d$, $|N_{d, D, p}| > r(D+1)$ and $D < \gamma r/2$.

Now, Instantiating \cref{lem:structured-grid} with this choice of $r$, we obtain sets $S_1,S_2$ and let $S$ be defined as 
\[
S = \setdef{(a,b) \in \F^2}{b \in S_2\;,\;(a,b) \in H}\, .
\] 
By construction, $|S| \geq \gamma \cdot q^2 = \Omega(\epsilon^2 \cdot q^2)$. Since $S \subseteq H$ and each $x \in S$ satisfies $\epsilon_x \geq \sfrac{\epsilon}{2}$ (from \cref{claim:AS-good-directions}), we have that $\sum_{x\in S} \epsilon_x = \Omega(\epsilon^3 \cdot q^2)$. 

Since $D^3/12d > r(D+1)$ and $D < \gamma r/2$, we have by \cref{lem:AS-finding-vanishing-polynomial} and \cref{lem:AS-A-vanishes-on-more-points} that there is a non-zero polynomial $\tilde{A}(x,y,z)$ with $\deg_{1,1,d}(\tilde{A}) \leq D$, entirely supported on monomials in the set $N_{d, D, p}$ such that  $\tilde{A}(a,b,f(a,b)) = 0$ for all $(a,b) \in S$. In order to proceed further in the proof, we need the following claim whose proof we defer to the end of this section. 
\begin{claim}\label{clm:A-depends-on-z}
The polynomial $\tilde{A}$ depends on the variable $z$, i.e., there exists a monomial of the form $x^iy^jz^k$ with  non-zero coefficient such that $k$ is non-zero.
\end{claim}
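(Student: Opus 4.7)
The plan is to argue by contradiction: suppose $\tilde{A}$ does not depend on $z$, i.e., every monomial in the support of $\tilde{A}$ has $z$-degree $0$. Then $\tilde{A}$ is actually a nonzero bivariate polynomial in $x,y$ alone. Since its $(1,1,d)$-weighted degree is at most $D$, and the $z$-degree is $0$, the total $(x,y)$-degree of $\tilde{A}$ is at most $D$.

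Next I would observe that the vanishing condition $\tilde{A}(a,b,f(a,b)) = 0$ for all $(a,b) \in S$ simplifies, in this hypothetical scenario, to $\tilde{A}(a,b) = 0$ for all $(a,b) \in S$. This is now a statement purely about bivariate vanishing, and I can invoke the polynomial identity lemma (\cref{lem: SZ}) which bounds the number of zeros of a nonzero bivariate polynomial of total degree at most $D$ over $\F^2$ by $Dq$.

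Combining these with the established bound $|S| \geq \gamma q^2$, I would get the inequality $\gamma q^2 \leq Dq$, i.e., $q \leq D/\gamma$. Plugging in the chosen values $D = 300d/\gamma$ and $\gamma = \Omega(\epsilon^2)$ gives a bound of the form $q = O(d/\epsilon^4)$, which contradicts the hypothesis $q > c_1 d/\epsilon^{c_2}$ of the interpolation lemma for suitably chosen constants $c_1, c_2$.

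The argument is essentially routine once we unpack what ``does not depend on $z$'' means, and the main (but mild) care needed is just to check that the quantitative relationship between $q$, $D$, and $\gamma$ afforded by our choice of parameters indeed forces a contradiction via Schwartz--Zippel. There is no real obstacle: the nontrivial conditions such as the non-vanishing of $\hpartial_z(\tilde{A})$ and $\disc_z(\tilde{A})$ (needed elsewhere in \cref{lem:interpolation-step}) are addressed separately; this particular claim requires only the polynomial identity lemma applied to the bivariate reduction of $\tilde{A}$.
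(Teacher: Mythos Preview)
Your proposal is correct and follows essentially the same approach as the paper's proof: both argue by contradiction, reduce $\tilde{A}$ to a nonzero bivariate polynomial of total degree at most $D$, apply the polynomial identity lemma (\cref{lem: SZ}) to bound its zeros by $Dq$, and then compare against $|S| = \Omega(\epsilon^2 q^2)$ to contradict the field-size hypothesis $q > c_1 d/\epsilon^{c_2}$.
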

Since $\tilde{A}$ \emph{depends} on $z$ and is only supported on monomials where the $z$ degree is either zero or is not a multiple of $p$, we have from the first item of \cref{prop:hasse-derivatives-properties} and the linearity of Hasse derivatives that $\hpartial_z(\tilde{A})$ must be non-zero. We take ${A}(x, y, z)$ to be a non-zero polynomial of the minimum $(1, 1, d)$-weighted degree such that it vanishes on $(a, b, f(a,b))$ for all $(a,b) \in S$ and $\hpartial_z({A})$ is non-zero. We know that the polynomial $\tilde{A}$ interpolated above is one such polynomial. Therefore, $\deg_{(1,1,d)}({A}) \leq \deg_{(1,1,d)}(\tilde{A}) \leq D$. The polynomial ${A}$ thus obtained satisfies the hypothesis of \cref{lem:minimal-vanishing-polynomial-is-squarefree}, and therefore, by \cref{lem:minimal-vanishing-polynomial-is-squarefree}, we get that $\disc_z({A})$ must be non-zero.

This proves all the properties of $A$ and $S$ claimed by \cref{lem:interpolation-step}.
\end{proof}
To complete the proof of \cref{lem:interpolation-step}, we now prove \cref{clm:A-depends-on-z}.

\subsection*{Proof of \cref{clm:A-depends-on-z}}
\begin{proof}[Proof of \cref{clm:A-depends-on-z}]
If $\tilde{A}$ does not depend on $z$ at all, then it is a non-zero bivariate polynomial of $(1,1)$-weighted degree, i.e., total degree at most $D$, and it vanishes at all points $(a,b)$ in the set $S \subseteq \F \times \F$. But, from \cref{lem: SZ}, we have that the number of zeroes of $\tilde{A}$ on $\F \times \F$ can be at most $Dq$, which by the choice of $D$ is at most $300/\gamma \cdot dq \leq O\left(\frac{dq}{\epsilon^2}\right)$. We also have that $|S| > \Omega(\epsilon^2 q^2)$. Thus, for any sufficiently large constants $c_1, c_2$, we have that if $q > c_1\cdot d/\epsilon^{c_2}$, then, $|S|$ exceeds  $O(dq/\epsilon^2)$, thereby implying that $\tilde{A}$ must be identically zero, which is a contradiction. Thus, $\tilde{A}$ must depend on $z$.
\end{proof}

\subsection{Proofs of \cref*{clm:NdD-lower-bound}, \cref*{clm:NdDp-lower-bound} and \cref*{clm:NdDp-lowerbound-final}}\label{sec:proofs-of-estimates-on-Ndd-etc}

\begin{proof}[Proof of \cref{clm:NdD-lower-bound}]
From the definition, we have that 
\begin{align*}
|N_{d, D}| &\geq \sum_{k = 0}^{\lfloor D/d \rfloor} \sum_{i = 0}^{(D-kd)} \sum_{j = 0}^{(D-kd-i)} 1 \, ,\\
 &\geq \sum_{k = 0}^{\lfloor D/d \rfloor} (D-kd)(D-kd + 1)/2\, , \\
 &\geq 1/2 \cdot \sum_{k = 0}^{\lfloor D/d \rfloor} (D-kd)^2 \, ,\\
 &\geq 1/2 \cdot \sum_{k = 0}^{\lfloor D/d \rfloor} (D^2 + k^2d^2 - 2Ddkd) , ,\\
 &\geq 1/2 \cdot \left( (\sum_{k = 0}^{\lfloor D/d \rfloor} D^2) + (\sum_{k = 0}^{\lfloor D/d \rfloor} k^2d^2) - (\sum_{k = 0}^{\lfloor D/d \rfloor} 2Ddk)\right)  
\end{align*}
For ease of notation, let $\tilde{D}$ denote ${\lfloor D/d \rfloor}$. So, we have $(D/d - 1) \leq \tilde{D} \leq (D/d)$. Now, from standard estimates on sums and sums or squares of the first $n$ natural numbers, we have the following. 
\[
\sum_{k = 0}^{\tilde{D}} D^2 \geq D^2\cdot (\tilde{D} + 1) \in [D^3/d, D^3/d + D^2] \, .
\]
Similarly,
\[
\sum_{k = 0}^{\tilde{D}} k^2d^2 = d^2 \cdot \left(\sum_{k = 0}^{\tilde{D}} k^2\right) \geq d^2 \cdot (\tilde{D})(\tilde{D} + 1)(2\tilde{D} + 1)/6 \, . 
\]
Plugging in the bounds for $\tilde{D}$, we get that  \[
\sum_{k = 0}^{\tilde{D}} k^2d^2  \in [1/6(2D^3/d - 3D^2 + Dd), 1/6(2D^3/d + 3D^2 + Dd)] \, .
\]
Finally, 
\[
\sum_{k = 0}^{\tilde{D}} 2Ddk = 2Dd \cdot \tilde{D}(\tilde{D}+1)/2 \, , 
\]
which, again using the upper and lower bounds on $\tilde{D}$ satisfies 
\[
\sum_{k = 0}^{\tilde{D}} 2Ddk \in [D^3/d - D^2, D^3/d + D^2] \, .
\]
Putting the estimates together, we get the claim. 
\end{proof}

\begin{proof}[Proof of \cref{clm:NdDp-lower-bound}]
For any $d, D \in \N$, and $k \in \Z_{\geq 0}$, let $A_k$ be the set defined as 
\[
A_{k} = \setdef{(i,j) \in \Z_{\geq 0}^2}{i + j + dk \leq D} \, .
\]
Clearly, the size of $A_k$ is a non-increasing function of $k$. Thus, for every integer $\ell \geq 1$ and for every $j \in \{1, 2, \ldots, p-1\}$ we have that $|A_{\ell p }| < |A_{(\ell-1)p + j}|$, thereby implying that 
\[
|A_{\ell p }| < \frac{1}{p}\left(\sum_{j = 1}^{p} |A_{(\ell-1)p + j}| \right) \, .
\] 
Summing over $\ell$, we get that 
\[
\sum_{\ell = 1}^{\lfloor D/d \rfloor}|A_{\ell p }| < \frac{1}{p} \left(\sum_{\ell = 1}^{\lfloor D/d \rfloor} \sum_{j = 1}^{p} |A_{(\ell-1)p + j}| \right)\, .
\] 
We now note from the definitions of the sets $N_{d, D}$ and $A_k$ that $|N_{d, D}| = \sum_{k = 0}^{\lfloor D/d \rfloor} |A_k|$ and, thus, we have that
\[
\sum_{\ell = 1}^{\lfloor D/d \rfloor}|A_{\ell p }| < \frac{1}{p} |N_{d, D}| \, .
\]
Moreover, the size of the set $N_{d, D, p}$ satisfies 
\[
|N_{d, D, p}| \geq |N_{d, D}| - \sum_{\ell = 1}^{\lfloor D/d \rfloor}|A_{\ell p }|  \geq \left(1-\frac{1}{p}\right)|N_{d, D}| \, , 
\]
which, for $p \geq 2$ gives $|N_{d, D, p}| \geq \frac{1}{2}|N_{d, D}|$. 
\end{proof}

\begin{proof}[Proof of \cref{clm:NdDp-lowerbound-final}]
From the lower bound on $|N_{d, D}|$ in \cref{clm:NdD-lower-bound}, we have that $|N_{d, D}| \geq D^3/3d - 5/2D^2$. Now, if $D/d > 20$, then  $|N_{d, D}| \geq D^2(D/3d - 5/2) \geq D^3/6d$. Combining this with the bound in \cref{clm:NdDp-lower-bound} completes the proof. 
\end{proof}

\section{Variations on Low-degree Testing}\label{sec:LDT-variants}

In this section we show the equivalence of some standard variations of low-degree testing. These results are essentially folklore - our proofs follow closely the proofs from Arora and Sudan~\cite[Section 2.2]{AroraS2003}, confirming along the way that the proofs continue to hold even when the field size is only linear in the degree.

For a function $f \colon \F^m \to \F$ and line $\ell$ in $\F^m$, let $P^{(f,d)}_\ell$ be the best fit degree $d$ univariate polynomial for $f$ on the line $\ell$. We now, define the following quantities for any function $f\colon \F^m \to \F$, line $\ell$ and plane $\pi$
\begin{align*}
    \delta_f(\ell) &:= \Pr_{x \in \ell}[P^{(f,d)}_\ell(x) \neq f(x)], \\
    \delta_f(\pi)  & := \E_{\ell \in \pi}[\delta_f(\ell)] = \Pr_{\substack{\ell \in \pi\\x \in \ell}}[P^{(f,d)}_\ell(x) \neq f(x)],\\
    \delta_f &:= \E_\pi[\delta_f(\pi)] = \Pr_{\substack{\ell \in \F^m\\x \in \ell}}[P^{(f,d)}_\ell(x) \neq f(x)].
\end{align*}

As mentioned in \cref{rem:lines-table-version}, the general low-degree test is provided two oracles $f:\F_q^m \to \F_q$ and $P: \calL^{(m)} \to \F_q^{\leq d}[t]$, the first mapping points in $\F_q^m$ to values in $\F_q$ and the latter mapping lines in $\F_q^m$ to (a table of evaluations of) a degree $\leq d$ univariate polynomial. The probability that the low-degree test accepts is given by $\Pr_{x, \ell \ni x}[f(x) = P[\ell](x)]$.  

We say $x$ is $\beta$-good for $(f,P)$ if $\Pr_{\ell \ni x}[P[\ell](x) = f(x)] \geq \beta$. Note that if the low-degree test accepts $(f,P)$ with probability $\beta$ then at least $(\beta/2)$ fraction of the points are $(\beta/2)$-good for $(f,P)$. Conversely, if the low-degree test accepts $(f,P)$ with probability at most $\beta = \beta_1\beta_2$ then at most $\beta_1$ fraction of the points are $\beta_2$-good for $(f,P)$ (for any choice of $\beta_1,\beta_2$ satisfying $\beta = \beta_1\beta_2$). 

\begin{definition}[Weak form of LDT] \label{def:weak-ldt}
    Given field $\F_q$, integer parameters $m \geq 1$, $d \geq 0$, real $\beta > 0$ and function $h:\R^{>0} \to \R^{>0}$, we say that {\em weak low-degree testing} holds for
    $(\F_q,m,d,\beta_0,h)$ if the following is true:
    \begin{quote}
        For all $f: \F_q^m \to \F_q$ and $P: \calL^{(m)} \to \F_q[t]^{\leq d}$ and all $\beta > \beta_0$, if $(f,P)$ pass the low-degree test with probability at least $\beta$, then there exists some degree $d$ polynomial $Q$ such that $\agree(f,Q) \geq h(\beta)$.
    \end{quote}
\end{definition}

Note that \cref{thm:LDT-low-agreement-poly-agreement} asserts that there are universal constants $C, \alpha$ such that weak low-degree testing holds for $(\F_q,2,d,\beta_0,h)$ for $h:\beta \mapsto \alpha \beta^4$ provided $q \geq Cd/\beta_0^7$. 

\medskip

There are two incomparable ways to strengthen a weak low-degree test and we first define the \emph{list-decoding} variant.

\begin{definition}[List-decoding form of the LDT]
    \label{def:list-decoding-ldt}
    Given a field $\F_q$, integer parameters $m \geq 1$, $d \geq 0$, real $\beta_0 > 0$ and function $h_1:\R^{>0} \to \R^{>0}$, we say that {\em list-decoding low-degree testing} holds for
    $(\F_q,m,d,\beta_0,h_1)$ if the following is true:
    \begin{quote}
        For all $\beta > \beta_0$, $f:\F_q^m \to \F_q$ and $P: \mathcal{L}^{(m)}\rightarrow \F_d[t]^{\leq d}$, there is a (possibly empty) list of at most $C \leq \frac{2}{h_1(\beta)}$ $m$-variate degree $d$ polynomials $Q_1,\ldots, Q_C$ such that $\agree(f,Q_i) \geq h_1(\beta)$ for each $i$, and 
        \[
            \Pr_{x\in \F_q^m}\insquare{\text{$x$ is $\beta$-good for $(f,P)$ and $f(x)\notin \set{Q_1(x), \ldots, Q_C(x)}$}} \leq \beta_0.
        \]
    \end{quote}
    The above must hold regardless of the LDT acceptance probability of $(f,P)$. Indeed, if $(f,P)$ passes the LDT with probability less than $\beta \cdot \beta_0$, then the list of polynomial may be empty as there can be at most $\beta_0$ fraction of $\beta$-good points for $(f,P)$. 
\end{definition}

\noindent Another strengthening of the weak LDT is in terms of the agreement probability. 

\begin{definition}[High-agreement form of the low-degree test]
    \label{def:high-agreement-ldt}
Given a finite field $\F_q$, integer parameter $m \geq 1$, $d \geq 0$, real $\beta_0 > 0$ and a function $h_2:\R^{>0} \to \R^{>0}$, we say that $(\F_q, m, d, \beta_0, h_2)$ if the following is true:
\begin{quote}
    For every $\beta > \beta_0$, $f:\F_q^m \rightarrow \F_q$ and $P:\mathcal{L}^{(m)} \rightarrow \F_q[t]^{\leq d}$ such that $(f,L)$ passes the low-degree test with probability at least $\beta$, then there exists an $m$-variate degree $d$ polynomial $Q$ such that $\agree(f,Q) \geq \beta - h_2(\beta_0)$.
\end{quote}
\end{definition}

In our applications we will assume $h,h_1,h_2$ are monotone non-decreasing functions. We will see that a weak low-degree test with $h(\beta) \to 0$ as $\beta \to 0$ imply both the list-decoding variant and the high-agreement variant for some appropriate functions $h_1, h_2$ that also satisfy $h_1(\beta), h_2(\beta) \to 0$ as $\beta\to 0$. 

In our applications we will assume $h,h_1,h_2$ are monotone non-decreasing functions. Further we assume $h(\beta) \to 0$ as $\beta \to 0$ and show that $h_1(\beta) \to 0$ and $h_2(\beta) \to 0$ as $\beta \to 0$. 
Note that the two implications above are incomparable and do not directly imply each other. However as we will see in the proof, the implication in (1) is useful to prove (2). 

\begin{lemma}[Weak LDT implies list-decoding LDT]
    \label{lem:weak-LDT-implies-list-decoding-LDT}
    Suppose $\F_q, m, d, \beta_0$ and $h(\cdot)$ are such that  weak low-degree testing holds for $(\F_q, m, d, \beta_0, h)$. Then list-decoding low-degree testing holds for $(\F_q, m, d, \beta_0', h_1)$ for 
    \begin{itemize}\itemsep0pt
        \item $\beta_0'$ satisfying $\beta_0' > \sqrt{\beta_0}$ and $h(\beta_0'^2) \geq \max\inparen{\frac{e}{\sqrt{4q}, \frac{2(d+1)}{q},\sqrt{\frac{d}{q}}}}$,
        \item $h_1:\beta \mapsto \sfrac{1}{2} \cdot h(\beta \cdot \beta_0)$.
    \end{itemize}
\end{lemma}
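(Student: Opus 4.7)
The plan is to construct the list iteratively. Starting from $L_0 = \emptyset$, at each step I either (i) terminate when the bad set
\[
B_{L_k} := \{x \in \F_q^m : x \text{ is $\beta$-good for }(f,P)\text{ and }f(x) \notin \{Q(x) : Q \in L_k\}\}
\]
has density at most $\beta_0'$, or (ii) apply the weak LDT hypothesis to a residual instance to extract a new polynomial $Q_{k+1}$, which I adjoin to the list. Termination after $C \leq 2/h_1(\beta)$ steps is enforced by \cref{thm:johnson}, since every polynomial placed in the list will satisfy $\agree(f,\cdot) \geq h_1(\beta) \geq 2\sqrt{d/q}$, the latter inequality being exactly one of the conditions imposed on $h(\beta_0'^2)$ in the hypothesis.

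The residual instance is built in two steps. First, I define a truncated lines oracle $P_{L_k}$ that equals $P[\ell]$ whenever $P[\ell] \notin \{Q_i|_\ell\}_{i \leq k}$ and is set to $\bot$ otherwise. The key observation is that for any $x \in B_{L_k}$ and line $\ell \ni x$ with $P[\ell](x) = f(x)$, the value $f(x)$ avoids every $Q_i(x)$, forcing $P[\ell] \neq Q_i|_\ell$ and hence preserving $\ell$ in $P_{L_k}$; consequently the LDT on $(f, P_{L_k})$ accepts with probability at least $|B_{L_k}|/q^m \cdot \beta \geq \beta_0'\cdot\beta \geq \beta_0'^2 > \beta_0$, where the chain uses $\beta > \beta_0' > \sqrt{\beta_0}$. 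Second, to guarantee novelty of the extracted polynomial, I replace $f$ by $f_{L_k}$, obtained by resampling $f_{L_k}(x)$ uniformly from $\F_q$ at each $x \in T := \{x : f(x) \in \{Q_i(x)\}_{i\leq k}\}$ and leaving $f$ unchanged elsewhere. Since $T$ and $B_{L_k}$ are disjoint, $f_{L_k}$ agrees with $f$ on $B_{L_k}$, so the LDT acceptance of $(f_{L_k}, P_{L_k})$ remains at least $\beta_0'^2 > \beta_0$, and the weak LDT hypothesis then yields $Q_{k+1}$ with $\agree(f_{L_k}, Q_{k+1}) \geq h(\beta \beta_0') \geq 2 h_1(\beta)$, using the monotonicity of $h$ and the inequality $\beta_0' \geq \beta_0$.

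The principal obstacle is transferring this guarantee from $f_{L_k}$ back to $f$ and certifying $Q_{k+1} \notin L_k$. For any polynomial $Q$, a direct computation over the random resampling gives
\[
\E\bigl[\agree(f_{L_k}, Q)\bigr] \;=\; \Pr_x\bigl[x \notin T \text{ and } f(x)=Q(x)\bigr] \;+\; \frac{|T|}{q^{m+1}}.
\]
For $Q_i \in L_k$ the first term vanishes, since $f(x) = Q_i(x)$ puts $x$ into $T$, giving $\E[\agree(f_{L_k}, Q_i)] \leq 1/q$; for $Q_{k+1}$ the first term is at most $\agree(f, Q_{k+1})$. Combining Chernoff concentration of $\agree(f_{L_k}, \cdot)$ about its expectation with a union bound over all degree-$d$ $m$-variate polynomials, both of which are supported by the field-size hypotheses $h(\beta_0'^2) \geq e/\sqrt{4q}$ and $h(\beta_0'^2) \geq 2(d+1)/q$, one obtains a single realization of $f_{L_k}$ for which $\agree(f_{L_k}, Q_i) \leq 2/q$ simultaneously for every $Q_i \in L_k$ and $\agree(f, Q_{k+1}) \geq \agree(f_{L_k}, Q_{k+1}) - 1/q - o(1) \geq 2h_1(\beta) - o(1) \geq h_1(\beta)$. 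The factor $1/2$ in the definition $h_1(\beta) = h(\beta\beta_0)/2$ is precisely what absorbs this transfer slack, establishing both novelty (since $2h_1(\beta) \gg 2/q$) and the required agreement, and thereby completing the inductive step.
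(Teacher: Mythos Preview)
Your approach is correct but takes a genuinely different route from the paper's.

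The paper's argument is \emph{one-shot}: it takes the full list $Q_1,\ldots,Q_C$ of all degree-$d$ polynomials with $\agree(f,Q_i)\geq \eta := \tfrac12 h(\beta\beta_0')$ up front (bounded in size by Johnson), resamples $f$ uniformly on $\bigcup_i S_i$ to obtain $g$, and then proves a single claim (via the same counting-plus-union-bound over all degree-$d$ polynomials that you invoke) that with positive probability $\agree(g,Q)<2\eta$ for \emph{every} $Q$. It then applies the \emph{contrapositive} of the weak LDT to conclude that $(g,P)$ passes with probability at most $\beta\beta_0'$, and reads off the conclusion since $g$ and $f$ agree outside $\bigcup_i S_i$. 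No truncation of the lines oracle, no iteration, no novelty check.

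Your iterative extraction---repeatedly resampling and applying the weak LDT in the forward direction---is a valid alternative and has a more algorithmic flavor, but it carries some overhead. Two minor points are worth flagging. First, the truncation $P_{L_k}$ is unnecessary: the LDT acceptance lower bound coming from $B_{L_k}$ already survives under $(f_{L_k},P)$ because $B_{L_k}\subseteq T^c$. Second, your quantitative slacks need a bit more care. After a union bound over all $q^{(d+1)^m}$ polynomials, the bound you actually get on $\abs{\{x\in T: f_{L_k}(x)=Q(x)\}}/q^m$ is of order $h_1(\beta)$, not $1/q+o(1)$; the claim $\agree(f_{L_k},Q_i)\leq 2/q$ does hold but only via a separate (much smaller) union bound over $L_k$. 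Fortunately the factor $\tfrac12$ in the definition of $h_1$ absorbs the former slack, so your chain $\agree(f,Q_{k+1})\geq 2h_1(\beta)-h_1(\beta)=h_1(\beta)$ still closes, and novelty follows from $2h_1(\beta)>h_1(\beta)$ rather than from the sharper $2/q$ bound.
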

\begin{proof}
    Fix any $\beta > \beta_0'$. Let $Q_1,\ldots, Q_C$ be the set of all $m$-variate degree $d$ polynomials that have agreement at least $\eta = \sfrac{1}{2}\cdot h(\beta \cdot \beta_0') > \sfrac{1}{2}\cdot h(\beta_0'^2)$ with $f$ on $\F_q^m$. By the Johnson bound, we have that $C \leq \sfrac{2}{\eta}$ since $\eta \geq 2\sqrt{\sfrac{d}{q}}$ by the choice of $\beta_0'$. 

    Consider the following randomly chosen function $g:\F_q^m \to \F_q$ given by $g(x) = f(x)$ if $x \not\in \cup_i S_i$ and $g(x) \sim \text{Unif}(\F_q)$ if $x \in \cup_i S_i$. \cref{clm:random-g} below asserts that with positive probability we have that $\agree(g,Q) < 2\eta$ for every degree $d$ polynomial $Q$. Fix a $g$ such that this holds. By the weak low-degree test applied to $g$ (in contrapositive form) it follows that $(g,P)$ pass the low-degree test with probability at most $\beta \cdot \beta_0'$ since $\beta \cdot \beta_0' > \beta_0'^2 \geq \beta_0$ and $h(\beta \cdot \beta_0') = 2\eta$. Therefore, $\Pr_{x\in \F_q^m}[\text{$x$ is $\beta$-good for $(g,L)$}] < \beta_0'$. But now note that $x \notin \cup_i S_i$ satisfies $f(x) = g(x)$ and so such an $x$ is $\tau$-good for $(g,P)$ iff it is $\beta$-good for $(f,P)$. We conclude that 
    \begin{align*}
       \hspace{2em}&\hspace{-2em}\Pr_{x\in \F_q^m}[\text{$x$ is $\beta$-good for $(f,L)$ and $f(x) \notin \{P_1(x),\ldots,P_C(x)\}$ }] \\
       & = \Pr_{x\in \F_q^m}[\text{$x$ is $\beta$-good for $(g,L)$ and $f(x) \notin \{P_1(x),\ldots,P_C(x)\}$ }] \\
       & = \Pr_{x\in \F_q^m}[\text{$x$ is $\beta$-good for $(g,L)$}] \leq \beta_0'.
    \end{align*}

    \noindent
    To complete the proof, we only need to prove the following claim. 

    \vspace{-2em}

    \begin{quote}
        \begin{claim}\label{clm:random-g}
            If $\eta \geq \max\{\frac{e}{\sqrt{q}},\frac{4(d+1)}{q}\}$ then $$\Pr[ \exists Q:\F_q^m\to\F_q\text{ , }\deg(Q) \leq d\text{ , }\agree(g,Q) \geq 2\eta ] \leq q^{-(\eta/4)q^m} < 1.$$
        \end{claim}

        \begin{proof}
            Fix a degree $d$ polynomial $Q$. If $Q \notin \{Q_1,\ldots,Q_C\}$ then $\agree(Q,f) < \eta$ and if $Q \in \{Q_1,\ldots,Q_C\}$ we have for every 
            $x \notin \cup_i S_i$, $Q(x) \ne f(x) = g(x)$. Thus in either case we have $|\{x \notin \cup_i S_i | Q(x) = g(x) = f(x) \}| \leq \eta q^m$. Thus to have agreement at least $2\eta$ with $Q$, $g$ must satisfy $|\{x \in \cup_i S_i | P(x) = g(x) \}| \geq \eta q^m$. We show below that the probability that this happens is at most $q^{-(\eta/2)q^m}$. 

            For a fixed set $S \subseteq \cup_i S_i$ of size $\eta q^m$, the probability that $g$ and $Q$ agree on the set is $q^{-\eta q^m}$. The number of sets $S$ is at most $\binom{q^m}{\eta q^m} \leq (e/\eta)^{\eta q^m} \leq q^{(\eta/2) q^m}$ (using $\eta \geq e/\sqrt{q}$). We conclude that the probability that there exists a set $S \subseteq \cup_i S_i$ of size at least $\eta q^m$ such that $g$ and $Q$ agree on $S$ is at most $q^{-(\eta/2)q^m}$. 

            Now to conclude the proof we take a union bound over all $Q$'s. The number $m$ variate monomials of degree at most $d$ is clearly at most $(d+1)^m$ and so the number of polynomials is at most $q^{(d+1)^m} \leq q^{((\eta/4) q)^m} \leq q^{(\eta/4) q^m}$ (where the first inequality uses $\eta \geq 4(d+1)/q$ and the second uses $m \geq 1$). We conclude that the probability that there exists $Q$ of degree at most $d$ such that 
            $|\{x \in \cup_i S_i | Q(x) = g(x) \}| > \eta q^m$ is at most $q^{-(\eta/4)q^m}$. The claim follows.
        \end{proof}
    \end{quote}
    \noindent
    This completes the proof of \cref{lem:weak-LDT-implies-list-decoding-LDT}. 
\end{proof}

\begin{lemma}[Weak LDT implies high-agreement LDT]
    \label{lem:weak-LDT-implies-high-agreement-LDT}
    Suppose $\F_q, m, d, \beta_0$ and $h(\cdot)$ are such that  weak low-degree testing holds for $(\F_q, m, d, \beta_0, h)$. Then, high-agreement low-degree testing holds for $(\F_q, m, d, \beta_0'', h_2)$ for 
    \begin{itemize}\itemsep0pt
        \item $\beta_0''$ satisfying $\beta_0'' > \beta_0'$ and $\beta_0''^3 \cdot h_1(\beta_0''^2) > \sfrac{2}{q}$ and $\beta_0'' \cdot h_1(\beta_0''^2) \geq \sfrac{2d}{q}$,
        \item $h_2:\beta \mapsto 3\beta$,
    \end{itemize}
    where $\beta_0'$ and $h_1(\cdot)$ are as implied by \cref{lem:weak-LDT-implies-list-decoding-LDT}.

    In other words, if $(f,P)$ passes the LDT with probability $\beta > \beta_0''$, then there is some $m$-variate degree $d$ polynomial $Q$ such that $\agree(f,Q) \geq \beta - 3\beta_0''$. 
\end{lemma}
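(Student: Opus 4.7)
The plan is to reduce to the list-decoding form (\cref{lem:weak-LDT-implies-list-decoding-LDT}) and then exploit the best-fit structure of the line oracle together with a concentration-of-measure argument to extract a single polynomial of agreement nearly $\beta$.

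First, without loss of generality $P[\ell]$ is a best-fit degree-$d$ univariate polynomial on each line $\ell$, since replacing $P$ by any best-fit lines oracle can only increase the LDT acceptance probability. Next, set $\tau := \beta_0''^{2}$, which by the stated constraint $\beta_0'' > \beta_0'$ (and the remaining parameter assumptions) exceeds $\beta_0'$. Invoking \cref{lem:weak-LDT-implies-list-decoding-LDT} at level $\tau$ produces a list $Q_1, \ldots, Q_C$ of degree-$d$ polynomials with $C \leq 2/h_1(\tau)$, each satisfying $\agree(f, Q_i) \geq h_1(\tau)$, and
\[
\Pr_x[\text{$x$ is $\tau$-good but } f(x) \notin \{Q_1(x), \ldots, Q_C(x)\}] \leq \beta_0'.
\]

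Partition the lines into case~A (those $\ell$ with $P[\ell] = Q_i|_\ell$ for some $i$) and case~B (the rest). On a case~B line, any point $x$ with $P[\ell](x) = f(x)$ either (i)~is not $\tau$-good, contributing at most $\tau$ in total over all $(x,\ell)$ via Markov applied to $g_x := \Pr_{\ell \ni x}[P[\ell](x) = f(x)]$; (ii)~is $\tau$-good but uncovered by the list, contributing at most $\beta_0'$; or (iii)~satisfies $P[\ell](x) = Q_i(x)$ for some $i$ while $P[\ell] \neq Q_i|_\ell$ as univariate polynomials, a coincidence that happens at most $Cd$ times per line by Schwartz--Zippel applied to the non-zero polynomial $P[\ell] - Q_i|_\ell$. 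The parameter condition $\beta_0'' \cdot h_1(\beta_0''^{2}) \geq 2d/q$ ensures the total case~B contribution to $\beta$ is $O(\beta_0'')$. On a case~A line, the best-fit property forces $P[\ell] = Q_{i^*(\ell)}|_\ell$ for $i^*(\ell) \in \arg\max_i \agree(f|_\ell, Q_i|_\ell)$, so the case~A contribution is bounded by $\E_\ell[\max_i X_i(\ell)]$, writing $X_i(\ell) := \agree(f|_\ell, Q_i|_\ell)$.

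Finally, $\E_\ell[X_i] = \agree(f, Q_i) =: \alpha_i$ and $\Var_\ell[X_i] = O(1/q)$ since two distinct points on a random line are close to pairwise uniform. Chebyshev's inequality and a union bound over $i \in [C]$ imply $\E_\ell[\max_i X_i(\ell)] \leq \max_i \alpha_i + O((C/q)^{1/3})$, and the condition $\beta_0''^{3} h_1(\beta_0''^{2}) > 2/q$ absorbs this gap into $O(\beta_0'')$. Combining, $\beta \leq \max_i \agree(f, Q_i) + O(\beta_0'')$, so setting $Q := Q_{i^\star}$ with $i^\star \in \arg\max_i \agree(f, Q_i)$ gives $\agree(f, Q) \geq \beta - 3\beta_0''$ as desired. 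The hardest step is the concentration-plus-max argument in the final paragraph: the precise parameter conditions on $\beta_0''$ in the statement are calibrated exactly to collapse both the case~B error and the $\E[\max] - \max \E$ gap into $O(\beta_0'')$.
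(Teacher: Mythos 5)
The proposal is essentially correct and follows the same approach as the paper: apply the list-decoding LDT, partition lines by whether $P[\ell]$ equals some $Q_i|_\ell$ (your case~A/case~B versus the paper's ``standard''/``non-standard''), bound the case-A contribution by $\E_\ell[\max_i X_i]$ via Chebyshev plus a union bound over the list, and bound the case-B contribution by splitting points into not-$\tau$-good, $\tau$-good-but-uncovered (absorbed by the list-decoding guarantee), and coincidental (counted by the root count of $P[\ell]-Q_i|_\ell$). The one cosmetic difference — invoking list-decoding at threshold $\beta_0''^2$ rather than $\beta_0''$ — is actually more consistent with the parameter conditions stated in the lemma (which involve $h_1(\beta_0''^2)$), and the ``best-fit oracle'' WLOG, while harmless, is not needed: on a case-A line $\agree(f|_\ell, P[\ell]) = X_j(\ell) \le \max_i X_i(\ell)$ holds regardless of whether $P$ is best-fit. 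The only quantitative slack is that your accounting, done carefully (using Chebyshev with threshold $t=\beta_0''$ rather than the $(C/q)^{1/3}$-optimization you sketch, which is looser), gives a constant a bit larger than $3$ in $\beta - O(\beta_0'')$; but the constant in this lemma is not delicate and does not affect its downstream use.
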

\begin{proof}
    By applying \cref{lem:weak-LDT-implies-list-decoding-LDT} (with $\beta = \beta_0'' \geq \beta_0'$) we have that for any pair $(f,P)$, there exist at most $C = 2/h_1(\beta_0'')$ polynomials $Q_1,\ldots, Q_C$, each with agreement at least $h_1(\beta_0'')$ with $f$ such that
    \[
        \Pr_{x\in \F_q^m}\insquare{\text{$x$ is $\beta_0''$-good for $(f,P)$ and $f(x) \notin \set{Q_1(x),\ldots, Q_C(x)}$}} \leq \beta_0'.
    \]
    Let $S_i = \setdef{x}{Q_i(x)=f(x)}$ and let $\eta_i = q^{-m}|S_i|$. Assume w.l.o.g. that $\eta_1 \geq \cdots \geq \eta_C$. Note that we are not guaranteed that $C \geq 1$ and the above list of polynomials be empty. However, we wish to show that if $(f,P)$ passes the LDT with probability $\beta \geq \beta_0''$, then $C \geq 1$ and $\eta_1 \geq \beta - \gamma$, where $\gamma = 3\beta_0''$. We will do so by proving that the acceptance probability of the low-degree test on $(f,P)$ is upper bounded by $\eta_1 + \gamma$. 

    Define a line $\ell$ to be {\em standard} if $P[\ell] \in \{Q_1|_\ell,\ldots,Q_C|_\ell\}$. Say the $\ell$ is {\em abnormal} if there exists $i \in [C]$ such that $\abs{\setdef{x \in \ell}{Q_i(x) = f(x)}} \geq (\eta_1 + \beta_0'')\cdot q$. For a non-standard line $\ell$ say that a point $x\in \ell$ is {\em coincidental} for $\ell$ if there exists $i \in [C]$ s.t. $Q_i(x) = P[\ell](x)$. Finally say that a pair $(x,\ell)$ with $x \in \ell$ is {\em unexplained} if $f(x) = P[\ell](x)$ and $f(x) \not\in \{Q_1(x),\ldots,Q_C(x)\}$. 

    We now upper bound the probability that the low-degree test accepts a random pair $(x,\ell)$ by consider various cases. We first note that for the low-degree test to accept a pair $(x,\ell)$ at least one of the following must happen: 
    \begin{enumerate}\itemsep0pt
    \item\label{item:weak-to-strong:standard-normal} $\ell$ is standard and normal and the low-degree test accepts, or
    \item\label{item:weak-to-strong:abnormal} $\ell$ is abnormal, or
    \item\label{item:weak-to-strong:coindidental} $\ell$is non-standard and $x$ is coincidental for $\ell$, or
    \item\label{item:weak-to-strong:unexplained} $(x,\ell)$ is unexplained.
    \end{enumerate}
    The typical case is \cref{item:weak-to-strong:standard-normal} where $\ell$ is standard and normal. In this case the probability over $x$ that the low-degree test accepts the pair $(x,\ell)$ is at most $\eta_1 + \beta_0''$. 

    For \cref{item:weak-to-strong:abnormal}, the probability that a random line $\ell$ is abnormal is upper bounded by $\gamma_1 := C/(\gamma^2 q)$ by a Chebychev argument (for fixed $i\in C$ the expected fraction of agreement is $\eta_i$ and a random line contains $q$ pairwise independent random samples of points from $\F_q^m$). Since $\beta_0''$ satisfies $\beta_0''^3 \cdot h_1(\beta_0''^2) > \sfrac{2}{q}$, we this probability is bounded by $\beta_0''$. 

    For \cref{item:weak-to-strong:coindidental}, the probability that a point $x$ on a non-standard line $\ell$ is coincidental for the line is at most $Cd/q$ (for every $i\in [C]$ there are at most $d$ points where $Q_i(x) = P[\ell](x)$). Since $\beta_0''$ satisfies $\beta_0'' \cdot h_1(\beta_0''^2) > \sfrac{2d}{q}$, we this probability is bounded by $\beta_0''$ as well.

    And finally for \cref{item:weak-to-strong:unexplained}, the probability that a pair $(x,\ell)$ is unexplained is, by the list-decoding version of the LDT, at most $\beta_0' \leq \beta_0''$. 

    We thus conclude that $(f,P)$ passes the low-degree test accepts with probability at most $\eta_1+3\beta_0''$.
\end{proof}

\end{document}